\documentclass{article}

\PassOptionsToPackage{numbers,sort&compress}{natbib}

\usepackage[preprint]{neurips_2026}

\usepackage[utf8]{inputenc} 
\usepackage[T1]{fontenc}    
\usepackage{hyperref}       
\usepackage{url}            
\usepackage{booktabs}       
\usepackage{amsfonts}       
\usepackage{nicefrac}       
\usepackage{microtype}      
\usepackage{xcolor}         
\usepackage{amssymb}
\usepackage{mathtools}
\usepackage{amsthm}
\usepackage{subfigure}
\usepackage{multirow}
\usepackage{caption}
\usepackage{graphicx}
\usepackage{amsmath}
\usepackage{cleveref}
\usepackage[table]{xcolor}
\usepackage{makecell}
\usepackage{dsfont}
\usepackage{enumitem}
\usepackage{placeins}

\definecolor{pastelyellow}{RGB}{255, 246, 173}
\definecolor{pastelorange}{RGB}{255, 212, 130}
\definecolor{pastelred}{RGB}{255, 179, 154}

\DeclareMathOperator{\diag}{diag}
\DeclareMathOperator{\rank}{rank}

\newcommand{\ReLU}{\operatorname{ReLU}}
\newcommand{\diam}{\operatorname{diam}}

\newtheorem{theorem}{Theorem}

\newtheorem{corollary}{Corollary}
\newtheorem{remark}{Remark}

\newtheorem{gridlemma}{Lemma}

\newtheorem{ffnlemma}{Lemma}

\newtheorem{ingplemma}{Lemma}

\newtheorem{gaplemma}{Lemma}

\title{Bounding Global and Local Compression Error of Signal Parameterizations}

\author{%
  Quang Luong Nhat Nguyen \qquad
  Sara Fridovich-Keil\\
  School of Electrical and Computer Engineering\\
  Georgia Institute of Technology\\
  \texttt{\{qnguyen95, sfk\}@gatech.edu} }

\begin{document}

\maketitle

\begin{abstract}
  Differentiable signal parameterizations such as implicit neural representations (INRs) and hybrid models are increasingly central to computational imaging, yet principled tools for evaluating reconstruction fidelity at finite model size remain limited when ground truth is unavailable. We introduce a framework for predicting the reconstruction error of compressive signal parameterizations, yielding non-asymptotic, signal-specific bounds that are both theoretically sound and efficiently computable without access to the ground truth signal. Specifically, we prove that when parameterization-based compression satisfies certain natural properties, the compression error at any compression level is bounded by a simple scaled difference between model predictions at different compression levels. We verify these properties for representative model families including interpolated grids, Fourier feature networks, multi-resolution hash encodings, and tensor factorizations, and show empirically that the resulting worst-case guarantees can be efficiently adapted into signal-specific error predictors that are tight and generalizable. Across direct fitting of synthetic and natural signals, and inverse problems including radiance field and MRI reconstruction, our method closely tracks global error curves and yields informative local error heatmaps without ground-truth access. Code is available at \url{https://github.com/voilalab/global_error_bound}.
\end{abstract}

\vspace{-2mm}
\section{Introduction} \label{sec:intro}
Computational imaging seeks to recover high-fidelity signals from indirect, noisy measurements, and modern applications increasingly demand higher resolution, higher dimensionality, and dynamic content, ranging from gigavoxel volumetric MRI~\cite{ong2020extrememri} to spatiotemporal neural radiance fields~\cite{jiao2024ffeinr, pumarola2021dnerf}. These applications introduce severe bottlenecks in memory and compute. Compressive signal representations have therefore become indispensable, and a growing body of work has explored compact parameterizations for computational imaging, including sparse grids~\cite{fridovich2022plenoxels}, basis decompositions~\cite{bartels1995introduction, broughton2018discrete}, coordinate-based neural networks~\cite{chen2021learning, lindell2021bacon, saragadam2023wire, sitzmann2020siren, tancik2020ffn, strumpler2022implicit}, multi-resolution hash encodings~\cite{muller2022ingp}, discrete mesh or point-based models~\cite{kerbl3Dgaussians, siddiqui2023meshgpt, yin2022continuouspde}, low-rank factorizations~\cite{chen2022tensorf,fridovichkeil2023kplanes,sivgin2024gaplanes}, and other hybrid models~\cite{chen2023dictfields, lombardi2021mvp} that combine several of these strategies. 

Despite their empirical success, these models offer limited practical guidance about what is lost under compression at finite model size. Moreover, their fidelity is often assessed by empirical comparison against a known ground truth signal~\cite{strumpler2022implicit}, which is not usually available in computational imaging applications.
In fields such as biomedical imaging and robotic perception, quantifying reconstruction fidelity in the absence of ground truth is essential for clinical diagnostics~\cite{ong2020extrememri, heckel2024deeplearningmri,lustig2007compressedsensing, molaei2023inrmedicalimaging} and operational safety~\cite{wang2025nerfroboticssurvey, araujo2024robotoicperceptiohn, zhu2022niceslam, sucar2021imap}. Similar needs arise in remote sensing and edge or bandwidth-limited imaging systems~\cite{gunasheela2018satellite, gomes2025geospatialcompression, louys1999astronomical, offringa2016compression, ngo2025dnnedge, yin2025surveyedge, delprete2025optimizing}, where communication, memory, and compute are tightly constrained.
 
While compression loss, or rate-distortion, is a well-studied problem in information theory \cite{cover2006elements}, existing compression results do not immediately extend to the differentiable and often implicit neural parameterizations that are now common in computational imaging.
Where theoretical approximation error bounds do exist for neural networks, they often remain confined to asymptotic or overparametrized regimes~\cite{achour2022nnlowerbound, siegel2020nnupperbound}, providing guarantees that are too coarse to inform signal-specific compression in practice. Consequently, there exist no principled tools to determine, without exhaustive search or access to ground truth, whether a given compressed parameterization is sufficiently accurate for a target application. 

To address this gap, we derive finite-parameter error bounds for widely used differentiable parameterizations, including Fourier Feature Networks (FFN~\cite{tancik2020ffn}), multi-resolution hash encodings (Instant-NGP~\cite{muller2022ingp}), and tensor-factorized feature grids (Geometric Algebra Planes~\cite{sivgin2024gaplanes}), by establishing three structural properties of their optimal error curves: monotonicity, convexity, and zero error beyond a critical parameter budget. We further complement these guarantees with a simple prediction scheme that requires fitting only a few models at nearby compression levels, translating conservative worst-case bounds into efficiently computable and empirically tight error estimates. We validate this framework across two settings: direct signal fitting, spanning synthetic signals and natural images, and imaging inverse problems, including radiance field and accelerated MRI reconstruction. 

Concretely, we make the following contributions:
\begin{itemize}[leftmargin=15pt]
    \item We propose a framework for bounding the compression error of any signal parameterization that satisfies three natural properties: the compression error decreases monotonically with model size, is convex in model size, and is zero above some finite critical parameter budget. Under these assumptions, the compression error of a model can be rigorously bounded using only comparisons between models with nearby parameter counts, without any knowledge of the ground truth signal.
    \item We prove that interpolated grids as well as three representative and popular parameterizations from the computer vision literature (Fourier feature networks \cite{tancik2020ffn}, multi-resolution hash encodings \cite{muller2022ingp}, and tensor factorizations \cite{sivgin2024gaplanes}) satisfy these properties.
    \item While our theoretical  bounds are provably valid, they are not empirically tight in their worst-case form. We introduce a simple method to adapt these bounds into practically useful error estimates for each model family that generalize well across diverse signals and model sizes.
    \item We validate our error bounds empirically across all four representations and a variety of data types and inverse problems, including fitting natural images, synthetic signals, 3D objects, and synthetic volumes as well as 3D radiance field and MRI reconstruction.
    \item Although our formal theory is stated for global reconstruction error, we demonstrate empirically that the same approach also yields informative local pixel-level error estimates that align well with the true error map, enabling efficient visualization and localization of compression artifacts without access to ground truth. 
\end{itemize}

\vspace{-2mm}
\section{Related Work} \label{sec:relworks}
\vspace{-2mm}
\paragraph{Compressive parameterizations.}
Compact signal parameterizations have found broad application in computational imaging tasks where memory and compute are tightly constrained, including accelerated MRI~\cite{ong2020extrememri, heckel2024deeplearningmri,lustig2007compressedsensing, molaei2023inrmedicalimaging}, robotic perception~\cite{wang2025nerfroboticssurvey, araujo2024robotoicperceptiohn, zhu2022niceslam, sucar2021imap}, remote sensing~\cite{gunasheela2018satellite, gomes2025geospatialcompression, louys1999astronomical, offringa2016compression}, and edge-constrained imaging~\cite{ngo2025dnnedge, yin2025surveyedge, delprete2025optimizing}. 
Recent architectures for these tasks have converged around a few families of differentiable parameterizations, including coordinate-based neural networks~\cite{chen2021learning, lindell2021bacon, saragadam2023wire, sitzmann2020siren, tancik2020ffn, strumpler2022implicit}, trainable spatial encodings~\cite{muller2022ingp}, and tensor-factorized models~\cite{chen2022tensorf,fridovichkeil2023kplanes,sivgin2024gaplanes}. In this work we focus on Fourier Feature Networks (FFNs)~\cite{tancik2020ffn}, Instant-NGP~\cite{muller2022ingp}, and Geometric Algebra Planes (GA-Planes)~\cite{sivgin2024gaplanes} as representative models of each type. All three models contain an MLP component, which we restrict (for the sake of analysis) to be a shallow single-hidden-layer MLP with finite width. A more detailed discussion of each model architecture is provided in Appendix~\ref{appendix:relworks:1}. 
  
\vspace{-2mm}
\paragraph{Rate-distortion theory, compression bounds, and approximation theory.}
Prior theory addresses neighboring but ultimately different questions. Classical rate-distortion theory~\cite{shannon1959probability, cover2006elements} and neural compression pipelines based on quantization or entropy coding~\cite{strumpler2022implicit, chen2021nerv, dai2025implicit, guo2023bayesianinrs} study compression through distributional distortion limits or explicit coding schemes, while complementary research~\cite{han2025losslessinr} studies the model size required for lossless neural representation of a discrete signal. These results either assume a known source distribution or rely on nondifferentiable operations such as quantization and entropy encoding, whereas the representations we consider are continuously parameterized and fully differentiable throughout optimization, which is essential in computational imaging where we cannot directly measure the signal of interest. 
Separately, approximation theory~\cite{barron1993universal, siegel2020nnupperbound, achour2022nnlowerbound} and universal approximation results~\cite{ hornik1989multilayer, park2021minimumwidth, kim2024tighterminimumwidth} for neural networks characterize expressivity, approximation rates, and the existence of exact representations as width grows. These works provide important motivation and technical ingredients, but they do not yield non-asymptotic, signal-specific error bounds for compressive parameterizations as a function of model size that are both theoretically justified and computable without access to ground truth. Our work instead studies the geometry of the optimal error curve and shows how adjacent-model differences can be converted into non-asymptotic error bounds and practically useful predictors of compression error. We provide an expanded discussion of rate-distortion theory, approximation rates, and universal approximation results in Appendices~\ref{appendix:relworks:2}--\ref{appendix:relworks:4}.

\vspace{-2mm}
\section{Theoretical Results} \label{sec:theoreticalres}
We first formalize the properties that allow us to translate model differences between neighboring compression levels into rigorous error bounds. Let $x \in \mathbb{R}^n$ denote the true signal we aim to represent, such as an image or volume, and let $g_d: \mathbb{R}^d \to \mathbb{R}^n$ be a chosen compressive parameterization at compression level $d/n$, with learnable parameters $\theta \in \mathbb{R}^d$. For each parameter budget $d$, we define the optimal compression error as 
\begin{equation}\label{eq:gap_definition}
\|x - g_d^\star\| \coloneq \min_{\theta \in \mathbb{R}^d} \|x - g_d(\theta)\|,
\end{equation}
where $g_d^\star$ denotes the optimal size-$d$ representation of $x$ under parameterization architecture $g$. Directly evaluating this quantity is generally intractable: the ground-truth signal $x$ is unobserved for many computational imaging applications, and even when $x$ is accessible the minimization over $\theta$ is nonconvex for neural or hybrid models $g$. This motivates our goal of bounding $\|x - g_d^\star\|$ using only the observable difference between trained models $g_{d_1}$ and $g_{d_2}$ at adjacent compression levels. For notational simplicity, we state the theory using consecutive parameter budgets $d$ and $d+1$; in practice, the parameter count of most architectures cannot be increased by exactly one, so we instead compare neighboring feasible budgets $d < d'$. The same telescoping argument applies by treating the interval $[d,d']$ as a block of compression increments. To enable this analysis, we assume the optimal error curve $d \to \|x - g_d^\star\|$ obeys three natural structural properties, which we state formally below. 

\vspace{-3mm}
\paragraph{Assumption 1 (Monotonicity).} For all $d \geq 0$,
\[\|x - g_d^\star\| \;\geq\; \|x - g_{d+1}^\star\|. \tag{A1} \label{A1}\]
That is, enlarging the parameter budget cannot worsen the optimal fit. 

\vspace{-3mm}
\paragraph{Assumption 2 (Discrete convexity).} For all $d \geq 0$,
\[\|x - g_d^\star\| - \|x - g_{d+1}^\star\| \;\geq\; \|x - g_{d+1}^\star\| - \|x - g_{d+2}^\star\|. \tag{A2} \label{A2}\]
This is a discrete diminishing-returns condition: the marginal improvement obtained by increasing model size is larger at small $d$ and becomes progressively smaller as $d$ increases.

\vspace{-3mm}
\paragraph{Assumption 3 (Zero error).} There exists a finite $d^\star$ such that 
\[\|x - g_{d^\star}^\star\| = 0. \tag{A3} \label{A3}\]
In other words, at some sufficiently large model size, the model $g$ can exactly parameterize $x$.

Together, Assumptions \ref{A1}--\ref{A3} impose a simple yet powerful geometry on the optimal error curve: it decreases monotonically and at a diminishing rate, converging to zero at a finite model size $d^\star$. This geometry is what enables the gap between optimal models at adjacent compression levels to control the compression error relative to an unknown ground truth. We make this precise in the following sections. 

\vspace{-1mm}
\subsection{Main Result} 
We begin with the idealized setting in which the globally optimal models $g_d^\star$ and $g_{d+1}^\star$ are available. \Cref{thm:main} is the core of our analysis in this setting: it shows that the compression error at parameter budget $d$ can be bounded by the difference between two neighboring optimal compressive representations, scaled by the remaining gap to the exact-fit regime. Put differently, if increasing the model size by one parameter only changes the optimal representation a little, then the model at size $d$ is already close to the ground-truth signal.

\begin{theorem}[Global Error Bound]
\label{thm:main}
Under Assumptions \ref{A1}--\ref{A3}, for any parameter budget $d \leq d^\star$, the compression error is bounded by
\begin{align}
    \|x - g_d^\star\| \leq (d^\star - d)\|g_d^\star - g_{d+1}^\star\|.
\end{align}
\end{theorem}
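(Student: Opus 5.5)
We prove the bound by writing the error at budget $d$ as a telescoping sum of successive decrements, using convexity to dominate every decrement by the first one, and then using the triangle inequality to bound that first decrement by $\|g_d^\star - g_{d+1}^\star\|$. Write $e_k := \|x - g_k^\star\|$ for the optimal error at budget $k$.

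\textbf{Step 1: telescope.} Assumption \ref{A3} gives $e_{d^\star}=0$, so for any $d \le d^\star$ we have
\[
e_d = e_d - e_{d^\star} = \sum_{k=d}^{d^\star-1} (e_k - e_{k+1}).
\]

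\textbf{Step 2: bound each decrement.} Assumption \ref{A2}, applied repeatedly, says the decrements $\Delta_k := e_k - e_{k+1}$ are nonincreasing in $k$. Hence $\Delta_k \le \Delta_d$ for every $k \ge d$. Assumption \ref{A1} guarantees each $\Delta_k \ge 0$, so no term in the sum is negative; strictly speaking, only the upper bound is needed here. Summing the $d^\star - d$ terms gives
\[
e_d \le (d^\star - d)\,\Delta_d .
\]

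\textbf{Step 3: relate the decrement to the model gap.} By the reverse triangle inequality in $\mathbb{R}^n$,
\[
\Delta_d = \|x-g_d^\star\| - \|x-g_{d+1}^\star\| \le \|g_d^\star - g_{d+1}^\star\|.
\]
Substituting this into Step 2 yields the claim. The edge case $d = d^\star$ is trivial, since both sides are $0$. If $d^\star$ is not the smallest exact-fit budget, the argument is unaffected: it only uses $e_{d^\star} = 0$, together with the convexity of the error sequence along the way.

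No step is a real obstacle; the only care needed is in Step 2. Convexity has to be used in its chained form, $\Delta_d \ge \Delta_{d+1} \ge \dots$, which follows by induction on \ref{A2}. Convexity must also be required to hold up to $d^\star - 1$. This is covered because \ref{A2} is stated for all $d \ge 0$.
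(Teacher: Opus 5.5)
Your proof is correct and follows the paper's own argument exactly. You telescope using Assumption A3, dominate every decrement by the first one via the chained form of Assumption A2, and bound that first decrement by $\|g_d^\star - g_{d+1}^\star\|$ with the reverse triangle inequality; your side remark that Assumption A1 is not actually needed for the upper bound is also accurate.
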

\begin{proof}
Writing $\|x - g_d^\star\|$ as a telescoping sum and applying Assumption \ref{A3} yields
\[\|x - g_d^\star\| = \sum_{i=1}^{d^\star - d} \|x - g_{d+i-1}^\star\| - \|x - g_{d+i}^\star\|.\]
Assumption \ref{A2} ensures each summand is no larger than the first, so
\[\|x - g_d^\star\| \leq (d^\star - d)(\|x - g_d^\star\| - \|x - g_{d+1}^\star\|) \leq (d^\star - d)\|g_d^\star - g_{d+1}^\star\|,\]
where the last step follows from the triangle inequality $\|g_d^\star - g_{d+1}^\star\| \geq \|x - g_d^\star\| - \|x - g_{d+1}^\star\|$.
\end{proof}
\vspace{-2mm}
Theorem~\ref{thm:main} is model-agnostic and efficiently computable:
it reduces the unknown error $\|x - g_d^\star\|$ relative to the ground truth to an efficiently computable quantity involving only two models at adjacent compression levels, where the scale factor $(d^\star - d)$ can be determined analytically from the architecture specification of $g$. Critically, this bound does not require any knowledge of the target signal $x$, only the difference between two model outputs and the known zero-error budget $d^\star$.

\begin{remark}[Squared-error variant]
\label{remark:squared-error}
For architectures where Assumption \ref{A2} holds only for the squared error, i.e. $\|x - g_d^\star\|^2 - \|x - g_{d+1}^\star\|^2 \;\geq\; \|x - g_{d+1}^\star\|^2 - \|x - g_{d+2}^\star\|^2$, an analogous telescoping argument using Assumption \ref{A1} yields
\[\|x - g_d^\star\| \leq 2(d^\star - d)\|g_d^\star - g_{d+1}^\star\|,\]
a bound that differs from Theorem~\ref{thm:main} only by a constant factor of 2. This variant is useful for architectures whose approximation theory is more naturally expressed in terms of squared residuals. 
\end{remark}

In practice, estimating $\|g_d^\star - g_{d+1}^\star\|$ from a single adjacent pair of models can be sensitive to training instabilities. In Appendix~\ref{appendix:batched}, we provide a batched variant of \Cref{thm:main} that averages the measured gap over a contiguous block of $b$ compression increments (parameter counts), yielding smoother bound estimates at the cost of $b$ additional model fits.

\vspace{-1mm}
\subsection{Handling the Optimization Gap}
Theorem~\ref{thm:main} and its batched variant are stated in terms of globally optimal models, meaning the minimization in \Cref{eq:gap_definition} is solved exactly. In practice, compressive parameterizations $g$ are often fit by nonconvex optimization, making it difficult to ensure that the trained model at a given parameter count is globally optimal. To account for this gap between theory and practice, let $\tilde{g}_d$ denote the output of a model trained to practical convergence at size $d$, and define the optimization error $\eta_d \coloneq \tilde{g}_d - g_d^\star$ as the gap between the trained model output and optimal size-$d$ model output. \Cref{thm:main_opt_gap} shows that our global error bound extends cleanly to this setting, with the bound degrading in proportion to the size and consistency of optimization errors across compression levels.

\begin{theorem}[Error Bound with Optimization Gap]
\label{thm:main_opt_gap}
Under Assumptions \ref{A1}--\ref{A3}, for any converged model outputs $\tilde{g}_d, \tilde{g}_{d+1}$ with optimization errors $\eta_d, \eta_{d+1}$, the compression error is bounded by
\begin{align}
    \|x - \tilde{g}_d\| \leq (d^\star - d)\|\tilde{g}_d - \tilde{g}_{d+1}\| + (d^\star - d)\|\eta_{d+1} - \eta_d\| + \|\eta_d\| .
\end{align}
\end{theorem}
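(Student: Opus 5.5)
The plan is to reduce everything to Theorem~\ref{thm:main} by substituting $g_d^\star = \tilde{g}_d - \eta_d$ and $g_{d+1}^\star = \tilde{g}_{d+1} - \eta_{d+1}$ and applying the triangle inequality twice: once on the left-hand side to pass from the trained model to the optimal one, and once on the right-hand side to pass from the optimal-model gap to the observable trained-model gap. No new structural facts about the error curve are needed beyond Assumptions \ref{A1}--\ref{A3}, which enter only through Theorem~\ref{thm:main}.

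First, I will write $x - \tilde{g}_d = (x - g_d^\star) - \eta_d$, which gives
\[\|x - \tilde{g}_d\| \leq \|x - g_d^\star\| + \|\eta_d\|.\]
This step produces the additive $\|\eta_d\|$ term in the statement. Next, I will invoke Theorem~\ref{thm:main}, which applies because $d \leq d^\star$ and the assumptions are unchanged. It yields $\|x - g_d^\star\| \leq (d^\star - d)\|g_d^\star - g_{d+1}^\star\|$.

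Then I will rewrite the optimal-model difference in terms of observables:
\[g_d^\star - g_{d+1}^\star = (\tilde{g}_d - \tilde{g}_{d+1}) + (\eta_{d+1} - \eta_d).\]
The triangle inequality then gives $\|g_d^\star - g_{d+1}^\star\| \leq \|\tilde{g}_d - \tilde{g}_{d+1}\| + \|\eta_{d+1} - \eta_d\|$. Multiplying by $(d^\star - d) \geq 0$ and combining with the first two steps gives exactly the claimed bound.

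There is no real obstacle here; the only point requiring care is the sign bookkeeping in the difference of optimization errors. It is worth keeping the term as $\|\eta_{d+1} - \eta_d\|$ rather than bounding it by $\|\eta_d\| + \|\eta_{d+1}\|$. This preserves the interpretation that consistent optimization errors across neighboring budgets largely cancel, which is the point the theorem is meant to convey.
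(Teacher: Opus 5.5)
Your proposal is correct and matches the paper's proof: the triangle inequality gives $\|x-\tilde{g}_d\|\le\|x-g_d^\star\|+\|\eta_d\|$, Theorem~\ref{thm:main} is applied, and substituting $g_d^\star=\tilde{g}_d-\eta_d$, $g_{d+1}^\star=\tilde{g}_{d+1}-\eta_{d+1}$ with one more triangle inequality finishes the argument. You are also right to make the hypothesis $d\le d^\star$ explicit, since the statement leaves it implicit and it is needed to invoke Theorem~\ref{thm:main} and to multiply by $d^\star-d\ge 0$.
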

The three terms in Theorem~\ref{thm:main_opt_gap} admit clear interpretations. The first is the same neighboring-model discrepancy that appears in Theorem~\ref{thm:main} and is directly computable. The second penalizes inconsistency in optimization across adjacent compression levels: if one model is trained substantially better than its neighbor, the observed gap $\|\tilde{g}_d - \tilde{g}_{d+1}\|$ becomes a poor surrogate for $\|g_d^\star - g_{d+1}^\star\|$. The third is the irreducible optimization error at the parameter budget of interest. When $\eta_d =  \eta_{d+1} = 0$, the bound in \Cref{thm:main_opt_gap} reduces exactly to the bound in \Cref{thm:main}. More broadly, this result highlights that the local gap between nearby trained models remains a useful estimate of the true compression error as long as optimization behavior does not vary too erratically across adjacent compression levels.

\def\linegap{-0.35mm}
\def\figurewidth{0.993}
\begin{figure}[ht!]
\vskip -3mm
\centering
\includegraphics[width=\figurewidth\linewidth]{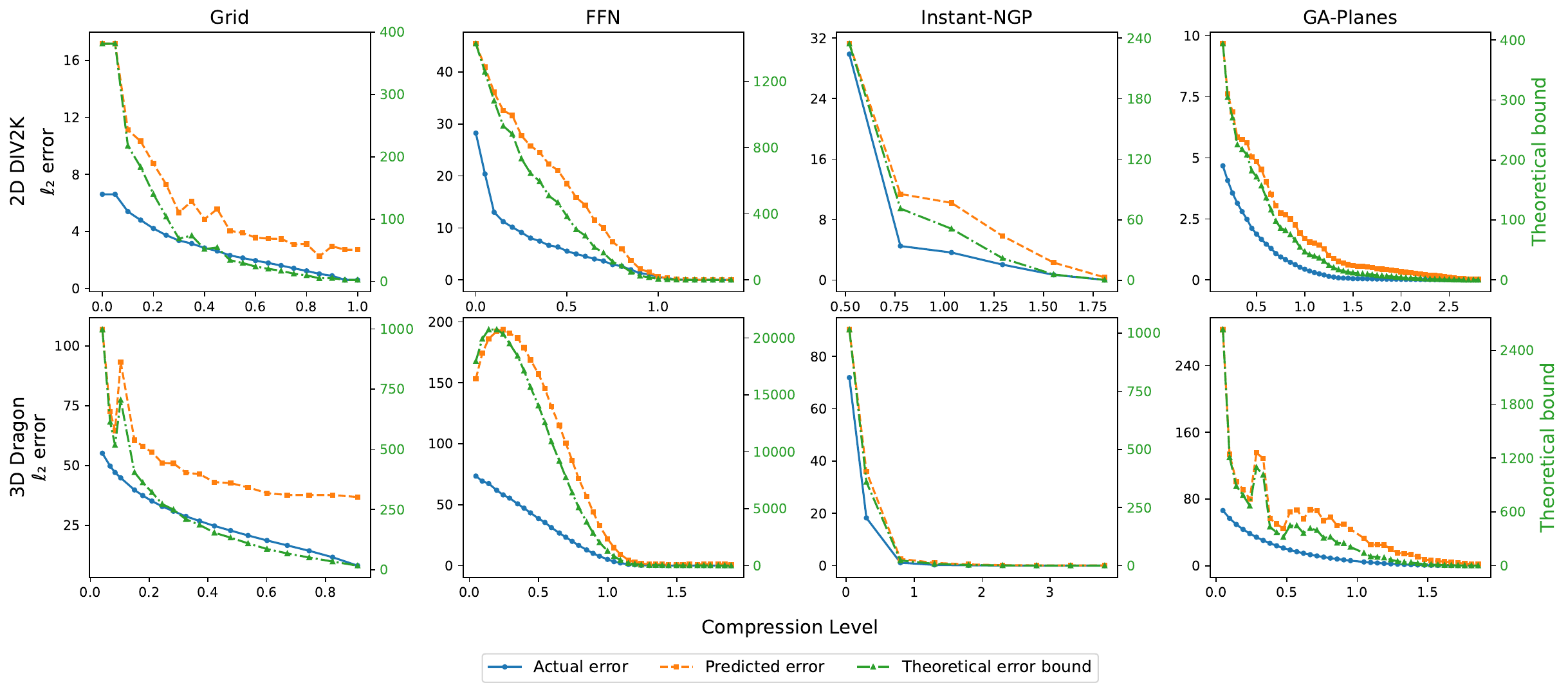}
\vskip 1mm
\begin{minipage}{0.98\textwidth}
    \centering
    \caption{\textbf{Global error curves for direct fitting of signals.} 
    We fit Grid, FFN, Instant-NGP, and GA-Planes directly to a 2D natural image from DIV2K~\cite{agustsson2017div2k} and the 3D Stanford Dragon surface volume~\cite{curless1996dragon}. The actual error (blue, left axis) is the $\ell_2$ distance between reconstruction and ground truth, the optimization-gap-adjusted theoretical bound (green, right axis) follows from Theorem~\ref{thm:main_opt_gap} as a function of compression level, and the predicted error (orange, left axis) rescales this bound by the per-architecture constant $c$ from Section~\ref{subsec:constant-c}. Although the theoretical bounds are conservative by orders of magnitude, the calibrated predictions track the actual error across architectures, signals, and dimensionalities.} 
    \label{fig:task1_error_curves}
\end{minipage}
\vspace{-7mm}
\end{figure}

\vspace{-1mm}
\subsection{Adaptation to Upper/Lower Bound Envelopes}
For some model families, Assumption \ref{A2} cannot be readily verified for the true optimal error $\|x - g_d^\star\|$. In such cases, it suffices to replace the optimal error curve with tractable upper and lower envelopes that preserve the desired geometry. This allows us to retain the same general strategy to bound the compression error, at the cost of an additive slack term reflecting the width of the envelope. Concretely, suppose there exist functions $L,U:\mathbb{Z}_{+} \to \mathbb{R}_+$ satisfying
\[L(d) \leq \|x - g_d^\star\| \leq U(d) \;\;\forall d, \;\;\;\;\; U(d^\star) = L(d^\star) = 0,\]
where the upper bound $U$ is monotone and discretely convex. The lower envelope $L$ need not be convex; it serves only to quantify how loose the upper envelope may be at each parameter budget. 

\begin{theorem}[Error Bound via Envelopes]
\label{thm:main_envelope}
Suppose the error curve satisfies Assumptions \ref{A1} and \ref{A3}, and suppose there exist envelopes $L$ and $U$ satisfying the conditions above, with $U$ monotone and discretely convex. Then the compression error is bounded by
\begin{align}
    \|x - g_d^\star\| \leq (d^\star - d)\big(\|g_d^\star - g_{d+1}^\star\| + U(d) - L(d)\big).
\end{align}
With optimization gap, 
\begin{align}
    \|x - \tilde{g}_d\| \leq (d^\star - d)\big(\|\tilde{g}_d - \tilde{g}_{d+1}\| + \|\eta_{d+1} - \eta_d\| + U(d) - L(d) \big) + \|\eta_d\|.
\end{align}
\end{theorem}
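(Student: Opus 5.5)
The plan is to run the telescoping argument of Theorem~\ref{thm:main} on the upper envelope $U$ rather than on the error curve itself. The convexity needed for telescoping is then supplied by $U$. Afterwards I convert the resulting increment of $U$ back into the observable gap $\|g_d^\star - g_{d+1}^\star\|$, which costs the envelope width $U(d)-L(d)$.

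Write $e(d) := \|x - g_d^\star\|$ and fix $d \le d^\star$; if $d = d^\star$ both sides vanish, so assume $d < d^\star$.

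\textbf{Step 1: telescope on $U$.} Since $U(d^\star)=0$,
\[
U(d) = \sum_{i=1}^{d^\star-d}\big(U(d+i-1)-U(d+i)\big).
\]
Because $U$ is discretely convex, each summand is at most the first one. Hence
\[
e(d) \le U(d) \le (d^\star-d)\big(U(d)-U(d+1)\big).
\]

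\textbf{Step 2: split the increment of $U$.} Decompose it through the true curve:
\[
U(d)-U(d+1) = \big(U(d)-e(d)\big) + \big(e(d)-e(d+1)\big) + \big(e(d+1)-U(d+1)\big).
\]
I bound the three pieces separately:
\begin{itemize}
\item The first piece is at most $U(d)-L(d)$, since $e(d)\ge L(d)$.
\item The third piece is nonpositive, since $e(d+1)\le U(d+1)$.
\item The middle piece is at most $\|g_d^\star - g_{d+1}^\star\|$ by the reverse triangle inequality, exactly as in the proof of Theorem~\ref{thm:main}.
\end{itemize}
Multiplying by $(d^\star-d)\ge 0$ gives the first claimed inequality. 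Assumption~\ref{A1} is not really needed here beyond keeping the setup consistent; the work is done by the envelope properties and Assumption~\ref{A3} through $U(d^\star)=0$.

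\textbf{Step 3: optimization gap.} Use $\tilde g_d = g_d^\star + \eta_d$, which gives
\[
\|x-\tilde g_d\| \le e(d) + \|\eta_d\|.
\]
Also
\[
g_d^\star - g_{d+1}^\star = (\tilde g_d - \tilde g_{d+1}) + (\eta_{d+1}-\eta_d),
\]
so $\|g_d^\star - g_{d+1}^\star\| \le \|\tilde g_d-\tilde g_{d+1}\| + \|\eta_{d+1}-\eta_d\|$. Substituting into the first bound yields the second inequality.

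\textbf{Main obstacle.} The only nonroutine point is Step 2. We cannot compare $U(d)-U(d+1)$ directly with $e(d)-e(d+1)$, because $U$ may be much flatter or steeper than $e$. The trick is to pay $U(d)-L(d)$ at $d$ and use $e\le U$ at $d+1$ for free. This is why only the width at $d$, and not at $d+1$, appears in the bound.
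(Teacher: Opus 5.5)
Your proof is correct and takes essentially the same approach as the paper. The paper also telescopes on $U$ using its discrete convexity and $U(d^\star)=0$, then bounds $U(d)-U(d+1)\le U(d)-\|x-g_{d+1}^\star\| = \xi_d + \bigl(\|x-g_d^\star\|-\|x-g_{d+1}^\star\|\bigr)$ with $\xi_d\le U(d)-L(d)$, applies the triangle inequality, and handles the optimization gap by the same substitution as in Theorem~\ref{thm:main_opt_gap}; your three-piece split is just this chain written term by term, and you are right that Assumption~\ref{A1} is never actually used.
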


Theorem~\ref{thm:main_envelope} is important conceptually because it broadens the scope of the framework in \Cref{thm:main} by showing that discrete convexity of the true error curve is not strictly necessary. What matters is that the error curve can be bounded between surrogates with appropriate properties. The slack term $U(d) - L(d)$ quantifies the price of this relaxation: when $U = L$, the bound reduces to Theorem~\ref{thm:main}, while looser envelopes yield valid but more conservative bounds on the compression error.

\subsection{Verification for Representative Architectures}
The previous theorems are abstract and apply to any signal parameterization whose error curve satisfies the stated structural assumptions. We now verify that the four representations we study satisfy the appropriate conditions, and identify which theorem governs each representation. The four models are: a Grid baseline (direct interpolated pixel/voxel parameterization), Fourier feature networks (FFN) \cite{tancik2020ffn}, Instant-NGP \cite{muller2022ingp}, and 2D linear-decoder GA-Planes \cite{sivgin2024gaplanes} (see the GA-Planes formulations in Appendix~\ref{appendix:proofs:gap}). 

\begin{theorem}
\label{thm:model-specific}
Let $x \in \mathbb{R}^n$. Then
\begin{enumerate}
    \item Grid, FFN, Instant-NGP, and GA-Planes each satisfy Assumptions \ref{A1} and \ref{A3}.
    \item Grid satisfies Assumption \ref{A2} for the squared error when $x$ has a radially nonincreasing power spectrum. GA-Planes satisfies Assumption \ref{A2} for the squared error without additional signal assumptions. In both cases, the bound of \Cref{remark:squared-error} applies.
    \item FFN and Instant-NGP admit monotone, discretely convex upper bound $U(d)$ and lower bound $L(d)$ with $L(d) \leq \|x-g_d^\star\| \leq U(d)$, so Theorem~\ref{thm:main_envelope} applies. 
\end{enumerate}
\end{theorem}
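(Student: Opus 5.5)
The plan is to treat each part separately, architecture by architecture, because the mechanisms differ.

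\textbf{Part 1 (A1 and A3).} For \ref{A1} I will use a nesting argument. For each architecture I will exhibit an embedding $\iota:\mathbb{R}^d\to\mathbb{R}^{d+1}$ (or between neighboring feasible budgets) with $g_{d+1}(\iota(\theta)) = g_d(\theta)$, so that the minimum over the larger set can only be smaller. The embeddings are as follows.
\begin{itemize}
\item FFN and the shallow-MLP decoders of Instant-NGP and GA-Planes: append a hidden unit whose output weight is zero.
\item Instant-NGP: append hash-table entries that are never accessed, or set new entries so that the interpolated features are unchanged.
\item GA-Planes: append a rank-one component with zero coefficient.
\item Grid: monotonicity along a refinement sequence of nested grids. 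A coarse interpolant is exactly representable on a grid that refines it. For non-nested resolutions I restrict to a dyadic or nested sequence and note this restriction.
\end{itemize}
For \ref{A3} I will give an explicit finite $d^\star$ in each case.
\begin{itemize}
\item Grid: $d^\star=n$ (one parameter per sample).
\item GA-Planes: full-rank factorization of the plane features.
\item Instant-NGP: a finest-level table with no collisions, size $\ge n$, storing the values directly.
\item FFN with a single hidden layer: finite-width exact interpolation of $n$ points. I will use the classical result that $n$ generic hidden units with linear readout interpolate any $n$ data points; this requires the feature matrix to be invertible, which holds for generic Fourier frequencies. This gives $d^\star=O(n\cdot\text{width cost})$.
\end{itemize}

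\textbf{Part 2 (squared-error convexity).} For Grid, I will express the optimal size-$d$ fit in the frequency domain. I will model interpolation on a $k$-point grid as (approximately) projection onto the lowest-$k$-frequency subspace, so that $\|x-g_d^\star\|^2=\sum_{\omega\notin \Omega_d}|\hat x(\omega)|^2$. The decrement from $d$ to $d+1$ is then the energy of the newly included frequency. A radially nonincreasing spectrum makes these decrements nonincreasing, which is exactly \ref{A2} for the squared error.

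For linear-decoder GA-Planes, the optimum is a best rank-$r$ approximation of a reshaped matrix (plus any line/grid terms, handled via orthogonal projection). By Eckart--Young, $\|x-g_r^\star\|^2=\sum_{i>r}\sigma_i^2$. Consecutive decrements $\sigma_{r+1}^2\ge\sigma_{r+2}^2$ are nonincreasing with no signal assumption. Convexity in parameter count then follows after reindexing by budget blocks, using the block telescoping mentioned after Assumption~\ref{A3}. Remark~\ref{remark:squared-error} then applies directly.

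\textbf{Part 3 (envelopes for FFN and Instant-NGP).} Here I will build $U$ from an explicit greedy or subspace construction whose error is computable and convex. For FFN, fixing the Fourier features and fitting only the linear readout gives a least-squares projection error. Choosing features greedily (orthogonal matching pursuit over a fixed dictionary) yields a curve whose squared decrements are nonincreasing once made convex. Alternatively, I can take $U$ to be the lower convex hull of any monotone upper bound that hits zero at $d^\star$. This is the key simplification: the greatest convex minorant of a monotone sequence ending at $0$ is monotone and convex. It remains an upper bound on $\|x-g_d^\star\|$ provided that optimal error itself lies below it, which follows because the optimum beats every feasible construction, including convex combinations realizable by splitting budget.

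For $L$ I will use $L=0$ or an information-theoretic bound such as a Kolmogorov $d$-width style lower bound. The statement only requires existence, so $L\equiv0$ for $d<d^\star$ and $L(d^\star)=0$ suffices, at the cost of a large slack $U-L$. For Instant-NGP, the analogous $U$ comes from collision-free table construction at increasing resolutions, again convexified.

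\textbf{Main obstacle.} I expect the hardest step to be justifying that the convexified $U$ still upper-bounds the true optimum. This requires showing that interpolating budgets are realizable, for instance by time-sharing of parameters across blocks, and I would attempt this through explicit parameter-splitting constructions. The Grid frequency-domain idealization is a secondary obstacle, since interpolation is not exactly a spectral projection.
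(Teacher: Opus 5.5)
\textbf{The main gap is in Part 3.} You replace a monotone upper bound by its greatest convex minorant and claim it still dominates $E(d)=\|x-g_d^\star\|$ because ``convex combinations are realizable by splitting budget.'' This does not work. The error at budget $d$ is that of a single model (one MLP, one hash table) fitted to a single signal $x$. No time-sharing attains $\lambda E(d_1)+(1-\lambda)E(d_2)$ at budget $\lambda d_1+(1-\lambda)d_2$; time-sharing works in rate--distortion only because distortion is averaged over many source blocks. The argument also proves too much: applied to the optimal constructions themselves, it would show that $E$ is discretely convex, i.e.\ Assumption~\ref{A2} outright, which is exactly what is unavailable for FFN and Instant-NGP. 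Concretely, for $E=(1,1,0)$ the greatest convex minorant is $(1,\tfrac12,0)$, which lies below $E$ at the middle budget. Your OMP variant has the same flaw, because greedy-pursuit decrements need not be nonincreasing. For example, with $x=e_1$ and unit atoms at angles $60^\circ$ and $-70^\circ$ in the $(e_1,e_2)$-plane, the squared decrements are $\tfrac14$ then $\tfrac34$.

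The paper never convexifies. It uses envelopes that are convex by calculus. For FFN it takes the Siegel--Xu rate $U(w)=Kw^{-1/2}$, assuming the samples come from a Barron-space function on the embedding range and using that $d$ is affine in $w$. It pairs this with an Achour-type lower bound obtained from the pseudo-dimension bound $O(d\log d)$. For Instant-NGP it takes a union bound over hash collisions, giving $\mathbb{E}\|x-g_d^\star\|^2\le\tfrac12\binom{n}{2}(4^k/T)^L$, which is convex in $T$ and in $L$, together with the lower envelope $\delta_xT^{-2^kL}$.

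If you only need the existence claim as literally stated, you can repair your argument with a convex \emph{majorant} instead, consistent with your choice $L\equiv0$. Let $d_0$ be the smallest budget, set $U(d)=(d^\star-d)\|x-g_{d_0}^\star\|$ for $d_0\le d\le d^\star$, and set $U(d)=0$ beyond. This $U$ is linear, monotone, vanishes at $d^\star$, and dominates $E(d)$ by Assumption~\ref{A1}, since $d^\star-d\ge1$ for $d<d^\star$. The price is that the slack in Theorem~\ref{thm:main_envelope} becomes vacuous.

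\textbf{Smaller points.}

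\emph{Instant-NGP monotonicity.} ``Appending unused table entries'' fails because the lookup index $H(p)\bmod T$ itself changes with $T$. Two vertices can collide modulo $T+1$ without colliding modulo $T$, and then no $(T+1)$-table reproduces the $T$-table. The paper restricts to $T'=mT$ and sets $\theta'_j=\theta_{j\bmod T}$, so that $(H(p)\bmod mT)\bmod T=H(p)\bmod T$ preserves every lookup. It adds levels by zeroing the decoder weights on the new level, a case you do not address.

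\emph{GA-Planes convexity.} Projecting out the low-resolution component does not reduce the joint problem $\min_{\rank A\le k,\,L}\|M-A-\varphi(L)\|_F^2$ to Eckart--Young. Write bilinear upsampling as $\varphi(L)=BLB^\top$ and let $P_B$ be the orthogonal projector onto the column space of $B$. The complementary projection is then $X\mapsto X-P_BXP_B$. This leaves a low-rank problem with a free block, whose optimal error need not be a tail sum $\sum_{i>k}\sigma_i^2$ of any fixed matrix. The paper obtains Eckart--Young only under its residual interpretation: fit $\varphi(L^\star)$ first, then truncate the SVD of $M-\varphi(L^\star)$. It also remarks that the fully joint case needs extra conditions. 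You need the same restriction.

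\emph{Grid.} Simply assume ideal sinc interpolation, as the paper does. Then the subspaces are nested for every $r$ and the Parseval identity is exact rather than approximate.

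\emph{FFN zero error.} Your argument here is a valid alternative to the paper's. The paper grows $q$ until the Fourier feature matrix has full row rank with high probability (via matrix Bernstein), then realizes the linear readout with a width-$4q^\star$ network. Your route has the merit of keeping $q$ fixed, consistent with indexing the FFN family by width. Note, however, that all it needs from the frequencies is that the points $\gamma_q(v_i)$ be distinct, which holds almost surely. The matrix that must be invertible is the $n\times n$ ReLU activation matrix, not the $n\times 2q$ feature matrix. That activation matrix is triangular once you sort a generic projection $a^\top\gamma_q(v_i)$ and place biases between consecutive values.
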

Proofs and additional theoretical results are provided in Appendices~\ref{appendix:batched} and~\ref{appendix:proofs}.

\vspace{-2mm}
\section{Experiments} \label{sec:exps}
\def\linegap{-0.40mm}
\def\figurewidth{0.991}

\begin{figure}[ht!]
\vskip -3mm
\centering
\includegraphics[width=\figurewidth\linewidth]{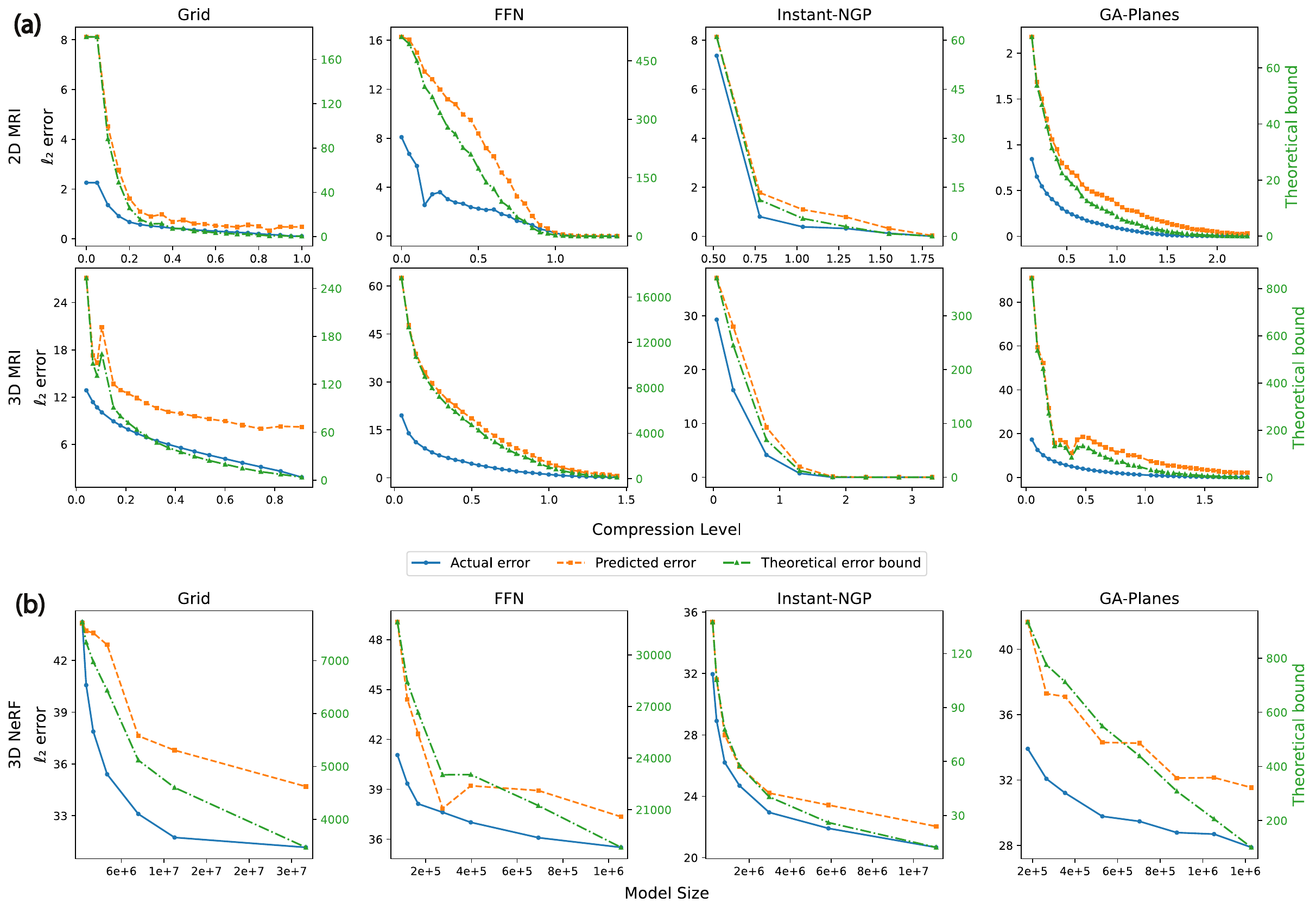}
\vskip 1mm
\begin{minipage}{0.97\textwidth}
    \centering
    \caption{\textbf{Global error curves for inverse problems.} 
    We evaluate Grid, FFN, Instant-NGP, and GA-Planes on three imaging tasks across two settings: (a) on 2D and 3D MRI reconstruction, and (b) radiance field reconstruction.  MRI results are plotted against compression level $d/n$ while NeRF results are plotted against model size, as the ground-truth signal dimension is not well-defined for radiance fields. For MRI, the actual error (blue, left axis) is the $\ell_2$ distance between reconstruction and ground truth; for NeRF, it is the average $\ell_2$ distance between rendered test views and the corresponding ground-truth test views. The optimization-gap-adjusted theoretical bound (green, right axis) follows from Theorem~\ref{thm:main_opt_gap} as a function of model size, and the predicted error (orange, left axis) rescales this bound by the per-architecture constant $c$ from Section~\ref{subsec:constant-c}. Despite the indirect supervision of these inverse problems, the predicted error continues to track the actual reconstruction error across compression levels and model sizes.}
    \label{fig:task2_error_curves}
\end{minipage}
\vspace{-7mm}
\end{figure}

\begin{figure}[ht!]
\vskip -3mm
\centering
\includegraphics[width=\figurewidth\linewidth]{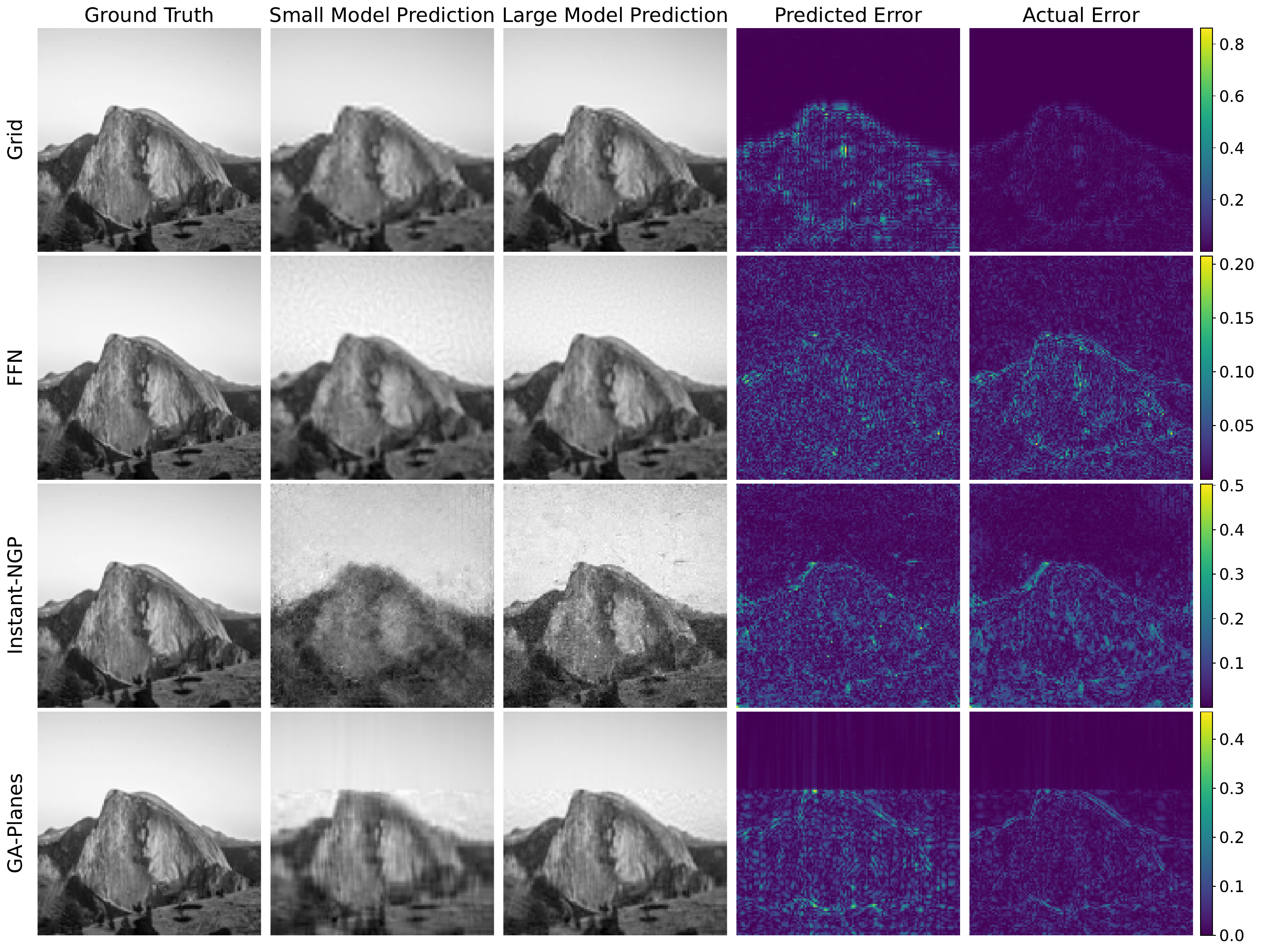}
\vskip 1mm
\begin{minipage}{0.98\textwidth}
    \centering
    \caption{\textbf{Local error heatmaps for direct fitting of a DIV2K~\cite{agustsson2017div2k} natural image.} Each row corresponds to a different architecture, with small and large models at compression levels $d/n \approx 0.2$ and $0.4$, respectively. Across all four representations, the predicted error heatmap $c|\tilde{g}_{\text{small}} - \tilde{g}_{\text{large}}|$ captures the spatial structure of the actual error heatmap $|\tilde{g}_{\text{small}} - x|$, showing that local compression artifacts can be estimated efficiently without ground-truth access.}
    \label{fig:2D_div2k_heatmaps}
\end{minipage}
\vspace{-7mm}
\end{figure}

\begin{figure}[ht!]
\vskip -3mm
\centering
\includegraphics[width=\figurewidth\linewidth]{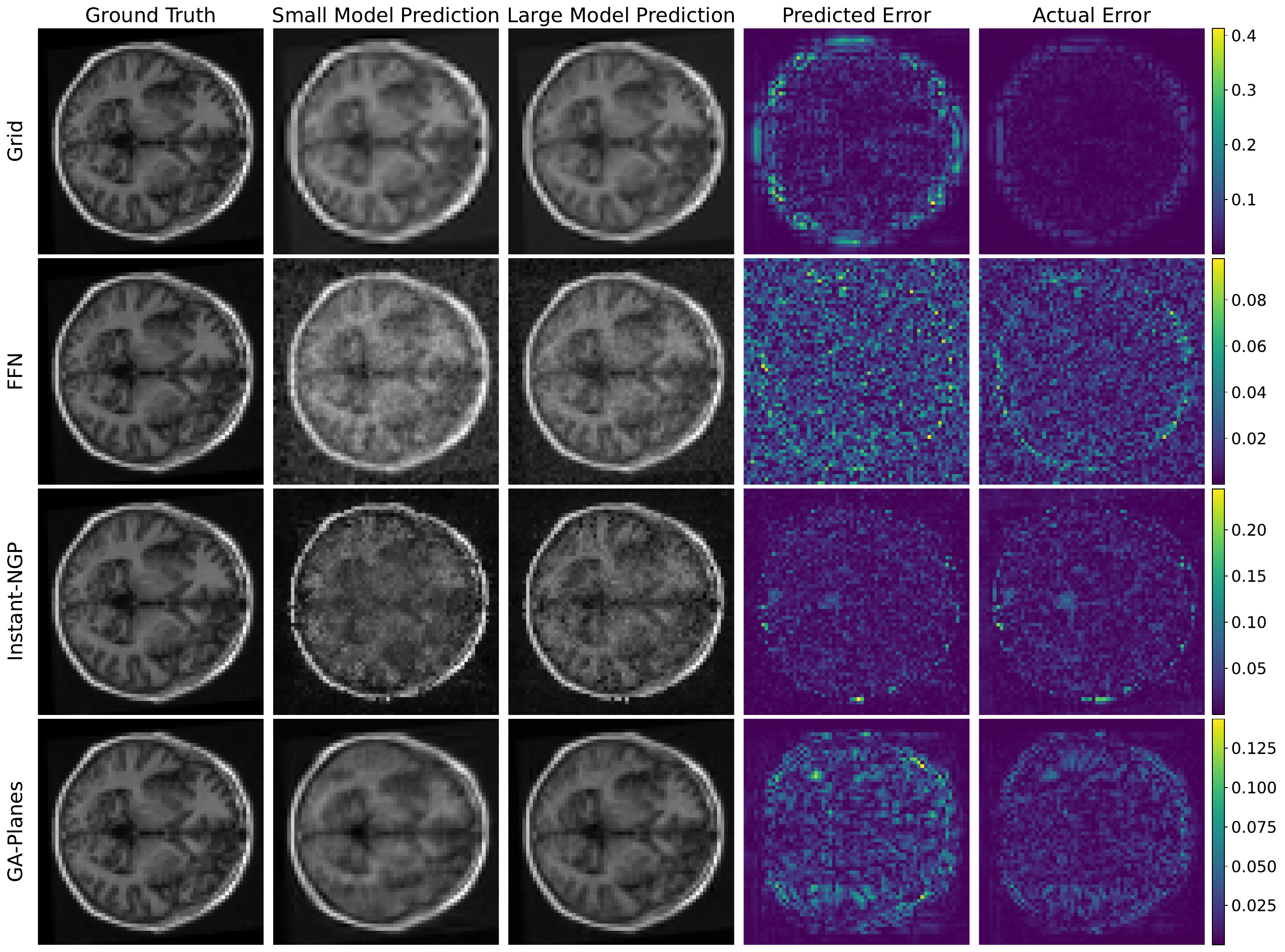}
\vskip 1mm
\begin{minipage}{0.98\textwidth}
    \centering
    \caption{\textbf{Local error heatmaps for 3D MRI volume reconstruction.} 
    Each row corresponds to a different architecture, with small and large models at compression levels $d/n \approx 0.2$ and $0.4$, respectively. Each model maps 3D coordinates to signal values and is supervised indirectly through a $k$-space loss on a 3D volume from the ATLAS dataset~\cite{liew2021atlas}. A single axial slice is shown for visualization. Across all four representations, the predicted error heatmap $c|\tilde{g}_{\text{small}} - \tilde{g}_{\text{large}}|$ captures the spatial structure of the actual error heatmap $|\tilde{g}_{\text{small}} - x|$, indicating that reconstruction artifacts can be localized efficiently without ground-truth access, even for 3D inverse problems.}
    \label{fig:3D_mri_heatmaps}
\end{minipage}
\vspace{-7mm}
\end{figure}

\begin{figure}[ht!]
\vskip -2mm
\centering
\includegraphics[width=\figurewidth\linewidth]{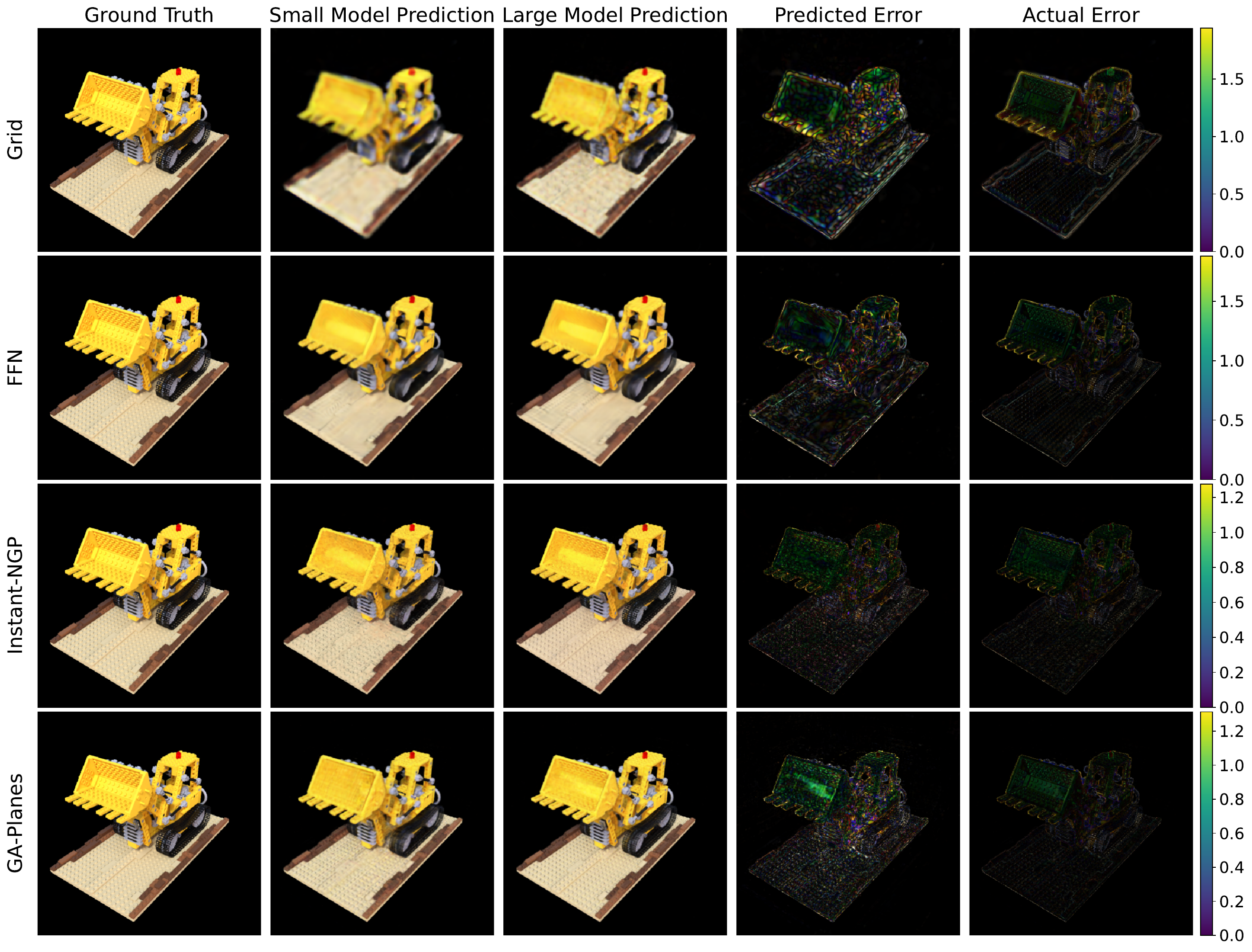}
\vskip 1mm
\begin{minipage}{0.97\textwidth}
    \centering
    \caption{\textbf{Local error heatmaps for 3D radiance field reconstruction.} Each row corresponds to a different architecture, with small and large models having $2.5\times10^5$ and $1.6\times 10^6$ parameters, respectively. Each model represents a 3D radiance field supervised through a photometric loss on training views from the Lego scene in NeRF-Blender~\cite{mildenhall2020nerf}; a single test viewpoint is shown. Across all four representations, the predicted error heatmap $c|\tilde{g}_{\text{small}} - \tilde{g}_{\text{large}}|$ captures the spatial structure of the true error heatmap, indicating that reconstruction artifacts can be localized efficiently at novel viewpoints without access to the ground-truth test view.}
    \label{fig:3D_nerf_heatmaps}
\end{minipage}
\vspace{-7mm}
\end{figure}

\subsection{Calibrating worst-case bounds into practical predictors} \label{subsec:constant-c}
The bounds in Section~\ref{sec:theoreticalres} are provably valid and distribution-free, but conservative: the factor $(d^\star - d)$ reflects the worst-case geometry over all error curves satisfying Assumptions \ref{A1}--\ref{A3}. In practice, however, we find that the local difference between neighboring compressed reconstructions already captures the shape of the true error curve, while the worst-case scale is overly conservative. We therefore use the theorems as a structural template and introduce a single architecture-specific constant $c$ that converts the theoretical bound into a practical, signal-adaptive predictor of compression error. Let $B(d)$ denote the adjacent-model discrepancy appearing in the theoretical bound. For each architecture, we choose the smallest $c$ such that the predictor $cB(d)$ dominates the observed errors on a calibration signal. Once $c$ is determined, the predictor requires only trained model outputs and no ground-truth access, yet generalizes well to new signals. 

This calibration is deliberately minimal: we fit one scalar per architecture, not per signal, dataset, or compression level, and hold it fixed across all subsequent evaluations. As shown in Figures~\ref{fig:task1_error_curves} and~\ref{fig:task2_error_curves}, the resulting predictions closely track the true error curves for all four parameterization models across different datasets and tasks, while remaining orders of magnitude tighter than the worst-case theoretical bounds. These results suggest that the adjacent-model difference identified by our theory is a stable proxy for reconstruction error, and that the looseness of the worst-case bound is architecture-dependent rather than signal-dependent or task-dependent. 

\vspace{-2mm}
\subsection{Estimating the optimization gap}
\label{subsec:opt-gap}
Theorem~\ref{thm:main_opt_gap} introduces the optimization error $\eta_d = \tilde{g}_d - g_d^\star$, the gap between the converged reconstruction and the global optimum at compression level $d/n$. Since $g_d^\star$ is unobservable, we estimate $\|\eta_d\|$ from restart variability. For a fixed signal, architecture, and compression level, we obtain the reconstructions $\tilde{g}_d^{(1)}, \tilde{g}_d^{(2)}, \ldots, \tilde{g}_d^{(m)}$ by training the model $m$ times from independent random initializations with identical hyperparameters and convergence criteria, and compute the maximum pairwise $\ell_2$ distance $h_m = \max_{i\neq j} \|\tilde{g}_d^{(i)} - \tilde{g}_d^{(j)}\|$. Since $h_m$ is nondecreasing in $m$, we add restarts until $|h_m - h_{m-1}|$ falls below a tolerance, and use the resulting plateau value as a practical estimate of a bound on $\|\eta_d\|$ at compression level $d/n$.

The intuition is that variability across independently converged solutions provides an observable proxy for how far practical training may be from the ideal optimum. If at least one restart is close to $g_d^\star$, then $h_m$ controls the optimization error of every other restart via the triangle inequality; if all runs converge to similar outputs, the estimated $\|\eta_d\|$ is small. This estimate cannot be certified as an upper bound on $\|\eta_d\|$ since multiple runs may still converge to the same suboptimal minimum, but it provides a practical, ground-truth-free diagnostic for optimization stability. In our experiments, all reported predictors include this estimated correction through Theorem~\ref{thm:main_opt_gap}, and the error curves in Figures~\ref{fig:task1_error_curves} and~\ref{fig:task2_error_curves} reflect the full optimization-gap-adjusted bound.

\vspace{-3mm}
\subsection{Global error curves}
Figures~\ref{fig:task1_error_curves} and~\ref{fig:task2_error_curves} evaluate the calibrated adjacent-model predictor across direct fitting and inverse problem settings, varying the primary capacity parameter of each architecture while holding other hyperparameters fixed. Across all settings, the predicted error captures the overall shape and scale of the true error curves across architectures, datasets, and tasks. The theoretical bounds remain valid but conservative, whereas the calibrated predictions are substantially tighter and preserve the qualitative decay of the true reconstruction error. The actual error curves are also approximately convex and monotonically decreasing up to optimization variability, providing empirical support for the structural assumptions underlying our theory. These results suggest that adjacent-model differences provide a stable global proxy for reconstruction error in both direct fitting and inverse-problem settings, despite the latter involving indirect supervision and nontrivial forward models. We note that the NeRF experiments extend beyond our formal shallow-decoder theory in two ways: the decoders are deeper, multi-layer nonlinear MLPs (rather than the single-hidden-layer decoders covered by our proofs), and they take additional view-direction inputs to model view-dependent radiance, as is standard in NeRF~\cite{mildenhall2020nerf}. Full architectural details are in Appendix~\ref{appendix:implementation}. These experiments are therefore intended as empirical stress tests of whether the adjacent-model predictor remains useful for more challenging tasks like radiance field reconstruction. Additional details on test signals and global error curves for synthetic signals are provided in Appendices~\ref{appendix:test_signals} and~\ref{appendix:exp:global}.

\vspace{-1mm}
\subsection{Local error heatmaps}
Although our theory is stated for global (Euclidean norm) reconstruction error, the same adjacent-model principle can be applied pointwise to obtain spatially resolved error estimates. Given a small and large model from the same representation family, we compute the predicted error heatmap $c|\tilde{g}_{\text{small}}-\tilde{g}_{\text{large}}|$ using the constant $c$ from Section~\ref{subsec:constant-c}. This estimate requires only the two trained reconstructions and no access to ground truth. For evaluation only, we compare it against the actual error heatmap $|\tilde{g}_{\text{small}}-x|$. 

We evaluate local error prediction for  direct fitting on 2D natural images from DIV2K~\cite{agustsson2017div2k} (Figure~\ref{fig:2D_div2k_heatmaps}), 3D MRI volume reconstruction from $k$-space measurements using the ATLAS dataset~\cite{liew2021atlas} (Figure~\ref{fig:3D_mri_heatmaps}), and 3D radiance field reconstruction of the Lego scene~\cite{mildenhall2020nerf} (Figure~\ref{fig:3D_nerf_heatmaps}). Across all architectures and tasks, regions of high predicted error consistently coincide with high actual error, while low-error regions are correctly identified. We note that these heatmaps carry no formal theoretical guarantee, as our bounds are stated for the global $\ell_2$ norm, not pointwise or region-wise error. However, the consistency of the heatmaps across architectures and tasks suggests that adjacent-model differences are informative not only in aggregate, but also about the spatial locations where a compressed reconstruction is likely to be unreliable. Further implementation details and additional local error heatmaps for other tasks may be found in Appendices~\ref{appendix:implementation} and~\ref{appendix:exp:local}.

\vspace{-2mm}
\section{Discussion} \label{sec:disc}
In this work, we present, to our knowledge, the first framework for non-asymptotic, signal-specific error bounds for compressive signal parameterizations that are computable without access to ground truth. Under three structural conditions on the optimal error curve, which we show hold for representative signal parameterizations, we prove that the reconstruction error at a given model size is bounded by the difference between compressed reconstructions at adjacent model sizes. We complement our theory with a simple scalar adaptation that yields predicted error curves closely tracking the true error across diverse applications. We further find empirically that the same local-comparison principle produces informative pixel-level error heatmaps, despite our theory being stated only for global error.

\vspace{-3mm}
\paragraph{Limitations and future work.} Our theoretical guarantees rely on structural assumptions about the optimal error curve that we verify only for the model families studied here, each using shallow, scalar-output MLP decoders. Extending these results to deeper, vector-output decoders and a broader set of architectures and input encodings is an important direction for future work. Our scalar adaptation from worst-case bounds to practical predictions is estimated via a simple heuristic calibration; a more principled calibration procedure could yield tighter error estimates. Finally, our theoretical results bound global reconstruction error; the local heatmaps remain empirical error estimates rather than proven error bounds. Extending our main theorems and assumptions to local seminorms or region-of-interest metrics would provide theoretical grounding for the spatially resolved predictions that our experiments suggest are informative, and would strengthen the connection to safety-critical applications where localized compression artifacts can matter more than average error.

\bibliography{ref_arxiv}
\bibliographystyle{unsrtnat}
\clearpage
\appendix

\section*{Appendices}
\paragraph{Norm notation.} Throughout the appendices, we use $L^p$ to denote the function-space norms on continuous domains, consistent with the approximation theory literature~\cite{barron1993universal, siegel2020nnupperbound, achour2022nnlowerbound}, whereas $\ell_p$ denotes the finite-dimensional norm on sampled signals used in the main text. On the finite sample grid used in our experiments, the $L^2$ and $\ell_2$ norms coincide up to a fixed normalization, so the bounds we adopt translate directly to the $\ell_2$ norms used in the main text.
\section{Related Work}
We provide here an expanded discussion of the four lines of prior work summarized in Section~\ref{sec:relworks}.

\subsection{Representative approaches to compressive parameterizations}
\label{appendix:relworks:1}
Compressive, continuous signal representations have found broad application in computational imaging tasks where memory and compute are tightly constrained, including accelerated MRI~\cite{ong2020extrememri, heckel2024deeplearningmri,lustig2007compressedsensing, molaei2023inrmedicalimaging}, robotic perception~\cite{wang2025nerfroboticssurvey, araujo2024robotoicperceptiohn, zhu2022niceslam, sucar2021imap}, remote sensing~\cite{gunasheela2018satellite, gomes2025geospatialcompression, louys1999astronomical, offringa2016compression}, and edge-constrained imaging~\cite{ngo2025dnnedge, yin2025surveyedge, delprete2025optimizing}. 
Recent architectures for these tasks have converged around three design paradigms: coordinate-based neural networks~\cite{chen2021learning, lindell2021bacon, saragadam2023wire, sitzmann2020siren, tancik2020ffn, strumpler2022implicit}, trainable spatial encodings~\cite{muller2022ingp}, and tensor-factorized models~\cite{chen2022tensorf,fridovichkeil2023kplanes,sivgin2024gaplanes}. Fourier Feature Networks (FFNs)~\cite{tancik2020ffn} preprocess input coordinates with a sinusoidal feature lifting before passing them to a ReLU MLP, transforming the effective neural tangent kernel into a stationary kernel with tunable bandwidth and improving the learning of high-frequency structures. Instant-NGP~\cite{muller2022ingp} stores learnable features in a multi-resolution hash table indexed by input coordinates, and decodes the resulting concatenated multi-scale features with a lightweight MLP. This multi-resolution construction helps the decoder disambiguate hash collisions, enabling high fidelity within a compact parameter budget. Geometric Algebra Planes (GA-Planes)~\cite{sivgin2024gaplanes} combines line, plane, and volume features with a neural decoder, and admits convex and semi-convex optimization formulations that in the 2D setting are provably equivalent to a (masked) low-rank plus low-resolution matrix factorization. Together, these models span three qualitatively different compression mechanisms---spectral lifting, multi-resolution feature sharing, and structured low-dimensional factorization---and approximately sample the current Pareto frontier of model size, expressiveness, and optimizability~\cite{sivgin2024gaplanes}. We therefore focus our theoretical and experimental analysis on these three models as well as an interpolated Grid baseline \cite{kim2025gridsoutperforminrs}.

\subsection{Rate-distortion theory and compression bounds}
\label{appendix:relworks:2}
Classical rate-distortion theory characterizes the fundamental tradeoff between compression rate and reconstruction fidelity for stochastic sources, providing a lower bound on the distortion achievable at any given bit budget~\cite{shannon1959probability, cover2006elements}. A growing line of work has brought rate-distortion ideas into implicit neural representation-based compression pipelines for images, volumes, and video, typically by combining trained neural fields with quantization, entropy coding, or related coding mechanisms~\cite{strumpler2022implicit, chen2021nerv, dai2025implicit, guo2023bayesianinrs}. A complementary direction studies the model size required for lossless neural representation: Han et al.~\cite{han2025losslessinr} derive explicit upper bounds on the number of parameters needed for an INR to achieve a given bit precision on a discrete signal, showing that this threshold grows exponentially with bit depth. These works make important progress toward INR-based compression, but their primary emphasis remains codec or representation design, and evaluation is largely empirical rather than oriented around signal-specific theoretical guarantees on representation error. In particular, classical rate-distortion theory assumes a known or estimable probability distribution over sources~\cite{cover2006elements}, whereas our goal is to bound the representation error of a single signal at finite model size without distributional assumptions. Moreover, practical neural codecs generally rely on nondifferentiable operations such as quantization and entropy coding, while the representations we consider are continuously parameterized and fully differentiable throughout optimization. To our knowledge, no prior work provides non-asymptotic, signal-specific error bounds for compressive representations as a function of model size, that are both theoretically justified and computable without access to ground truth. Our framework addresses this gap by replacing distributional assumptions with three structural properties of the optimal error curve, which we verify for each architecture we study. 

\subsection{Approximation rates for neural networks}
\label{appendix:relworks:3}
A substantial body of theoretical work characterizes how accurately neural networks of fixed size can approximate functions drawn from classical function spaces. Early results of Barron~\cite{barron1993universal} established dimension-independent $L^2$ rates for two-layer networks approximating functions with bounded spectral norm. Siegel and Xu~\cite{siegel2020nnupperbound} generalized these results to broader activation families, providing approximation rates for polynomially decaying non-sigmoidal activations and any bounded integrable activations, and characterized how the error scales with network width and depth. Achour et al.~\cite{achour2022nnlowerbound} complemented these upper bounds with tight lower bounds for piecewise-polynomial feed-forward networks on H\"{o}lder balls and multivariate monotone functions. A structural implication shared by both works is that the approximation error of an MLP is bounded above by a convex, decreasing function of the number of parameters, a key property that we leverage in our main theorem. In particular, we adopt the upper bound of~\citet{siegel2020nnupperbound} and the lower bound of~\citet{achour2022nnlowerbound}, adapted to account for the Fourier feature embedding of FFNs, as the convex envelope for the error curve in our proof of Theorem \ref{thm:model-specific}.

\subsection{Universal approximation for neural networks}
\label{appendix:relworks:4}
The classical Universal Approximation Theorem~\cite{hornik1989multilayer} guarantees that a sufficiently wide single-hidden-layer MLP can approximate any continuous function on a compact domain, but provides no constructive bound on the width required. Park et al.~\cite{park2021minimumwidth} resolved this question by identifying $\max\{d_x + 1, d_y\}$ as the exact minimum-width threshold for universal approximation of ReLU networks on $\mathbb{R}^{d_x}$, ruling out any strictly smaller width and establishing this threshold as a fundamental architectural limit. Here $d_x$ denotes the input dimension and $d_y$ the output dimension (number of scalar signal channels predicted by the decoder). Kim et al.~\cite{kim2024tighterminimumwidth} further refined this guarantee for the practically relevant compact-domain setting, proving that the minimum width for $L^p$ approximation of $L^p$ functions on $[0,1]^{d_x}$ drops to $\max\{d_x, d_y, 2\}$ for ReLU-like activations, including ReLU, GELU, and Softplus. These results are the key ingredient in our proof that all three compressive representations with MLP decoders we consider satisfy Assumption \ref{A3}: the existence of a finite model size at which the optimal representation error vanishes. They are invoked most directly in the proof for FFNs, where the coordinate domain is normalized to $[0,1]^k$ and the effective decoder is a ReLU MLP composed with a Fourier feature embedding.

\section{Batched Global Error Bound}
\label{appendix:batched}
In practice, estimating $\|g_d^\star - g_{d+1}^\star\|$ from a single adjacent-size pair of models can be sensitive to training instabilities. The following theorem addresses this by averaging the measured gap over a contiguous block of $b$ compression increments, at the cost of $b$ additional model fits.
\begin{theorem}[Batched Global Error Bound]
\label{thm:main_batched}
Under Assumptions \ref{A1} - \ref{A3}, for any parameter budget $d \leq d^\star$ and any block size $1 \leq b \leq d^\star - d$, the compression error is bounded by
\begin{align}
    \|x - g_d^\star\| \leq (d^\star - d)\cdot \frac{1}{b}\sum_{i=1}^{b} \|g_{d+i-1}^\star- g_{d+i}^\star\|.
\end{align}
\end{theorem}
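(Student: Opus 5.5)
Write $e_k \coloneq \|x - g_k^\star\|$ and $\Delta_k \coloneq e_k - e_{k+1}$. The plan mirrors the proof of Theorem~\ref{thm:main}, with the first increment replaced by an average over the first $b$ increments. By Assumption~\ref{A1}, every $\Delta_k \ge 0$. By Assumption~\ref{A2}, the sequence $\Delta_d, \Delta_{d+1}, \dots$ is nonincreasing. Assumption~\ref{A3} together with monotonicity gives $e_{d^\star} = 0$. Telescoping, exactly as in Theorem~\ref{thm:main}, then yields
\[
e_d = \sum_{i=1}^{d^\star - d} \Delta_{d+i-1}.
\]

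The key step is to show that the average of the whole sum is at most the average of its first $b$ terms. That is, for a nonincreasing nonnegative sequence $a_1 \ge a_2 \ge \dots \ge a_N$ with $N = d^\star - d$ and $1 \le b \le N$,
\[
\frac{1}{N}\sum_{i=1}^{N} a_i \;\le\; \frac{1}{b}\sum_{i=1}^{b} a_i .
\]
This is the standard fact that prefix averages of a nonincreasing sequence are nonincreasing in the prefix length. I will prove it directly. Each tail term satisfies $a_j \le a_b \le \frac{1}{b}\sum_{i\le b} a_i$ for $j > b$. Hence
\[
\sum_{i=1}^N a_i \le \sum_{i\le b} a_i + (N-b)\,\frac{1}{b}\sum_{i\le b} a_i = \frac{N}{b}\sum_{i\le b} a_i .
\]
Applying this with $a_i = \Delta_{d+i-1}$ gives
\[
e_d \le (d^\star - d)\,\frac{1}{b}\sum_{i=1}^{b} \Delta_{d+i-1}.
\]

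Finally, I bound each $\Delta_{d+i-1} = \|x - g_{d+i-1}^\star\| - \|x - g_{d+i}^\star\|$ by $\|g_{d+i-1}^\star - g_{d+i}^\star\|$ using the reverse triangle inequality, as in Theorem~\ref{thm:main}. This yields the stated bound.

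The only step requiring care is the averaging inequality. It needs $b \le d^\star - d$ so that all $b$ block terms lie inside the telescoping range, and this is exactly the hypothesis in the statement. No other obstacle is expected. The case $b=1$ recovers Theorem~\ref{thm:main} as a consistency check.
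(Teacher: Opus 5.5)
Your proof is correct and follows the paper's argument: telescope the one-step decrements, use Assumption~\ref{A2} to show the full sum is at most $(d^\star-d)$ times the average of the first $b$ decrements, and finish with the reverse triangle inequality. The only difference is that you bound every tail term directly by $a_b \le \frac{1}{b}\sum_{i\le b} a_i$. The paper instead partitions the tail into blocks of size $b$, compares each full block termwise to the first block, and uses your argument only for the final partial block; your version is shorter and avoids the paper's separate $B=1$ case.
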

\begin{proof}
We prove the batched version by averaging the one-step decreases of the optimal error curve over the first block of size $b$, and then using discrete convexity to show that every subsequent block has no larger average decrease. Define the one-step improvement
\[
\delta_i := \|x - g_i^\star\| - \|x - g_{i+1}^\star\|, \qquad i = d, d+1, \ldots, d^\star-1.
\]
By Assumption~\ref{A1}, every $\delta_i \geq 0$, and by Assumption~\ref{A2}, the sequence $\delta_d \geq \delta_{d+1} \geq \cdots \geq \delta_{d^\star-1} \geq 0$ is non-increasing. Since $\|x - g_{d^\star}^\star\| = 0$ by Assumption~\ref{A3}, the error telescopes as
\begin{equation}\label{eq:telescope_delta}
\|x - g_d^\star\| = \sum_{i=1}^{d^\star-d} \delta_{d+i-1}.
\end{equation}
We partition the $d^\star-d$ indices into $B = \lceil (d^\star-d)/b \rceil$ consecutive blocks: the first $B-1$ blocks each have size $b$, and the last block has size $b' = d^\star-d-(B-1)b \leq b$. If $B=1$, then the partition consists only of the first block, so the block-sum argument will reduce to the telescoping identity $\|x- g_d^\star\| = \sum_{i=1}^b \delta_{d+i-1}$, and the stated bound follows directly by applying the triangle inequality to each $\delta_{d+i-1}$. We thus consider $B > 1$ and bound each block's contribution separately, using the sum of the first block as a common majorant.

\paragraph{Blocks $k = 1, \ldots, B-1$ (size $b$).}
For the first block $k=1$, the inequality $\sum_{i=1}^{b} \delta_{d+i-1}
  \leq \sum_{i=1}^{b} \delta_{d+i-1}$ holds trivially. For each subsequent block $k \geq 2$, monotonicity of $\delta$ gives $\delta_{d+(k-1)b+i-1} \leq \delta_{d+i-1}$ for each $i = 1,\ldots,b$, since $(k-1)b \geq b \geq 1$. Summing over $i$,
\[
\sum_{i=1}^{b} \delta_{d+(k-1)b+i-1}
  \leq \sum_{i=1}^{b} \delta_{d+i-1}.
\]

\paragraph{Last block (size $b' \leq b$).}
Every index in this block satisfies $d+(B-1)b+i-1 \geq d+b$ for $B \geq 2$, so each summand obeys $\delta_{d+(B-1)b+i-1} \leq \delta_{d+b-1}$. Since $\delta_{d+b-1}$ is the minimum of the first block, it is at most the block average:
\[
\sum_{i=1}^{b'} \delta_{d+(B-1)b+i-1}
  \leq b'\,\delta_{d+b-1}
  \leq b' \cdot \frac{1}{b}\sum_{i=1}^{b} \delta_{d+i-1}.
\]

Substituting these bounds into~\Cref{eq:telescope_delta} yields
\begin{align*}
\|x - g_d^\star\|
  &= \sum_{k=1}^{B-1}\sum_{i=1}^{b} \delta_{d+(k-1)b+i-1}
   + \sum_{i=1}^{b'} \delta_{d+(B-1)b+i-1} \\
  &\leq (B-1)\sum_{i=1}^{b} \delta_{d+i-1}
   + b' \cdot \frac{1}{b}\sum_{i=1}^{b} \delta_{d+i-1} \\
  &= \big[(B-1)b + b'\big] \cdot \frac{1}{b}\sum_{i=1}^{b} \delta_{d+i-1} \\
  &= (d^\star-d) \cdot \frac{1}{b}\sum_{i=1}^{b} \delta_{d+i-1},
\end{align*}
where the last line follows from our construction $(B-1)b + b' = d^\star-d$. Finally, the triangle inequality gives $\delta_{d+i-1} = \|x - g_{d+i-1}^\star\| - \|x - g_{d+i}^\star\| \leq \|g_{d+i-1}^\star - g_{d+i}^\star\|$ for each $i$, so
\[
\|x - g_d^\star\|
  \leq (d^\star-d)\cdot\frac{1}{b}\sum_{i=1}^{b}\|g_{d+i-1}^\star - g_{d+i}^\star\|. \qedhere
\]
\end{proof}
Setting $b=1$ recovers Theorem \ref{thm:main}. Empirically, averaging over a small block of nearby model pairs yields smoother, tighter, and more reliable bound estimates while preserving the same local-to-global interpretation. We therefore use the single-step version as the core theorem and reserve the batched version as a practical refinement requiring training a few additional models at different sizes.

\section{Proofs of Theorems}
\label{appendix:proofs}
\paragraph{Notation for proofs.} 
Proofs follow the same notational conventions as the main text.
For a fixed parameter budget $d$, let $g_d^\star$ denote the globally optimal representation, $\tilde{g}_d$ denote the output of a model trained to practical convergence, and $\eta_d = \tilde{g}_d - g_d^\star$ be the optimization error. 

\subsection{Proof of \Cref{remark:squared-error}}
\begin{proof}
    Suppose Assumption~\ref{A2} holds only for the squared  error, i.e., $\|x-g_d^\star\|^2 - \|x - g_{d+1}^\star\|^2 \geq \|x-g_{d+1}^\star\|^2 - \|x - g_{d+2}^\star\|^2$ for all $d \geq 0$. Writing $\|x - g_d^\star\|^2$ as a telescoping sum and applying Assumption~\ref{A3} yields
    \[\|x - g_d^\star\|^2 = \sum_{i=1}^{d^\star - d} \|x - g_{d+i-1}^\star\|^2 - \|x - g_{d+i}^\star\|^2.\]
    It follows from the assumption that each summand is no larger than the first, so
    \[\|x - g_d^\star\|^2 \leq (d^\star - d)(\|x - g_{d}^\star\|^2 - \|x - g_{d+1}^\star\|^2).\]
    Factoring the difference of squares and applying the triangle inequality $\|x - g_d^\star\| - \|x - g_{d+1}^\star\| \leq \|g_d^\star - g_{d+1}^\star\|$, we obtain 
    \begin{align*}
        \|x - g_{d}^\star\|^2 - \|x - g_{d+1}^\star\|^2 &= (\|x - g_{d}^\star\| - \|x - g_{d+1}^\star\|)(\|x - g_{d}^\star\| + \|x - g_{d+1}^\star\|) \\
        & \leq \|g_d^\star - g_{d+1}^\star\|(\|x - g_{d}^\star\| + \|x - g_{d+1}^\star\|) \\
        & \leq \|g_d^\star - g_{d+1}^\star\|(\|x - g_{d}^\star\| + \|x - g_{d}^\star\|) = 2\|g_d^\star - g_{d+1}^\star\| \|x - g_{d}^\star\|,
    \end{align*}
    where the third line uses Assumption~\ref{A1}. Combining the two inequalities gives
    \[\|x - g_d^\star\|^2 \leq 2(d^\star - d)\|g_d^\star - g_{d+1}^\star\| \|x - g_{d}^\star\|.\]
    If $\|x - g_{d}^\star\|$ = 0, the bound is trivially true; otherwise, dividing both sides by $\|x - g_{d}^\star\|$ yields
    \[\|x - g_d^\star\| \leq 2(d^\star - d)\|g_d^\star - g_{d+1}^\star\|.\]
\end{proof}

\subsection{Proof of Theorem~\ref{thm:main_opt_gap}}
\begin{proof}
By the triangle inequality and the identity $\tilde{g}_d = g_d^\star + \eta_d$, we have
\[\|x - \tilde{g}_d\| = \|x - g_d^\star - \eta_d\| \leq \|x - g_d^\star\| + \|\eta_d\|.\]
    
From Theorem~\ref{thm:main}, we have
\[
\|x - g_d^\star\| \le (d^\star - d)\|g_d^\star - g_{d+1}^\star\|.
\]
Therefore,
\[
\|x - \tilde{g}_d\| \le (d^\star - d)\|g_d^\star - g_{d+1}^\star\| + \|\eta_d\|.
\]
Substituting $g_d^\star = \tilde{g}_d - \eta_d$ and $g_{d+1}^\star = \tilde{g}_{d+1} - \eta_{d+1}$, we obtain
\begin{align*}
    \|x - \tilde{g}_d\|
    &\le (d^\star - d)\|\tilde{g}_d - \eta_d - (\tilde{g}_{d+1} - \eta_{d+1})\| + \|\eta_d\| \\
    &\le (d^\star - d)\Big( \|\tilde{g}_d - \tilde{g}_{d+1}\| + \|\eta_{d+1} - \eta_d\| \Big) + \|\eta_d\| \\
    &= (d^\star - d)\|\tilde{g}_d - \tilde{g}_{d+1}\|
      + (d^\star - d)\|\eta_{d+1} - \eta_d\|
      + \|\eta_d\|,
\end{align*}
which is exactly the claimed bound.
\end{proof}

\subsection{Proof of Theorem~\ref{thm:main_envelope}}
\begin{proof}
Define the slack
\[
\xi_d := U(d) - \|x - g_d^\star\|.
\]
Since $L(d) \le \|x - g_d^\star\| \le U(d)$ by assumption, it follows that
\[
0 \le \xi_d \le U(d) - L(d).
\]

Since $U$ is monotone, discretely convex, and satisfies $U(d^\star) = 0$, the same telescoping and discrete-convexity argument used in the proof of Theorem~\ref{thm:main} applies directly to $U$. In particular,
\[
U(d)
= \sum_{i=1}^{d^\star-d} \bigl(U(d+i-1) - U(d+i)\bigr)
\le (d^\star - d)\bigl(U(d) - U(d+1)\bigr).
\]
It therefore remains to bound the one-step decrement $U(d) - U(d+1)$. Since $\|x - g_{d+1}^\star\| \le U(d+1)$, we have
\begin{align*}
U(d) - U(d+1)
&\le U(d) - \|x - g_{d+1}^\star\| \\
&= \bigl(U(d) - \|x - g_d^\star\|\bigr)
   + \bigl(\|x - g_d^\star\| - \|x - g_{d+1}^\star\|\bigr) \\
&= \xi_d + \|x - g_d^\star\| - \|x - g_{d+1}^\star\| \\
&\le \xi_d + \|g_d^\star - g_{d+1}^\star\|,
\end{align*}
where the last step follows from the triangle inequality. Substituting this bound into the previous inequality and using $\xi_d \le U(d) - L(d)$ yields
\[
U(d)
\le (d^\star - d)\bigl(\|g_d^\star - g_{d+1}^\star\| + U(d) - L(d)\bigr).
\]
Since $\|x - g_d^\star\| \le U(d)$, we can conclude that
\[
\|x - g_d^\star\|
\le (d^\star - d)\bigl(\|g_d^\star - g_{d+1}^\star\| + U(d) - L(d)\bigr).
\]

We now incorporate the optimization gap. By the triangle inequality and the identity $\tilde{g}_d = g_d^\star + \eta_d$,
\[
\|x - \tilde{g}_d\| \le \|x - g_d^\star\| + \|\eta_d\|.
\]
Applying the bound just established for $\|x - g_d^\star\|$ and substituting
\[
g_d^\star = \tilde{g}_d - \eta_d,
\qquad
g_{d+1}^\star = \tilde{g}_{d+1} - \eta_{d+1},
\]
we obtain
\begin{align*}
\|g_d^\star - g_{d+1}^\star\|
&= \|(\tilde{g}_d - \tilde{g}_{d+1}) - (\eta_d - \eta_{d+1})\| \\
&\le \|\tilde{g}_d - \tilde{g}_{d+1}\| + \|\eta_{d+1} - \eta_d\|,
\end{align*}
exactly as in the proof of Theorem~\ref{thm:main_opt_gap}. Combining the preceding inequality gives
\[
\|x - \tilde{g}_d\|
\le (d^\star - d)\bigl(\|\tilde{g}_d - \tilde{g}_{d+1}\| + \|\eta_{d+1} - \eta_d\| + U(d) - L(d)\bigr) + \|\eta_d\|,
\]
which proves the result. 
\end{proof}
\subsection{Proof of Theorem~\ref{thm:model-specific}: Grid}
We prove the Grid representation case by indexing the model family by its spatial resolution rather than by raw parameter count. Let the target signal $x$ be sampled on an $N^D$ lattice (which can be done losslessly for any bandlimited continuous signal, for large enough $N$), and let a Grid model of per-axis resolution $r$ use $d_r = r^D$ parameters, one value at each coarse lattice point. We assume ideal sinc interpolation from the coarse $r^D$ lattice to the fine $N^D$ lattice. Under this interpolation, the set of signals representable by a resolution-$r$ grid is exactly the subspace $\mathcal{S}_r$ of discrete signals whose Fourier coefficients are supported on the hypercube $\{\mathbf{k}: \|\mathbf{k}\|_\infty \leq \lfloor r/2\rfloor\}$. Consequently, the optimal Grid reconstruction $g_{d_r}^\star$ is the orthogonal projection of $x$ onto $\mathcal{S}_r$, and the optimal squared error decomposes cleanly via Parseval's theorem:
\[\|x - g_{d_r}^\star\|^2 = \sum_{\|\mathbf{k}\|_\infty > \lfloor r/2 \rfloor} |X[\mathbf{k}]|^2, \]
where $X[\mathbf{k}]$ denotes the D-dimensional discrete Fourier transform of $x$. This spectral characterization of the optimal error is the foundation for all three lemmas below. 

\begin{gridlemma}[Grid satisfies Assumption~\ref{A1}]\label{lem:grid-a1}
For each per-axis resolution $r \geq 0,$
\[\|x - g_{d_r}^\star\| \geq \|x - g_{d_{r+1}}^\star\|. \]
\end{gridlemma}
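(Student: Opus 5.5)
The plan is to use the spectral characterization established just before the lemma: under ideal sinc interpolation, the optimal resolution-$r$ Grid reconstruction is the orthogonal projection of $x$ onto the subspace $\mathcal{S}_r$ of signals whose DFT is supported on the hypercube $H_r := \{\mathbf{k} : \|\mathbf{k}\|_\infty \leq \lfloor r/2 \rfloor\}$. By Parseval, the squared error equals the spectral energy outside $H_r$. The argument then reduces to set containment of frequency supports. Since $\lfloor (r+1)/2 \rfloor \geq \lfloor r/2 \rfloor$, we have $H_r \subseteq H_{r+1}$, hence $\mathcal{S}_r \subseteq \mathcal{S}_{r+1}$.

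From the containment there are two equivalent routes, and I would present the spectral one with the variational one as a remark. The spectral route writes
\[
\|x - g_{d_r}^\star\|^2 - \|x - g_{d_{r+1}}^\star\|^2 = \sum_{\mathbf{k} \in H_{r+1} \setminus H_r} |X[\mathbf{k}]|^2 \;\geq\; 0 ,
\]
a sum of nonnegative terms. Taking square roots, which is monotone on $\mathbb{R}_{\geq 0}$, gives the claim. The variational route notes that $g_{d_r}^\star \in \mathcal{S}_r \subseteq \mathcal{S}_{r+1}$ is feasible for the resolution-$(r+1)$ problem, so the minimum over the larger set can be no worse.

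One bookkeeping detail needs care. The lemma is indexed by resolution $r$, whereas Assumption~\ref{A1} is indexed by parameter count $d$. I would note that the Grid family is only defined at budgets $d_r = r^D$, so monotonicity along the feasible sequence is the relevant statement, consistent with the main text's convention of comparing neighboring feasible budgets. Two edge cases should also be checked: $r = 0$, where $\mathcal{S}_0$ is $\{0\}$ or the constants depending on convention (monotonicity holds either way), and consecutive $r$ with equal $\lfloor r/2\rfloor$, where equality holds.

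I expect no real obstacle. The only delicate point is making sure the hypercube description of $\mathcal{S}_r$ is nested for both even and odd $r$, including Nyquist-frequency conventions for even $r$. This is handled by the floor-function inequality above.
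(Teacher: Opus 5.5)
Your proposal is correct and follows essentially the same argument as the paper: the paper also uses $\lfloor (r+1)/2\rfloor \geq \lfloor r/2\rfloor$ to get $\mathcal{S}_r \subseteq \mathcal{S}_{r+1}$. It then writes the difference of squared errors as the nonnegative Fourier energy in the shell $\lfloor r/2\rfloor < \|\mathbf{k}\|_\infty \leq \lfloor (r+1)/2\rfloor$ and takes square roots. Your variational remark and your edge-case and indexing checks are harmless additions that the paper does not spell out.
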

\begin{proof}
Since $\lfloor (r+1)/2 \rfloor \geq \lfloor r/2 \rfloor$, the bandlimited subspace $\mathcal{S}_r$ is contained in $\mathcal{S}_{r+1}$. The set of frequencies excluded at resolution $r+1$ is therefore a subset of those excluded at resolution $r$, so
\[\|x - g_{d_r}^\star\|^2 - \|x - g_{d_{r+1}}^\star\|^2 = \sum_{\lfloor r/2 \rfloor < \|\mathbf{k}\|_\infty \leq \lfloor (r+1)/2 \rfloor} |X[\mathbf{k}]|^2 \geq 0 .\]
Since the square root function is monotone increasing, the inequality $\|x - g_{d_r}^\star\|^2 \geq \|x - g_{d_{r+1}}^\star\|^2$ implies $\|x - g_{d_r}^\star\| \geq \|x - g_{d_{r+1}}^\star\|$.
\end{proof}

\begin{gridlemma}[Grid satisfies Assumption~\ref{A2} for squared error, under radial spectral decay]\label{lem:grid-a2}
Assume in addition that the target signal $x$ has a radially nonincreasing power spectrum, meaning $|X[\mathbf{k}]|^2 \geq |X[\mathbf{k'}]|^2$ whenever $\|\mathbf{k'}\|_\infty \geq \|\mathbf{kå}\|_\infty$. Then for each per-axis resolution $r \geq 0$, the optimal square error satisfies the discrete diminishing-returns condition:
\[\|x - g_{d_r}^\star\|^2 - \|x - g_{d_{r+1}}^\star\|^2 \geq \|x - g_{d_{r+1}}^\star\|^2 - \|x - g_{d_{r+2}}^\star\|^2 .\]
\end{gridlemma}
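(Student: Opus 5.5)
The plan is to work entirely from the Parseval characterization established just before Lemma~\ref{lem:grid-a1}. I would define the one-step decrement $\Delta_r := \|x - g_{d_r}^\star\|^2 - \|x - g_{d_{r+1}}^\star\|^2$. I would also define the shell energy $E_m := \sum_{\|\mathbf{k}\|_\infty = m} |X[\mathbf{k}]|^2$. The proof of Lemma~\ref{lem:grid-a1} already shows that $\Delta_r$ is the energy of the frequencies with $\lfloor r/2\rfloor < \|\mathbf{k}\|_\infty \le \lfloor (r+1)/2\rfloor$. That band contains at most one $\ell_\infty$-shell, so $\Delta_r \in \{0, E_{\lfloor (r+1)/2\rfloor}\}$. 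The target inequality $\Delta_r \ge \Delta_{r+1}$ therefore reduces to comparing two shell energies, or a shell energy against zero.

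\textbf{Parity bookkeeping.} First I would tabulate the cases. For $r = 2m+1$ odd, the band is the shell $m+1$, so $\Delta_r = E_{m+1}$. For $r = 2m$ even, $\lfloor r/2\rfloor = \lfloor (r+1)/2\rfloor$, so $\Delta_r = 0$. This parity bookkeeping is the first concrete step, and it is also where I expect the main obstacle. At odd $r$, the claim reads $E_{m+1} \ge 0$, which is trivial. At even $r = 2m$, the claim reads $0 \ge E_{m+1}$, which forces the shell energies to vanish. Read literally with consecutive integer resolutions, the inequality therefore only holds in degenerate cases.

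\textbf{Reindexing to make the claim substantive.} To obtain a statement of the intended diminishing-returns type, I would pass to the resolution chain that actually changes the representable subspace at each step. Concretely, I would index by the half-width $m = \lfloor r/2\rfloor$, stepping $r$ over odd values so that each increment adds exactly one shell. In that indexing the claim becomes $E_{m+1} \ge E_{m+2}$ for all $m$. I would state this reindexing explicitly, since it is the form in which the result feeds into \Cref{remark:squared-error}.

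\textbf{Where the spectral hypothesis enters.} The shell comparison is where the radial hypothesis is used, and it is the second place I expect difficulty. The pointwise hypothesis $|X[\mathbf{k}]|^2 \ge |X[\mathbf{k}']|^2$ for $\|\mathbf{k}'\|_\infty \ge \|\mathbf{k}\|_\infty$ bounds each outer coefficient by each inner one. However, shell $m+1$ contains more lattice points than shell $m$: roughly $2D(2m)^{D-1}$ of them. Because of this, an injection from outer to inner frequencies does not exist, and pointwise domination alone does not give $E_{m+1} \ge E_{m+2}$. I would first try to bound $E_{m+2} \le |{\rm shell}_{m+2}| \cdot \min_{{\rm shell}_{m+1}} |X|^2$ and compare this with $E_{m+1} \ge |{\rm shell}_{m+1}| \cdot \min_{{\rm shell}_{m+1}} |X|^2$. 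That comparison loses exactly the cardinality ratio, so it does not close the argument on its own. If it fails, I would read ``radially nonincreasing power spectrum'' as the shell-aggregated statement $E_m \ge E_{m+1}$. This is the natural radial notion for an $\ell_\infty$-indexed grid. Under that reading the conclusion is immediate from the shell decomposition, and I would record it explicitly as the hypothesis being used.
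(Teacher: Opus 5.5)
Your analysis is correct, and both of your objections apply to the paper's own proof, which uses exactly your Parseval shell decomposition. The paper writes the two decrements as the energies of the bands $\lfloor r/2\rfloor<\|\mathbf{k}\|_\infty\le\lfloor (r+1)/2\rfloor$ and $\lfloor (r+1)/2\rfloor<\|\mathbf{k}\|_\infty\le\lfloor (r+2)/2\rfloor$. It then concludes in one sentence that, because each frequency in the earlier band has at least the power of each frequency in the later band, the first energy dominates the second. This misses your parity point: one of the two bands is always empty, so the decrements run $0,E_1,0,E_2,\dots$, and at $r=0$ the claim reads $0\ge E_1$. Under the stated hypothesis every non-constant $x$ has $E_1>0$, since a nonzero coefficient on some shell $m\ge 1$ forces every shell-$1$ coefficient to be at least as large. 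So, taking the paper's $\mathcal{S}_r$ at face value, the lemma as written holds only for constant signals. The degeneracy comes from the paper's description of $\mathcal{S}_r$: at even $r$ it has dimension $(r+1)^D>r^D$, so it cannot be the range of a linear interpolator acting on $r^D$ grid values. The paper's concluding sentence is also exactly the pointwise-to-aggregate inference you reject.

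You should state your second obstacle as a definite failure rather than a contingency. Applying the hypothesis in both directions within a single shell forces $|X[\mathbf{k}]|^2\equiv c_m$ on shell $m$, so $E_m=c_mN_m$ with $N_m=(2m+1)^D-(2m-1)^D$. A single-pixel impulse has $c_m\equiv 1$ and satisfies the hypothesis, yet for $D\ge 2$ its shell energies strictly increase. So even your reindexed claim $E_{m+1}\ge E_{m+2}$ fails under the literal hypothesis. Replacing it by $E_m\ge E_{m+1}$ is therefore a genuinely different assumption, not a reading of the stated one. Under that assumption, together with your odd-$r$ chain, the argument is correct, but it nearly restates its hypothesis.

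If you want a coefficient-wise condition closer to the original, the exact requirement is $c_{m+1}N_{m+1}\ge c_{m+2}N_{m+2}$. In $D=1$, where $N_m=2$ for $m\ge 1$, the stated hypothesis already suffices on the odd chain. In $D=2$, where $N_m=8m$, a power law $c_m\propto m^{-\beta}$ works if and only if $\beta\ge 1$, a range that includes the $\beta\approx 2$ decay typical of the natural images invoked in the paper's subsequent remark.

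In short, there is no gap in your reasoning; the gap is in the statement. What you prove is a corrected lemma, and you should present it as a replacement for the stated one rather than as a proof of it.
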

\begin{proof}
From the projection characterization above, the decrease in optimal squared error from resolution $r$ to resolution $r+1$ is exactly the Fourier energy contained in the shell $\lfloor r/2 \rfloor < \|\mathbf{k}\|_\infty \leq \lfloor (r+1)/2 \rfloor$, or equivalently,
\[\|x - g_{d_r}^\star\|^2 - \|x - g_{d_{r+1}}^\star\|^2 = \sum_{\lfloor r/2 \rfloor < \|\mathbf{k}\|_\infty \leq \lfloor (r+1)/2 \rfloor} |X[\mathbf{k}]|^2.\]
Likewise, the decrease from resolution $r+1$ to resolution $r+2$ is exactly the energy in the next shell $\lfloor (r+1)/2 \rfloor < \|\mathbf{k}\|_\infty \leq \lfloor (r+2)/2 \rfloor$, or equivalently,
\[\|x - g_{d_{r+1}}^\star\|^2 - \|x - g_{d_{r+2}}^\star\|^2 = \sum_{\lfloor (r+1)/2 \rfloor < \|\mathbf{k}\|_\infty \leq \lfloor (r+2)/2 \rfloor} |X[\mathbf{k}]|^2.\]
Under the radial spectral decay assumption on $|X[\mathbf{k}]|^2$, the frequencies in the earlier shell carry at least as much power as frequencies in the later shell, so the first shell energy dominates the second. Hence, the one-step improvement in squared error is nonincreasing with resolution, which is exactly discrete convexity for the squared residual.
\end{proof}

\begin{remark} 
The radial spectral decay condition is a natural regularity assumption that is broadly satisfied by many targets of interest in computational imaging, whose power spectral densities typically follow a power-law or exponential decay with frequency~\cite{vanderschaaf1996modelling, torralba2003statistics, lustig2007compressedsensing, xu2019systematic}. Under this common low-pass structure, the energy captured by successively added frequency shells decreases with resolution, so the squared-error discrete convexity established in Lemma~\ref{lem:grid-a2} matches the expected approximation behavior of Grid representations with sinc interpolation. 
\end{remark}

\begin{gridlemma}[Grid satisfies Assumption~\ref{A3}]\label{lem:grid-a3}
There exists a finite per-axis resolution $r^\star$ such that $\|x - g_{d_{r^\star}}^\star\| = 0$.
\end{gridlemma}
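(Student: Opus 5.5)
The plan is to use the spectral characterization of the optimal Grid error stated just before Lemma~\ref{lem:grid-a1}:
\[
\|x - g_{d_r}^\star\|^2 = \sum_{\|\mathbf{k}\|_\infty > \lfloor r/2 \rfloor} |X[\mathbf{k}]|^2 .
\]
The goal is to choose $r$ so large that this index set is empty. The target signal $x$ is sampled on an $N^D$ lattice, so its $D$-dimensional DFT $X[\mathbf{k}]$ is supported on finitely many frequencies. Using the centered indexing implicit in the hypercube description of $\mathcal{S}_r$, every frequency satisfies $\|\mathbf{k}\|_\infty \leq \lfloor N/2 \rfloor$.

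I will take $r^\star = N$, which gives $d_{r^\star} = N^D = n$. Then $\lfloor r^\star/2 \rfloor = \lfloor N/2\rfloor$, so no frequency $\mathbf{k}$ of the $N^D$ DFT satisfies $\|\mathbf{k}\|_\infty > \lfloor r^\star/2\rfloor$. The sum above is therefore empty and $\|x - g_{d_{r^\star}}^\star\|^2 = 0$. Equivalently, $\mathcal{S}_N$ is the whole space $\mathbb{R}^{N^D}$, so the orthogonal projection of $x$ onto it is $x$ itself. This matches the intuition that a coarse grid at the native resolution is the identity parameterization: sinc interpolation from an $N^D$ lattice onto the same $N^D$ lattice reproduces the samples exactly. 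One can also note that $d^\star = n$ is finite and known analytically, which is the critical budget needed in Theorem~\ref{thm:main}.

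The only delicate step is the even-$N$ bookkeeping at the Nyquist frequency. When $N$ is even, the frequency index $N/2$ coincides with $-N/2$, so the claim that the hypercube $\|\mathbf{k}\|_\infty \le \lfloor N/2\rfloor$ covers every DFT index needs to be stated with that identification made explicit. I will fix the convention that indices are taken in $\{-\lceil N/2\rceil+1,\dots,\lfloor N/2\rfloor\}$, so the full index set is contained in the hypercube and the count of representable frequencies is exactly $N^D$. If one prefers to avoid this, an alternative is to take any $r^\star \geq N+1$: the hypercube then strictly contains all DFT indices, and the error is still zero. Lemma~\ref{lem:grid-a1} shows the error cannot increase with $r$, so the choice between these two options does not affect the conclusion.
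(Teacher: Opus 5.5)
Your proof is correct and reaches the same choice as the paper, $r^\star = N$ with $d^\star = N^D = n$, but it uses a different key step. The paper does not use the Parseval formula for this lemma. It constructs the optimum directly: set each of the $N^D$ grid values equal to the corresponding sample of $x$, and note that ideal sinc interpolation reproduces data at its own nodes. At $r=N$ the interpolation map is therefore the identity on the lattice, and the error is zero. You instead read the result off $\|x-g_{d_r}^\star\|^2=\sum_{\|\mathbf{k}\|_\infty>\lfloor r/2\rfloor}|X[\mathbf{k}]|^2$ by showing that the excluded frequency set is empty, and you mention the interpolation argument only as intuition.

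Your route has one advantage: all three Grid lemmas rest on the same spectral formula, so A3 becomes a one-line corollary of the identity behind Lemmas~\ref{lem:grid-a1} and~\ref{lem:grid-a2}. The cost is that you inherit that formula's bookkeeping. For even $r<N$, the hypercube $\{\|\mathbf{k}\|_\infty\le r/2\}$ contains $(r+1)^D$ distinct frequencies, but the grid has only $r^D$ parameters. For odd $N$, the formula would even predict zero error at $r=N-1$, which a dimension count rules out for generic $x$. Your argument is safe because at $r=N$ the formula is consistent exactly through the identification $N/2\equiv -N/2$, and you correctly make that convention explicit. The paper's direct construction avoids this issue entirely, since it needs only the interpolatory property of sinc at $r=N$.

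One small point: your fallback $r^\star\ge N+1$ is unnecessary once the index convention is fixed. It also steps outside the setup, where the grid is a coarse lattice interpolated up to the $N^D$ lattice. And it inflates $d^\star$, which enters the bounds through the factor $d^\star-d$. So $r^\star=N$ is the right choice.
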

\begin{proof}
At full resolution $r^\star = N$, each grid parameter can be set to the corresponding target signal sample. Since ideal sinc interpolation reproduces the data at its own interpolation nodes, the resulting reconstruction agrees with $x$ at every point of the $N^D$ lattice, yielding a zero optimal reconstruction error. The critical parameter budget is therefore $d^\star = N^D$ total parameters, corresponding to per-axis resolution $r^\star = N$.    
\end{proof}

\subsection{Proof of Theorem~\ref{thm:model-specific}: FFN}
Throughout this subsection, we consider an 
FFN \cite{tancik2020ffn} with Fourier feature embedding dimension $q$, a single-hidden layer ReLU MLP of width $w$, and input coordinates $v \in [0,1]^k$. The Fourier feature embedding is
\[\gamma_q(v)\;=\;\big[\;a_1\cos(2\pi b_1^\top v),\;a_1\sin(2\pi b_1^\top v),\;\dots,\;
a_q\cos(2\pi b_q^\top v),\;a_q\sin(2\pi b_q^\top v)\;\big]^\top \in \mathbb{R}^{2q}\]
where $b_l \sim \mathcal{N}(0, \sigma_b^2I_k)$ and $a_l\sim \mathcal{N}(0, \sigma_a^2)$ for $l=1,\ldots,q$. The FFN output is $g_d(\theta) =\mathrm{MLP}_\theta\big(\gamma_q(v)\big)$, where $d = P(L,w,q) := (L-1)w^2 + (2q + L + 1)w + 1$ denotes the total parameter count. In our analysis, we index the model family by the MLP width $w$ while holding the embedding dimension $q$ and number of hidden layers $L=1$ fixed, unless stated otherwise. This is the regime in which the approximation-rate results from Siegel and Xu~\cite{siegel2020nnupperbound} and the lower-bound result of Achour et al.~\cite{achour2022nnlowerbound} can be applied most cleanly. For a target signal $x \in \mathbb{R}^n$ sampled at coordinates $v_1,\ldots,v_n$, the optimal reconstruction error at parameter budget $d$ is
\[\|x - g_d^\star\| = \min_{g_d\in\mathcal{G}_d}\ \|x-g_d\| = \min_{\theta \in \mathbb{R}^d} \ \|x-g_d(\theta)\|,\]
where $\mathcal{G}_d=\{\,g_d(\theta): \theta \in \mathbb{R}^d\}$ is the set of all outputs realizable by an FFN with parameter budget $d$.

\begin{ffnlemma}[FFN satisfies Assumption~\ref{A1}]\label{lem:ffn-a1}
For all $d \geq 0$, 
\[\|x - g_d^\star\| \geq \|x - g_{d+1}^\star\|.\]
\end{ffnlemma}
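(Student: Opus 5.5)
The plan is a nesting argument: every output realizable with budget $d$ is also realizable with budget $d+1$, so $\mathcal{G}_d \subseteq \mathcal{G}_{d+1}$. Minimizing $\|x-\cdot\|$ over a larger set can only give a smaller or equal value. The only subtlety is that the family is indexed by width $w$, so feasible budgets are $d_w = P(1,w,q) = (2q+2)w+1$. Consecutive feasible budgets are therefore $d_w$ and $d_{w+1}$, and, as the main text allows, a step ``$d \to d+1$'' means a step between neighboring feasible budgets. Alternatively, one could define $\mathcal{G}_d$ for non-feasible $d$ as the realizable set of the largest feasible budget not exceeding $d$; this makes the error curve piecewise constant in $d$, and (A1) for all $d$ again reduces to the width step.

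The key step is the zero-neuron padding construction. Take any width-$w$ network $\mathrm{MLP}_\theta(z) = c^\top \ReLU(Wz + \beta) + c_0$ with $W \in \mathbb{R}^{w\times 2q}$, where the Fourier embedding $\gamma_q$ is fixed across widths (same $a_l, b_l$). Build a width-$(w+1)$ network by appending a row $W_{w+1} = 0$, bias $\beta_{w+1} = 0$, and output weight $c_{w+1} = 0$. The new hidden unit contributes $0\cdot\ReLU(0) = 0$ at every input $\gamma_q(v_i)$. Hence the padded network produces exactly the same output vector $g_{d_w}(\theta) \in \mathbb{R}^n$ on the sample coordinates, which gives $\mathcal{G}_{d_w} \subseteq \mathcal{G}_{d_{w+1}}$. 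In fact only $c_{w+1}=0$ is needed; the other choices are arbitrary.

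It follows that $\inf_{g\in\mathcal{G}_{d_{w+1}}}\|x-g\| \le \inf_{g\in\mathcal{G}_{d_w}}\|x-g\|$. The remaining technical point, which I expect to be the main obstacle, is that the paper writes $\min$, but the infimum over a nonconvex, noncompact image of $\mathbb{R}^d$ need not be attained. I would handle this by stating the lemma with $\|x-g_d^\star\|$ interpreted as an infimum, where the inclusion argument applies verbatim. Alternatively, if a minimizer $g_{d_w}^\star$ exists, its padded copy is feasible at width $w+1$, so $\|x-g_{d_{w+1}}^\star\| \le \|x-g_{d_w}^\star\|$ directly. Either way (A1) follows. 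Finally, the case $d=0$ or $w=0$ is trivial: the width-$0$ network is the constant $c_0$, which is also realizable at width $1$ by the same padding.
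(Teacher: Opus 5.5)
Your proposal is correct and follows essentially the paper's argument. You show $\mathcal{G}_d \subseteq \mathcal{G}_{d'}$ by padding with a hidden neuron that contributes zero to the output, then compare infima. The paper additionally covers the other two ways the budget $P(L,w,q)$ can grow: appending Fourier features with zero incoming weights, and inserting an identity layer after a ReLU to add depth. You restrict to the width-indexed family, which is the indexing the paper adopts for its analysis anyway.
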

\begin{proof}
    It suffices to show that $\mathcal{G}_d \subset \mathcal{G}_{d'}$ whenever $d' > d$. We consider three cases corresponding to the three ways of increasing the parameter count. 

    \textit{Case 1: Increasing the embedding dimension: $d'=P(L,w,q+1)$.} An FFN with $q+1$ embedding frequencies can replicate any FFN with $q$ embedding frequencies by appending the two new Fourier features and setting the corresponding incoming weights in the first affine layer to zero. The realized function is unchanged by this operation, so $\mathcal{G}_d \subset \mathcal{G}_{d'}$. 

    \textit{Case 2: Increasing the hidden width: $d'=P(L,w+1,q)$.} An FFN with hidden width $w+1$ can replicate any width-$w$ FFN by padding the first-layer weight matrix with an all-zero row and extending subsequent weight matrices and bias vectors with zeros to maintain consistent dimensions. The additional neuron contributes zero to the output, and the realized function is identical. Thus, $\mathcal{G}_d \subset \mathcal{G}_{d'}$. 

    \textit{Case 3: Increasing the number of hidden layers: $d'=P(L+1,w,q)$.} An FFN with $L+1$ hidden layers can replicate any $L$-layer FFN 
    by inserting an identity layer between any existing ReLU and the next affine layer. The input to the inserted layer is the output of a preceding ReLU and is therefore nonnegative; setting the inserted layer's weights to the identity and its biases to zero, the subsequent ReLU acts as the identity on this nonnegative input. The realized function remains the same, so $\mathcal{G}_d \subset \mathcal{G}_{d'}$. 

    In all three cases, $\mathcal{G}_d \subset \mathcal{G}_{d'}$, so the infimum of the reconstruction error over the strictly larger set $\mathcal{G}_{d'}$ is no greater than over $\mathcal{G}_d$. Therefore, the optimal error is nonincreasing with parameter budget. 
\end{proof}

\begin{ffnlemma}[FFN admits a convex and monotone upper envelope for Assumption~\eqref{A2}]\label{lem:ffn-a2u}
Fix the Fourier feature embedding dimension $q$ and consider one-hidden-layer ReLU FFNs of width $w$. Let $\Omega \subset \mathbb{R}^{2q}$ be a bounded domain containing the range of the embedding $\gamma_q$, and assume there exists a function $f:\Omega \to \mathbb{R}$ in the Barron space $\mathcal{B}^1$ (i.e. $\int_{\mathbb{R}^{2q}}(1+|\omega|)|\hat{f}(\omega)|d\omega$ is bounded) such that $x_i = f(\gamma_q(v_i))$ for all $i = 1,\ldots,n$. Then the approximation error is bounded above by a monotone, discretely convex envelope of the form
\[U(w) = Kw^{-\frac{1}{2}},\]
or equivalently,
\[U(d) = C_ud^{-\frac{1}{2}}.\]
Here $K > 0$ is a constant depending on the function $f$ and the embedding range $\Omega$, but independent of $w$. The constant $C_u > 0$ is the corresponding constant after rewriting the width-dependent bound in terms of the parameter count $d$. 
\end{ffnlemma}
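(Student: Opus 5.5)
The plan is to reduce the FFN approximation problem to a standard shallow-network approximation statement on the embedded domain $\Omega$, and then check the envelope properties directly.

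\textbf{Step 1: reduction to the embedded domain.} Since $x_i = f(\gamma_q(v_i))$ for every sample, any one-hidden-layer ReLU network $h$ on $\mathbb{R}^{2q}$ of width $w$ yields an FFN output $h(\gamma_q(v_i))$ with the same parameter count $d = P(1,w,q)$. Hence
\[
\|x - g_d^\star\| \leq \Big(\sum_{i=1}^n |f(\gamma_q(v_i)) - h(\gamma_q(v_i))|^2\Big)^{1/2} \leq \sqrt{n}\,\|f - h\|_{L^\infty(\Omega)}.
\]
If the available rate is stated in $L^2(\Omega)$ instead, I would use the appendix's convention that $L^2$ and $\ell_2$ coincide up to a fixed normalization on the sample grid.

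\textbf{Step 2: the approximation rate.} Because $f \in \mathcal{B}^1$ on the bounded domain $\Omega$, I invoke the Barron-type rate of Siegel and Xu~\cite{siegel2020nnupperbound} (equivalently Barron~\cite{barron1993universal}) for ReLU, a non-sigmoidal activation with polynomial decay after the usual reduction. It gives a width-$w$ shallow network $h_w$ with
\[
\|f - h_w\| \leq C(\Omega)\,\|f\|_{\mathcal{B}^1}\, w^{-1/2}.
\]
Absorbing $\sqrt{n}$ or the normalization, $C(\Omega)$, and $\|f\|_{\mathcal{B}^1}$ into $K$ yields $\|x - g_d^\star\| \leq K w^{-1/2} =: U(w)$.

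This step is where the main obstacle lies. The classical results are stated in $L^2(\mu)$ for a probability measure on $\Omega$, or in $L^\infty$ under sup-norm versions. To transfer to the empirical norm at the points $\gamma_q(v_i)$, I would first take $\mu$ to be the empirical measure on those points. This is legitimate because Barron's argument (Maurey's sampling lemma) holds for any probability measure supported in $\Omega$. That choice directly gives the $\ell_2$ bound with constant $\sqrt{n}$.

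\textbf{Step 3: reparametrize and check the envelope properties.} With $L=1$ we have $d = (2q+2)w + 1$. This is affine in $w$, so $w = (d-1)/(2q+2)$ and $w^{-1/2} \leq c_q\, d^{-1/2}$ for $d \geq 2$, which gives $U(d) = C_u d^{-1/2}$. Monotonicity holds because $t \mapsto t^{-1/2}$ is decreasing. Discrete convexity follows because $t^{-1/2}$ is convex on $(0,\infty)$: for a convex function, consecutive differences are nondecreasing, so $U(d) - U(d+1) \geq U(d+1) - U(d+2)$, and the same argument works in $w$.

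One caveat is worth recording. $U$ does not vanish at $d^\star$, yet Theorem~\ref{thm:main_envelope} requires $U(d^\star)=0$. I would handle this by truncating: set $U(d) = 0$ for $d \geq d^\star$, using Assumption~\ref{A3} from part 1 of Theorem~\ref{thm:model-specific}. Replacing the tail by zero keeps the envelope monotone. However, convexity at the junction needs $U(d^\star-1) - 0 \geq 0 - 0$ and, before that, $U(d^\star-2) - U(d^\star-1) \geq U(d^\star-1)$. That second inequality can fail, so I would instead take the largest convex minorant-compatible adjustment, namely the linear interpolation to zero over the last steps, or leave this subtlety to the envelope theorem's application. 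The lemma itself only asserts the monotone convex form $Kw^{-1/2}$.
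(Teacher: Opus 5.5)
Your proof is correct. It differs from the paper's at the one step that carries the real content: passing from a function-space approximation rate to the discrete error at the $n$ sample points.

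\textbf{The paper's route.} The paper applies the Siegel--Xu corollary with $n_0=3$ (the ReLU hat function) in $H^0(\Omega)=L^2(\Omega)$ with Lebesgue measure. This gives $|\Omega|^{1/2}C\sqrt{3}\,\|f\|_{\mathcal{B}^1}w^{-1/2}$. The paper then moves to the $\ell_2$ error by invoking its convention that $L^2$ and $\ell_2$ agree up to a fixed normalization. From there, convexity of $w^{-1/2}$ and the affine relation $d=(2q+2)w+1$ finish the argument, exactly as in your Step 3.

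\textbf{Your route.} You run the Barron/Maurey argument in $L^2(\mu)$ with $\mu$ the empirical measure on the embedded points $\gamma_q(v_i)$. This yields the $\ell_2$ bound directly, with an explicit $\sqrt{n}$ absorbed into $K$.

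\textbf{Comparison.} Your route is the more rigorous one. A Lebesgue $L^2(\Omega)$ bound on $\Omega\subset\mathbb{R}^{2q}$ gives no control over values on the image of $\gamma_q$. For $k<2q$ that image is already Lebesgue-null, and the $n$ sample points certainly are. So the paper's normalization step is a convention, not a derivation. By contrast, Barron's theorem is stated for an arbitrary probability measure on a bounded set. The Maurey sampling step in Siegel--Xu likewise needs only a uniform bound on the dictionary in the chosen Hilbert norm. The paper's route gains only the convenience of quoting a ready-made statement with explicit constants.

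You are also more careful in Step 3. You write $w^{-1/2}\le c_q\,d^{-1/2}$ as an inequality, whereas the paper presents $Kw^{-1/2}=K\sqrt{2q+2}\,(d-1)^{-1/2}$ as if it were exactly of the form $C_u d^{-1/2}$.

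\textbf{Your caveat.} You are right that $C_u d^{-1/2}$ does not vanish at $d^\star$, while Theorem~\ref{thm:main_envelope} requires $U(d^\star)=0$; the paper's lemma does not address this. However, your proposed repair, interpolating linearly to zero over the last few steps, is not justified. It lowers the envelope below $C_u d^{-1/2}$ near $d^\star$, where nothing else is known to bound the true error, so it may no longer be an upper envelope. As you note, this issue lies outside what the lemma itself asserts.
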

\begin{proof}
    Let $\mathcal{F}_w^{2q}(\sigma)$ be the class of width-$w$ single-hidden-layer ReLU networks on $\mathbb{R}^{2q}$, where $\sigma$ denotes the ReLU activation function. Following the norm convention stated at the beginning of the appendix, $\|f - f_w\|_{L^2(\Omega)}$ agrees with $\|x-g_w(\theta)\|_{\ell_2}$ up to a fixed sample-grid normalization. It follows that the optimal error at width $w$ satisfies
    \[\min_{\theta} \ \|x-g_w(\theta)\| \propto \inf_{f_w \in \mathcal{F}_{w}^{2q}(\sigma)}\|f - f_w\|_{L^2(\Omega)}.\]
    Applying Corollary~\ref{cor:siegelandxu} (see below) with $n_0 = 3$ for the ReLU activation function~\cite{siegel2020nnupperbound}, the $L^2$ approximation error satisfies 
    \[\inf_{f_w \in \mathcal{F}_{w}^{2q}(\sigma)}\|f - f_w\|_{L^2(\Omega)} = \inf_{f_w \in \mathcal{F}_{w}^{2q}(\sigma)}\|f - f_w\|_{H^0(\Omega)} \leq |\Omega|^{\frac{1}{2}}C(c, 0, \text{diam}(\Omega), \sigma)\sqrt{3}\|f\|_{\mathcal{B}^1}w^{-\frac{1}{2}},\]
    for some $c>0$. We can rewrite this as
    \[\min_{\theta} \ \|x-g_w(\theta)\| \leq U(w) = K w^{-\frac{1}{2}},\]
    for some $K > 0$ that depends on $\|f\|_{\mathcal{B}^1}$, $|\Omega|^{\frac{1}{2}}$, $C(c, 0, \text{diam}(\Omega), \sigma)$, and the fixed normalization, but is independent of $w$. Note that $w^{-\frac{1}{2}}$ is convex on $(0, \infty)$. Thus, $U(w)$ is discretely convex in $w$. For $L = 1$ and fixed $q$, the parameter count $d = (2q + 2)w + 1$ is affine in $w$. Thus, the upper bound can be rewritten as a function of $d$:
    \[U(d) = C_ud^{-\frac{1}{2}},\]
    and $U(d)$ inherits monotonicity and discrete convexity from $U(w)$. 
\end{proof}

\begin{theorem}[Siegel and Xu, 2020~\cite{siegel2020nnupperbound}]
\label{thm:siegelandxu}
    Let $\Omega \subset \mathbb{R}^d$ be a bounded domain. If the activation function $\sigma \in W^{m,\infty}(\mathbb{R})$ is non-zero and satisfies the polynomial decay condition
    \begin{equation}
    \label{eq:siegelandxu}
        |\sigma^{(k)}(t)| \leq C_p(1 + |t|)^{-p}
    \end{equation}
    for $0 \leq k \leq m$ and some $p > 1$, we have
    \[\inf_{f_n \in \mathcal{F}_d^n(\sigma)}\|f -f_n\|_{H^m(\Omega)} \leq |\Omega|^{\frac{1}{2}}C(p,m,\diam(\Omega),\sigma)n^{-\frac{1}{2}}\|f\|_{\mathcal{B}^{m+1}},\]
    for any $f \in \mathcal{B}^{m+1}.$
\end{theorem}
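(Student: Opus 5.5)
The plan is the standard Barron--Maurey argument. First write $f$ as an integral, over a finite measure, of dilated and shifted copies of $\sigma$. Then apply Maurey's sampling lemma in the Hilbert space $H^m(\Omega)$ to get the $n^{-1/2}$ rate.

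\emph{Step 1: Fourier representation of $f$.} For $f \in \mathcal{B}^{m+1}$ we have, on $\Omega$,
\[
f(x) = \int_{\mathbb{R}^d} e^{i\omega\cdot x}\hat f(\omega)\,d\omega .
\]
Since $f$ is real we can replace $e^{i\omega\cdot x}\hat f(\omega)$ by its real part. Writing $\hat f(\omega)=|\hat f(\omega)|e^{i\phi(\omega)}$, this gives an integral of $\cos(\omega\cdot x+\phi(\omega))$ against $|\hat f(\omega)|\,d\omega$.

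\emph{Step 2: express each Fourier mode through $\sigma$.} The decay condition \eqref{eq:siegelandxu} with $k=0$ and $p>1$ gives $\sigma\in L^1(\mathbb{R})$. Hence $\hat\sigma$ is continuous, and it is not identically zero because $\sigma\neq0$. Fix $a\neq0$ with $\hat\sigma(a)\neq0$. The substitution $s=t+b$ gives the identity
\[
\int_{\mathbb{R}}\sigma(t+b)e^{-iab}\,db=\hat\sigma(a)e^{iat}.
\]
Apply it with $t=\omega\cdot x/a$ to get
\[
e^{i\omega\cdot x}=\hat\sigma(a)^{-1}\int_{\mathbb{R}}\sigma(\omega\cdot x/a+b)e^{-iab}\,db .
\]
Substituting into Step 1 writes $f$ as an integral over $(\omega,b)$ of the ridge functions $\sigma(\omega\cdot x/a+b)$ against a complex, possibly infinite measure. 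Taking real parts turns it into a signed real measure.

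\emph{Step 3 (main obstacle): normalize so the measure is finite and the dictionary is uniformly bounded.} The $b$-integral runs over all of $\mathbb{R}$, so the measure $|\hat f(\omega)|\,d\omega\,db$ has infinite mass. Moreover, each dictionary element $g_{\omega,b}(x)=\sigma(\omega\cdot x/a+b)$ has $H^m(\Omega)$ norm that grows like $(1+|\omega|)^m$, since each derivative brings down a factor of $\omega/a$.
\begin{itemize}
\item For the $\omega$-growth, rescale each element by $(1+|\omega|)^{-m}$ and move this weight into the measure.
\item For the $b$-direction, use that $\Omega$ is bounded, so $|\omega\cdot x/a|\le|\omega|\diam(\Omega)/|a|+\text{const}$. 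The derivative decay $|\sigma^{(k)}(t)|\le C_p(1+|t|)^{-p}$ then gives
\[
\|g_{\omega,b}\|_{H^m(\Omega)}\lesssim|\Omega|^{1/2}(1+|\omega|)^m\sup_{x\in\Omega}\bigl(1+|\omega\cdot x/a+b|\bigr)^{-p}.
\]
\item Integrating this bound in $b$ costs at most $C(1+|\omega|)$, roughly $\diam(\Omega)|\omega|/|a|$ from the region where $b$ sits inside the shifted range plus an $O(1)$ tail because $p>1$. This is the source of the extra power making $\mathcal{B}^{m+1}$ rather than $\mathcal{B}^m$ the right space.
\end{itemize}
So I would define normalized elements $h_{\omega,b}=g_{\omega,b}/\|g_{\omega,b}\|_{H^m(\Omega)}$ and the positive measure $d\mu=|\hat\sigma(a)|^{-1}|\hat f(\omega)|\,\|g_{\omega,b}\|_{H^m(\Omega)}\,d\omega\,db$. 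This measure has total mass at most $C(p,m,\diam(\Omega),\sigma)|\Omega|^{1/2}\|f\|_{\mathcal{B}^{m+1}}$, and $f$ lies in the closed symmetric convex hull of $\{\pm h_{\omega,b}\}$ scaled by $\mu(\mathbb{R}^{d+1})$. Interchanging integrals and justifying the $H^m$ convergence of the Bochner integral relies on the same absolute integrability.

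\emph{Step 4: Maurey sampling.} Draw $n$ i.i.d.\ samples $(\omega_j,b_j)$ from $\mu/\mu(\mathbb{R}^{d+1})$ and form the empirical average $f_n$ of the correspondingly signed and scaled $h_{\omega_j,b_j}$. Then $f_n\in\mathcal{F}_d^n(\sigma)$ and $\mathbb{E}f_n=f$. Since $H^m(\Omega)$ is a Hilbert space and each summand has norm at most $\mu(\mathbb{R}^{d+1})$, the variance identity gives
\[
\mathbb{E}\|f-f_n\|_{H^m}^2\le\mu(\mathbb{R}^{d+1})^2/n .
\]
Hence some realization satisfies $\|f-f_n\|_{H^m(\Omega)}\le\mu(\mathbb{R}^{d+1})\,n^{-1/2}$, which is the stated bound. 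I expect Step 3, the uniform control of the $b$-integral of the $H^m$ norms using only polynomial decay with $p>1$, to be the delicate part. Steps 1, 2 and 4 are routine.
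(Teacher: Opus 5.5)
The paper gives no proof of this statement. It imports it verbatim from Siegel and Xu and uses it only as a black box, through the corollary with $n_0=3$, in the FFN upper-envelope lemma.

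Your proposal is correct and is essentially the original Siegel--Xu argument:
\begin{itemize}
\item write $f$ through its Fourier transform;
\item use $e^{i\omega\cdot x}=\hat\sigma(a)^{-1}\int_{\mathbb{R}}\sigma(\omega\cdot x/a+b)e^{-iab}\,db$ for some $a\neq 0$ with $\hat\sigma(a)\neq 0$ (such an $a$ exists because $\hat\sigma$ is continuous and not identically zero, so its nonvanishing set is open and nonempty);
\item bound
\[
\int_{\mathbb{R}}\|\sigma(\omega\cdot x/a+b)\|_{H^m(\Omega)}\,db\le C_p|\Omega|^{1/2}\bigl(1+|\omega|/|a|\bigr)^m\Bigl(\frac{|\omega|\diam(\Omega)}{|a|}+\frac{2}{p-1}\Bigr),
\]
which is exactly where $\mathcal{B}^{m+1}$ enters;
\item finish with Maurey sampling in the Hilbert space $H^m(\Omega)$.
\end{itemize}

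One small fix. The inequality $|\omega\cdot x/a|\le|\omega|\diam(\Omega)/|a|+\text{const}$ does not hold with a constant independent of $\omega$ unless $0\in\Omega$. You can make it hold by translating $\Omega$, which is harmless: shifts of $x$ are absorbed into the biases and do not change $|\hat f|$. In any case you do not need it. The $b$-integral is translation invariant, so only the width $|\omega|\diam(\Omega)/|a|$ of the range of $x\mapsto\omega\cdot x/a$ over $\Omega$ enters, which is why the constant depends on $\diam(\Omega)$ alone.

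Pairs $(\omega,b)$ with $g_{\omega,b}\equiv 0$ on $\Omega$, where your normalization $h_{\omega,b}$ is undefined, carry zero $\mu$-mass and can simply be discarded.
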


\begin{corollary}[Siegel and Xu, 2020~\cite{siegel2020nnupperbound}]
\label{cor:siegelandxu}
    Let $W_{loc}^{m, \infty}(\mathbb{R})$ denote the space of functions whose weak derivatives up to order $m$ are essentially bounded on every compact subset of $\mathbb{R}$. 
    Let $\sigma \in W_{loc}^{m, \infty}(\mathbb{R})$ be an activation function and suppose that there exists a $\nu \in \mathcal{F}_1^{n_0}(\sigma)$ which satisfies the polynomial decay condition~\eqref{eq:siegelandxu} in Theorem~\ref{thm:siegelandxu}, we have
    \[\inf_{f_n \in \mathcal{F}_d^n(\sigma)}\|f -f_n\|_{H^m(\Omega)} \leq |\Omega|^{\frac{1}{2}}C(p,m,\diam(\Omega),\sigma)\sqrt{n_0}\|f\|_{\mathcal{B}^{m+1}}n^{-\frac{1}{2}}.\]
\end{corollary}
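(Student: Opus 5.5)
The plan is to derive the corollary from Theorem~\ref{thm:siegelandxu} by using the decaying combination $\nu$ as the activation function, and then translating $\nu$-networks back into $\sigma$-networks at a cost of a factor $n_0$ in width.

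\textbf{Step 1: $\nu$ is an admissible activation.} Since $\nu \in \mathcal{F}_1^{n_0}(\sigma)$, I write
\[
\nu(t)=\sum_{j=1}^{n_0} a_j\,\sigma(c_j t + e_j).
\]
I check that $\nu$ satisfies the hypotheses of Theorem~\ref{thm:siegelandxu}.
\begin{itemize}
\item Because $\sigma\in W^{m,\infty}_{loc}(\mathbb{R})$ and $\nu$ is a finite combination of affine reparametrizations of $\sigma$, $\nu \in W^{m,\infty}_{loc}$.
\item The decay condition~\eqref{eq:siegelandxu} makes $|\nu^{(k)}(t)| \le C_p(1+|t|)^{-p}$ globally bounded for $0\le k\le m$. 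Hence $\nu\in W^{m,\infty}(\mathbb{R})$.
\item $\nu$ is nonzero, since otherwise the decay witness is vacuous; if needed I assume this as part of the hypothesis.
\end{itemize}
Theorem~\ref{thm:siegelandxu} therefore gives, for every $m'\ge1$,
\[
\inf_{g\in\mathcal{F}_d^{m'}(\nu)}\|f-g\|_{H^m(\Omega)} \le |\Omega|^{1/2}C(p,m,\diam(\Omega),\nu)\,\|f\|_{\mathcal{B}^{m+1}}\,m'^{-1/2}.
\]

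\textbf{Step 2: embed $\nu$-networks in $\sigma$-networks.} Each $\nu$-neuron on $\mathbb{R}^d$ expands as
\[
\beta\,\nu(\omega\cdot x+b)=\sum_{j=1}^{n_0}\beta a_j\,\sigma\big((c_j\omega)\cdot x + c_j b+e_j\big),
\]
which is a width-$n_0$ $\sigma$-network. Hence $\mathcal{F}_d^{m'}(\nu)\subset\mathcal{F}_d^{n_0 m'}(\sigma)$.

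\textbf{Step 3: convert the rate to $\sigma$-width.} When $n=n_0m'$, the bound of Step 1 reads
\[
m'^{-1/2}=\sqrt{n_0}\,n^{-1/2},
\]
which is exactly the claimed bound, with the constant now written in terms of $\sigma$ since $\nu$ is determined by $\sigma$ and the fixed witness.

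For general $n\ge n_0$:
\begin{itemize}
\item The classes $\mathcal{F}_d^n(\sigma)$ are nested, because zero-padding neurons preserves the realized function.
\item So I use $m'=\lfloor n/n_0\rfloor \ge n/(2n_0)$, which costs at most a factor $\sqrt2$. I absorb this into $C$.
\end{itemize}

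For $n<n_0$:
\begin{itemize}
\item I take the trivial approximant $f_n=0$.
\item I bound $\|f\|_{H^m(\Omega)}\le |\Omega|^{1/2}\max_{|\alpha|\le m}\|D^\alpha f\|_\infty \le |\Omega|^{1/2}\|f\|_{\mathcal{B}^{m+1}}$ up to a dimensional constant, via Fourier inversion.
\item Since $n^{-1/2}\sqrt{n_0}\ge1$ in this range, the bound again holds after enlarging $C$.
\end{itemize}

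\textbf{Main obstacle.} I expect the delicate point to be bookkeeping rather than new analysis. First, Theorem~\ref{thm:siegelandxu} must genuinely apply to $\nu$, in particular the global $W^{m,\infty}$ membership and that the constant depends only on $(p,m,\diam(\Omega))$ and the activation. Second, the non-divisible and small-$n$ cases must be absorbed into $C$ honestly, so that the displayed factor $\sqrt{n_0}$ is not secretly $\sqrt{2n_0}$. I would handle the latter by stating the constant $C(p,m,\diam(\Omega),\sigma)$ as $\sqrt2\,C(p,m,\diam(\Omega),\nu)$ plus the trivial-regime constant.
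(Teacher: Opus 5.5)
Your proposal is correct and follows the paper's argument: apply Theorem~\ref{thm:siegelandxu} with $\nu$ as the activation, then transfer the rate through the inclusion $\mathcal{F}_d^{m'}(\nu)\subset\mathcal{F}_d^{n_0m'}(\sigma)$, which gives $m'^{-1/2}=\sqrt{n_0}\,n^{-1/2}$. The paper's one-line proof covers only widths that are multiples of $n_0$, and it tacitly needs $\nu\neq 0$, a requirement the stated corollary omits but which cannot be dropped. Your handling of non-divisible $n$, of $n<n_0$, and of the nonzero requirement therefore fills real gaps rather than departing from the paper. In the $n<n_0$ case, you can keep $C$ free of the input dimension by bounding $\bigl(\sum_{|\alpha|\le m}|D^\alpha f|^2\bigr)^{1/2}$ with Minkowski's integral inequality instead of summing a maximum over multi-indices.
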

This follows immediately from Theorem~\ref{thm:siegelandxu} and the observation that $\nu \in \mathcal{F}_1^{n_0}(\sigma)$ implies that
\[\mathcal{F}_d^{n}(\nu) \subset \mathcal{F}_{d}^{nn_0}(\sigma).\]

\begin{ffnlemma}[FFN admits a lower envelope for Assumption~\ref{A2}]\label{lem:ffn-a2l}
Fix the Fourier feature embedding dimension $q$ and consider one-hidden-layer ReLU FFNs of width $w$. Let $1 \leq p <+\infty$ and $\mathcal{X}$ be a measurable space endowed with a probability measure $\mu$. Let $F$ be a set of measurable functions from $\mathcal{X}$ to $[-1,1]$ such that $\log M(\varepsilon,F,\|\cdot\|_{L^p(\mu)}) \geq c_0\varepsilon^{-\alpha}$ for some $c_0, \varepsilon_0, \alpha > 0$ and all $\varepsilon \leq \varepsilon_0$. Here, $M(\varepsilon,F,\|\cdot\|_{L^p(\mu)})$ denotes the $\epsilon$-packing number of $F$ under $\|\cdot\|_{L^p(\mu)}$, defined as the maximum number of points in $F$ that are pairwise more than $\epsilon$ apart. Then, there exist a positive integer $d_{\min}$ and $c_1 > 0$ such that, for any $d \geq d_{\min}$, the approximation error is bounded below by a monotonically decreasing envelope
\[L(d) = c_1d^{-\frac{1}{\alpha}}\log^{-\frac{3}{\alpha}}(d).\]
\end{ffnlemma}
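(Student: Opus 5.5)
The plan is to invoke the lower-bound theorem of Achour et al.~\cite{achour2022nnlowerbound} for piecewise-polynomial feed-forward networks, in the same way Lemma~\ref{lem:ffn-a2u} invoked Siegel and Xu~\cite{siegel2020nnupperbound}. That theorem states the following. Let $\mathcal{H}_d$ be a class of ReLU (piecewise-linear) networks of a fixed architecture with $d$ parameters. Let $F$ be a function class whose $L^p(\mu)$ packing numbers satisfy $\log M(\varepsilon,F,\|\cdot\|_{L^p(\mu)}) \geq c_0\varepsilon^{-\alpha}$. Then the worst-case approximation error $\sup_{f\in F}\inf_{h\in\mathcal{H}_d}\|f-h\|_{L^p(\mu)}$ is at least $c_1 d^{-1/\alpha}\log^{-3/\alpha}(d)$ for all $d\geq d_{\min}$. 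The mechanism is a counting argument. If every $f\in F$ were approximated within $\varepsilon$ by a network with $d$ parameters, then after clipping to $[-1,1]$ and quantizing the parameters, the pseudo-dimension/VC bound $O(d\log d)$ for piecewise-polynomial networks would imply a packing bound $\log M(\varepsilon,F) \lesssim d\log d\,\log(1/\varepsilon)\cdot\mathrm{polylog}$. Comparing this with $c_0\varepsilon^{-\alpha}$ and solving for $\varepsilon$ gives the stated rate, with the $\log^{-3/\alpha}$ factor absorbing the logarithmic terms.

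First I will reduce the FFN to this setting. With $q$ and the random frequencies $b_l$, amplitudes $a_l$ fixed, an FFN of width $w$ is $h\circ\gamma_q$, where $h$ is a one-hidden-layer ReLU net on $\mathbb{R}^{2q}$ with $d=(2q+2)w+1$ parameters. I take $\mathcal{X}=[0,1]^k$ with $\mu$ equal to the empirical measure on the sample coordinates $v_1,\dots,v_n$ (or the uniform measure, depending on the appendix's norm convention). The composed class $\{h\circ\gamma_q\}$ has the same pseudo-dimension as $\mathcal{H}_d$ restricted to the image $\gamma_q(\mathcal{X})$, because composing with a fixed map cannot increase the number of patterns shattered. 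So the counting argument of Achour et al.\ transfers with unchanged constants, the embedding enters only through $\mu$, and their theorem yields $L(d)=c_1 d^{-1/\alpha}\log^{-3/\alpha}(d)$ for $d\geq d_{\min}$. Next I will relate the worst-case quantity to the signal-specific error $\|x-g_d^\star\|$. The bound is interpreted as holding for some $x$ in $F$, that is, for hard signals. Equivalently, $L(d)\leq\sup_{x\in F}\|x-g_d^\star\|$, and the envelope inequality of Theorem~\ref{thm:main_envelope} is understood relative to this class. I would state this interpretation explicitly, again using the $L^p$/$\ell_2$ normalization remark for $p=2$.

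Finally, I will check monotonicity. The map $t\mapsto t^{-1/\alpha}\log^{-3/\alpha}t$ is a product of positive decreasing functions for $t>1$, so it is decreasing on $d\geq d_{\min}\geq 2$. No convexity is required for $L$, as the statement only claims a decreasing envelope.

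The main obstacle is the transfer step: Achour et al.'s theorem is phrased for networks acting directly on the input space. I expect to need to verify that their proof uses only a pseudo-dimension bound on the approximating class, which is preserved under precomposition with $\gamma_q$, together with boundedness (handled by clipping to $[-1,1]$). A secondary obstacle is that the theorem gives a worst-case-over-$F$ bound rather than a bound for each fixed $x$. I would not try to strengthen it, and would only flag that $L$ is a class-level envelope.
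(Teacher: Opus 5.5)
Your proposal is correct and follows essentially the same route as the paper: bound the pseudo-dimension of the composed class $\{\psi\circ\gamma_q\}$ by that of the width-$w$ ReLU decoder class on $\mathbb{R}^{2q}$, use the $O(d\log d)$ pseudo-dimension bound for such networks, and plug this into Achour et al.'s lower bound. The paper quotes that bound directly in terms of $\mathrm{Pdim}(G)$ for an arbitrary class $G$, so the transfer step you flag as the main obstacle is immediate, and the third logarithm comes from substituting $\mathrm{Pdim}\le Cd\log d$ into $\mathrm{Pdim}^{-1/\alpha}\log^{-2/\alpha}\mathrm{Pdim}$.

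One caution: do not take $\mu$ to be the empirical measure on $v_1,\dots,v_n$. On $n$ atoms, $\log M(\varepsilon,F,\|\cdot\|_{L^p(\mu)})\le n\log(1+2/\varepsilon)$, so the packing hypothesis $\log M\ge c_0\varepsilon^{-\alpha}$ can never hold and the lemma becomes vacuous.
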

\begin{proof}
    Define the FFN function class at width $w$ as
    \[\mathcal{G}_w = \{v \to \psi(\gamma_q(v)): \psi\in \Psi_{w,2q}\},\]
    where $\Psi_{w,2q}$ denotes the class of single-hidden-layer, width-$w$ ReLU networks taking inputs in $\mathbb{R}^{2q}$. 
    We first relate the complexity of $\mathcal{G}_w$ to the complexity of $\Psi_{w,2q}$. For a class $S$ of real-valued functions on a domain $Z$, a set of points $\{z_1,\ldots,z_m\} \subseteq Z$ is said to be pseudo-shattered by $S$ if there exists $r_1,\ldots,r_m \in \mathbb{R}$ such that for every sign pattern $o \in \{-1, +1\}^m$ there exists $s \in S$ satisfying $\text{sign}(s(z_i) - r_i) = o_i$ for all $i=1,\ldots,m$. The pseudo-dimension $\text{Pdim}(S)$ is the largest integer $m$ such that some $m$ points are pseudo-shattered by $S$. 
    
    A related notion is the fat-shattering dimension~\cite{achour2022nnlowerbound}. For $\gamma > 0$, a set of points $\{z_1,\ldots,z_m\} \subseteq Z$ is $\gamma$-shattered by $S$ if there exists $r_1,\ldots,r_m \in \mathbb{R}$ such that for every sign pattern $o \in \{-1, +1\}^m$ there exists $s \in S$ satisfying $s(z_i) \geq r_i + \gamma$ when $o_i = +1$ and $s(z_i) \leq r_i - \gamma$ when $o_i=-1$ for $i=1,\ldots,m$. The $\gamma$-fat-shattering dimension $\text{fat}_\gamma(S)$ is the largest integer $m$ such that some $m$ points are $\gamma$-shattered by $S$, with the convention $\text{fat}_\gamma(S) = 0$ if no such set exists and $\text{fat}_\gamma(S) = +\infty$ if arbitrarily large such sets exist. The fat-shattering dimension is dominated by the pseudo-dimension, $\text{fat}_\gamma(S) \leq \text{Pdim}(S)$ for all $\gamma > 0$. Thus, once we show that $\mathcal{G}_w$ has finite pseudo-dimension, the condition $\text{Pdim}(G) < +\infty$ required by Theorem~\ref{thm:achour} follows automatically. 

    We claim 
    \[\text{Pdim}(\mathcal{G}_w) \leq \text{Pdim}(\Psi_{w,2q}).\] 
    Indeed, if a set of points $\{v_i\}_{i=1}^n \subset \mathcal{X}$ is pseudo-shattered by $\mathcal{G}_w$ with $(r_1,\ldots,r_n)$, then for every $o \in \{-1, +1\}^n$, there exists $g \in \mathcal{G}_w$ with $\text{sign}(g(v_i) - r_i) = o_i$ for all $i$. By constructing $g = \psi \circ \gamma_q$ for some $\psi \in \Psi_{w,2q}$, the corresponding embedded points $\{\gamma_q(v_i)\}_{i=1}^n \subset \mathbb{R}^{2q}$ are then pseudo-shattered by $\Psi_{w,2q}$ with the same $(r_1,\ldots,r_n)$, so $n \leq \text{Pdim}(\Psi_{w,2q})$. It follows that $\text{Pdim}(\mathcal{G}_w) \leq \text{Pdim}(\Psi_{w,2q})$. This argument does not require the Fourier embedding to preserve the full complexity of the decoder class; it only uses the fact that composing with a fixed embedding cannot increase pseudo-dimension. 
    
    The number of trainable parameters in $\Psi_{w,2q}$ is $d = (2q+2)w + 1$. By the classical pseudo-dimension bound for ReLU networks with $d$ weights and depth $L=1$~\cite{bartlett2017pseudo}, $\text{Pdim}(\Psi_{w,2q}) = O(d\log d)$, or equivalently, $\text{Pdim}(\Psi_{w,2q}) \leq Cd\log d$ for a universal constant $C$. Combining this with the previous inequality gives 
    \[\text{Pdim}(\mathcal{G}_w) \leq Cd\log d.\]
    Applying Theorem~\ref{thm:achour} to the function class $\mathcal{G}_w$ yields
    \[\sup_{f\in F}\inf_{g \in \mathcal{G}_w}\|f-g\|_{L^p(\mu)} \geq \min \left\{ \varepsilon_1, c_2\text{Pdim}(\mathcal{G}_w)^{-\frac{1}{\alpha}}\log^{-\frac{2}{\alpha}}(\text{Pdim}(\mathcal{G}_w))\right\},\]
    for some constants $\epsilon_1, c_2 > 0$. Let $d_{\min}$ be a positive integer such that for any $d > d_{\min}$, the second term is smaller than $\epsilon_1$. Since $t^{-\frac{1}{\alpha}}\log^{-\frac{2}{\alpha}}(t)$ is decreasing in $t$, we can substitute the bound $\text{Pdim}(\mathcal{G}_w) \leq Cd\log d$ to obtain
    \[\sup_{f\in F}\inf_{g \in \mathcal{G}_w}\|f-g\|_{L^p(\mu)} \geq c_2(Cd\log d)^{-\frac{1}{\alpha}}\log^{-\frac{2}{\alpha}}(Cd\log d).\]
    Absorbing all logarithmic constants $c_1$ yields
    \[\sup_{f\in F}\inf_{g \in \mathcal{G}_w}\|f-g\|_{L^p(\mu)} \geq c_1d^{-\frac{1}{\alpha}}\log^{-\frac{3}{\alpha}}(d).\]
    In other words, $L(d) = c_1d^{-\frac{1}{\alpha}}\log^{-\frac{3}{\alpha}}(d)$ is the desired lower envelope for the global representation error over the target class $F$. 
\end{proof}

\begin{theorem}[Achour, 2022~\cite{achour2022nnlowerbound}]
\label{thm:achour}
    Let $1 \leq p <+\infty$ and $\mathcal{X}$ be a measurable space endowed with a probability measure $\mu$. Let $F, G$ be two sets of measurable functions from $\mathcal{X}$ to $\mathbb{R}$, such that all functions in $F$ have the same $[a,b]$ for some $a < b$, and such that $\text{fat}_\gamma(G) < +\infty$ for all $\gamma > 0$. If $\log M(\varepsilon,F,\|\cdot\|_{L^p(\mu)}) \geq c_0\varepsilon^{-\alpha}$ for some $c_0, \varepsilon_0, \alpha > 0$ and all $\varepsilon \leq \varepsilon_0$, and if $\text{Pdim}(G) < +\infty$, then there exist constants $c_1, \varepsilon_1 > 0$ depending only on $b-a, p, c_0, \varepsilon_0$ and $\alpha$ such that 
    \[\sup_{f\in F}\inf_{g \in G}\|f-g\|_{L^p(\mu)} \geq \min \left\{ \varepsilon_1, c_1\text{Pdim}(G)^{-\frac{1}{\alpha}}\log^{-\frac{2}{\alpha}}(\text{Pdim}(G))\right\}.\]
    Here $\text{Pdim}(G)$ denotes the pseudo-dimension of $G$. 
\end{theorem}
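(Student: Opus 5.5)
Since this result is imported from Achour et al.~\cite{achour2022nnlowerbound}, I would reprove it by the standard volumetric argument: compare the packing number of the target class $F$, which is large by hypothesis, with the covering number of the approximating class $G$, which is small because its pseudo-dimension is finite.

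\textbf{Step 1: reduction to a clipped class.} Fix $\varepsilon>0$ and suppose that $\sup_{f\in F}\inf_{g\in G}\|f-g\|_{L^p(\mu)} < \varepsilon$. Let $\tau(t)=\min\{b,\max\{a,t\}\}$ and set $\tau\circ G=\{\tau\circ g: g\in G\}$. Every $f\in F$ takes values in $[a,b]$ and $\tau$ is $1$-Lipschitz with $\tau(f(z))=f(z)$, so $|f-\tau\circ g|\le|f-g|$ pointwise. Hence each $f$ still lies within $\varepsilon$ of $\tau\circ G$. Because $\tau$ is nondecreasing, composing with it cannot create new pseudo-shattered sets, so $P:=\mathrm{Pdim}(\tau\circ G)\le \mathrm{Pdim}(G)$.

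\textbf{Step 2: transfer the packing of $F$ to $\tau\circ G$.} Take a maximal $3\varepsilon$-packing $\{f_1,\dots,f_M\}$ of $F$, with $M=M(3\varepsilon,F,\|\cdot\|_{L^p(\mu)})$. For each $j$ choose $h_j\in\tau\circ G$ with $\|f_j-h_j\|<\varepsilon$. The triangle inequality then gives $\|h_i-h_j\|>\varepsilon$ for $i\neq j$, so
\[M(3\varepsilon,F)\le M(\varepsilon,\tau\circ G).\]

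\textbf{Step 3: bound the packing number of $\tau\circ G$.} For functions with range in $[a,b]$ we have $\|h-h'\|_{L^p}^p\le (b-a)^{p-1}\|h-h'\|_{L^1}$. This reduces an $L^p$ packing of $\tau\circ G$ to an $L^1$ packing at scale $\varepsilon^p/(b-a)^{p-1}$. Haussler's packing bound for bounded classes of pseudo-dimension $P$ holds uniformly over the probability measure $\mu$ and gives
\[\log M(\varepsilon,\tau\circ G)\le C\,P\log\bigl(C'(b-a)/\varepsilon\bigr),\]
with $C$ depending only on $p$.

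\textbf{Step 4: solve for $\varepsilon$.} If $3\varepsilon\le\varepsilon_0$, combining Steps 2 and 3 with the hypothesis on $F$ yields
\[c_0(3\varepsilon)^{-\alpha}\le C P\log(C'/\varepsilon).\]
Inverting this inequality gives $\varepsilon\ge c\,(P\log P)^{-1/\alpha}$, which is at least the stated $c_1P^{-1/\alpha}\log^{-2/\alpha}P$ after adjusting constants. The logarithm in the inversion is handled by noting that $\log(1/\varepsilon)=O(\log P)$ in the relevant regime. Otherwise $3\varepsilon>\varepsilon_0$, and we set $\varepsilon_1=\varepsilon_0/3$. Taking the contrapositive over all $\varepsilon$ gives the displayed minimum. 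Monotonicity in $\mathrm{Pdim}(G)\ge P$ lets us pass back from $P$ to $\mathrm{Pdim}(G)$, because the map $t\mapsto t^{-1/\alpha}\log^{-2/\alpha}t$ is decreasing.

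The main obstacle is Step 3. We need a covering bound that is uniform in the arbitrary probability measure $\mu$ and that depends only on $P$ and $\log(1/\varepsilon)$. This is where Haussler's (or Mendelson–Vershynin's) theorem, or a fat-shattering argument at scale $\varepsilon/8$, must be invoked carefully. The logarithmic bookkeeping in Step 4, which produces the extra $\log^{-2/\alpha}$ factor rather than a cleaner rate, is secondary.
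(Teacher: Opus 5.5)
Your proof is correct. Note that the paper itself gives no proof of this statement: it quotes the result from Achour et al.\ and uses it as a black box in the FFN lower-envelope lemma. The cited argument works at the level of fat-shattering dimension. It bounds the $L^p(\mu)$ entropy of the approximating class, uniformly in $\mu$, through $\mathrm{fat}_\gamma(G)$ using an estimate of Mendelson that carries a $\log^2$ factor. It specializes to $\mathrm{Pdim}(G)$ only at the end, via $\mathrm{fat}_\gamma(G)\le\mathrm{Pdim}(G)$. That is the source of both the $\log^{-2/\alpha}$ and the hypothesis $\mathrm{fat}_\gamma(G)<\infty$.

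Your route is different and more elementary. You clip $G$ to $[a,b]$ and transfer a $3\varepsilon$-packing of $F$ to an $\varepsilon$-packing of $\tau\circ G$. You then pass from $L^p$ to $L^1$ via $\|h-h'\|_{L^p}^p\le(b-a)^{p-1}\|h-h'\|_{L^1}$, and invoke Haussler's bound $M(\delta,H,L^1(\mu))\le e(P+1)\bigl(2e(b-a)/\delta\bigr)^P$ for $[a,b]$-valued classes of pseudo-dimension $P$. Each step checks out, including $\mathrm{Pdim}(\tau\circ G)\le\mathrm{Pdim}(G)$ for the continuous nondecreasing clip. Haussler's bound holds for every probability measure $\mu$. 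If you only have the empirical-measure version (e.g.\ Anthony and Bartlett), apply the law of large numbers to the finitely many pairwise distances of a finite packing to transfer it.

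Because your entropy bound has a single logarithm, the inversion in Step 4 gives $\varepsilon\gtrsim(P\log P)^{-1/\alpha}$. This is sharper than the stated rate: it dominates $P^{-1/\alpha}\log^{-2/\alpha}P$ up to a constant for $P\ge 2$. It also makes the fat-shattering hypothesis redundant, since that hypothesis already follows from $\mathrm{Pdim}(G)<\infty$.

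What the original route buys is generality. Its intermediate theorem covers classes whose fat-shattering dimension is finite at every scale while the pseudo-dimension is infinite, where Haussler's bound is unavailable. For the statement as written, which assumes $\mathrm{Pdim}(G)<\infty$, your route suffices.

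Two bookkeeping points remain. If $M(3\varepsilon,F)=\infty$, use a finite packing whose size exceeds Haussler's bound. When $\mathrm{Pdim}(G)\le 1$, the right-hand side is degenerate ($\log 1=0$). So $\varepsilon_1$ must be the minimum of $\varepsilon_0/3$ and the constant lower bound your Step 4 produces for bounded $P$, not $\varepsilon_0/3$ alone.
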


\begin{ffnlemma}[FFN satisfies Assumption~\ref{A3}]\label{lem:ffn-a3}
There exists a finite parameter budget $d^\star$ such that $\|x - g_{d^\star}^\star\| = 0$ with high probability.
\end{ffnlemma}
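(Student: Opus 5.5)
The plan is to reduce exact fitting of the finite sample vector $x\in\mathbb{R}^n$ to an interpolation problem for a one-hidden-layer ReLU network on the embedded points $z_i := \gamma_q(v_i)\in\mathbb{R}^{2q}$, $i=1,\ldots,n$. Approximation in $L^2$ is not what we need. Since $\|x-g_d^\star\|$ only involves the values of the FFN at the $n$ sample coordinates, it suffices to exhibit, for some finite width $w^\star$, parameters $\theta$ with $\mathrm{MLP}_\theta(z_i)=x_i$ for all $i$. We then set $d^\star = P(1,w^\star,q)$, and Lemma~\ref{lem:ffn-a1} guarantees the zero error persists for all larger budgets.

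\textbf{Step 1 (distinct embedded points).} I first show that, with high probability over the random draw of $b_l$ and $a_l$, the map $v\mapsto\gamma_q(v)$ is injective on the finite set $\{v_1,\ldots,v_n\}$ of distinct coordinates. For a fixed pair $v_i\neq v_j$, a collision $\gamma_q(v_i)=\gamma_q(v_j)$ requires in particular that $a_1\neq 0$ and $b_1^\top(v_i-v_j)\in\mathbb{Z}$. Because $b_1\sim\mathcal N(0,\sigma_b^2 I_k)$ and $v_i-v_j\neq 0$, the scalar $b_1^\top(v_i-v_j)$ has a continuous Gaussian law, so it hits the countable set $\mathbb{Z}$ with probability zero. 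Also $a_1=0$ has probability zero. A union bound over the $\binom{n}{2}$ pairs gives that all $z_i$ are distinct almost surely, which in particular holds with high probability. This is where the probabilistic qualifier in the statement enters; the bad event really has measure zero.

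\textbf{Step 2 (exact interpolation by a shallow ReLU net).} Given distinct $z_1,\ldots,z_n\in\mathbb{R}^{2q}$, I would pick a direction $u$ such that the projections $t_i=u^\top z_i$ are distinct. Such a $u$ exists because it only has to avoid finitely many hyperplanes $\{u: u^\top(z_i-z_j)=0\}$. Sort the points so that $t_1<\cdots<t_n$. The task is now a one-dimensional interpolation problem: the continuous piecewise-linear function through $(t_i,x_i)$ can be written as
\[
x_1+\sum_{j=1}^{n-1} c_j\,\ReLU(t-\tau_j),
\]
with breakpoints $\tau_1<t_1$ and $\tau_j\in(t_{j-1},t_j)$ for $j\ge2$. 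The coefficients $c_j$ are determined recursively from the successive slopes, and the constant $x_1$ is absorbed into the output bias. This construction uses width $w^\star=n-1$ (or $n$), with first-layer weights $u$ and biases $-\tau_j$. An alternative route is to invoke the Kim et al.\ universal approximation result and then pass from approximation to interpolation on finitely many points, but the explicit construction is cleaner and gives $d^\star=(2q+2)n+1=O(qn)$ analytically, which is what the scale factor in Theorem~\ref{thm:main_envelope} needs.

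\textbf{Main obstacle.} The construction in Step 2 is routine. The delicate point is Step 1, in making the claim honest: injectivity of the embedding is exactly what could fail. It fails if two coordinates differ by a vector making every $b_l^\top(v_i-v_j)$ an integer, or if all $a_l$ vanish. In such a case no decoder can separate samples with different target values, so the zero-error claim genuinely requires the randomness assumption. I would state the result as holding almost surely, and note that it holds deterministically whenever the sampled frequencies avoid this measure-zero set.
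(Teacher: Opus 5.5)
Your proof is correct, and it takes a genuinely different route from the paper's.

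\textbf{The paper's route.} The paper enlarges the embedding. It applies matrix Bernstein to the Gram matrix $G=\Phi_S'(\Phi_S')^\top=\sum_l G_l$, whose mean is the Gaussian RBF kernel matrix $\Sigma$. This shows that $\Phi_S\in\mathbb{R}^{n\times 2q}$ has full row rank with probability $1-\delta$ once $q^\star\gtrsim (n/\lambda)\ln(2n/\delta)$, where $\lambda=\lambda_{\min}(\Sigma)$. Exact fitting is then a linear readout $c^\top\gamma_{q^\star}(v)$. The hidden ReLU layer serves only to reproduce that linear map, via $c^\top(\ReLU(z)-\ReLU(-z))$ at width $4q^\star$.

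\textbf{Your route.} You keep $q$ fixed, with any $q\ge 1$. You need only injectivity of $\gamma_q$ on the samples, which fails on a null set. The expressive burden then falls on the hidden ReLU layer, through the generic-projection memorization construction.

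\textbf{What your route buys.} You get an almost-sure statement rather than a $1-\delta$ one. Your budget $d^\star=(2q+2)n+1$ is explicit and does not depend on the kernel conditioning $\lambda$. That quantity can be very small on a fine sampling grid, which inflates the paper's $q^\star$ and hence its $d^\star=P(1,4q^\star,q^\star)$, which is quadratic in $q^\star$. Your argument also stays inside the width-indexed, fixed-$q$ family that the paper's envelope lemmas use. The paper's proof instead has to enlarge $q$ and rely on Case 1 of Lemma~\ref{lem:ffn-a1}.

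\textbf{What the paper's route buys.} It gives a quantitative statement about the random features themselves, $\lambda_{\min}(G)\ge q\lambda/2$, i.e.\ a well-conditioned linear interpolant. This is more than Assumption~\ref{A3} requires.

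\textbf{Two small fixes.}
\begin{itemize}
\item In Step 2, with $\tau_1<t_1$ the first unit is already active at $t_1$. Fixing the output bias to $x_1$ then forces $c_1=0$, and the width-$(n-1)$ system becomes overdetermined at $t_n$. You can either leave the output bias free, or place the breakpoints at $\tau_j=t_j$ for $j=1,\dots,n-1$. The latter gives the triangular system $x_1+\sum_{j<k}c_j(t_k-t_j)=x_k$, whose pivots $t_k-t_{k-1}$ are nonzero.
\item In Step 1, a collision does not ``require $a_1\neq 0$''. The correct statement is that a collision forces $a_1=0$ or $b_1^\top(v_i-v_j)\in\mathbb{Z}$, and both events are null (assuming $\sigma_a,\sigma_b>0$).
\end{itemize}
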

\begin{proof}
    The proof proceeds in two steps. First, we show that there exists a finite embedding dimension $q^\star$ such that the Fourier feature matrix has full row rank, guaranteeing exact interpolation of the target signal at all sample points. Second, we invoke the minimum-width universal approximation results to show that a single-hidden-layer ReLU MLP of finite width can realize the required linear map on the embedded coordinates.

    \textit{Step 1: Full row rank of the Fourier feature matrix.} Let $S = \{v_1, v_2, \ldots, v_n\}$ be the finite set of distinct sampled coordinates at which the signal $x$ is defined, and form the Fourier feature matrix 
    \[
    \Phi_S := \begin{bmatrix}
    \gamma_q(v_1)^\top \\
    \vdots \\
    \gamma_q(v_n)^\top
    \end{bmatrix} \in \mathbb{R}^{n\times 2q}.
    \]
    We claim that there exists $q^\star$ with $2q^\star \geq n$ such that $\rank(\Phi_S) = n$ with high probability over the random draws of the embedding frequencies. To establish this, factor $\Phi_S = \Phi_S' \diag(a_1, a_1, \ldots, a_q, a_q)$ where
    \[\Phi_S' = \begin{bmatrix}
\cos(2\pi b_1^\top v_1),\;\sin(2\pi b_1^\top v_1)\; & \cdots & \;
\cos(2\pi b_q^\top v_1),\;\sin(2\pi b_q^\top v_1) \\
\vdots & \ddots & \vdots\\
\cos(2\pi b_1^\top v_n),\;\sin(2\pi b_1^\top v_n)\; & \cdots & \;
\cos(2\pi b_q^\top v_n),\;\sin(2\pi b_q^\top v_n)
\end{bmatrix}.\]
    Since $a_l \sim \mathcal{N}(0,\sigma_a^2)$ for $l = 1,\ldots,q$, all $2q$ entries on the diagonal are nonzero almost surely, so the diagonal matrix $\diag(a_1, a_1, \ldots, a_q, a_q)$ has full rank with probability one. Hence, $\rank(\Phi_S) = \rank(\Phi_S')$ with probability one and it suffices to show that $\Phi_S'$ has full row rank with high probability. Consider the Gram matrix $G = \Phi'_S (\Phi'_S)^\top = \sum_{l=1}^{q} G_l$, where each $G_l \in \mathbb{R}^{n\times n}$ is the symmetric positive semidefinite matrix with entries $G_l[i,j] = \cos(2\pi b_l^\top (v_i - v_j))$. We note that $G_l = U_l U_l^\top$ where 
    \[U_l = \begin{bmatrix}
    \cos(2\pi b_l^\top v_1) & \sin(2\pi b_l^\top v_1) \\
    \cos(2\pi b_l^\top v_2) & \sin(2\pi b_l^\top v_2) \\
    \vdots & \vdots \\
    \cos(2\pi b_l^\top v_n) & \sin(2\pi b_l^\top v_n) \\
    \end{bmatrix} \in \mathbb{R}^{n\times 2}.\]
    It follows that $G_l \succeq 0$ and $\|G_l\| \leq \|U_l\|^2 \leq \|U_l\|_F^2 = tr(U_l^\top U_l) = n$. Let $\Sigma = \mathbb{E}[G_l]$.
    Then the $(i,j)$ entry of $\Sigma$ is
    \[\Sigma[i,j] = \mathbb{E}[\cos(2\pi b_l^\top (v_i - v_j))] = \exp\left(-2\pi^2 \sigma_b^2 \|v_i-v_j\|_2^2\right), \]
    which is the Gaussian RBF kernel evaluated at $v_i - v_j$. Since the points $v_1,\ldots,v_n$ are distinct, $\Sigma$ is a Gram matrix of a Gaussian RBF kernel on distinct points and is therefore strictly positive definite. Let $\lambda = \lambda_{\min}(\Sigma) > 0$ be the smallest eigenvalue of $\Sigma$.

    Define the centered, rescaled matrices $W_l = \Sigma^{-1/2}G_l\Sigma^{-1/2}$. Then $W_l$ is symmetric and positive semidefinite by construction. We also have that 
    \[\mathbb{E}[W_l] = \Sigma^{-1/2}\mathbb{E}[G_l]\Sigma^{-1/2} = \Sigma^{-1/2}\Sigma \Sigma^{-1/2}=I_n, \qquad \text{and}\]
    \[
    \|W_l\| \leq \|\Sigma^{-1/2}\|\|G_l\|\|\Sigma^{-1/2}\|
    = \|\Sigma^{-1/2}\|^2\|G_l\| 
    = \left(\frac{1}{\sqrt{\lambda_{\min}(\Sigma)}}\right)^2\|G_l\| \leq \frac{n}{\lambda} .\]
    Let the diagonalization of $W_l$ be $W_l = PDP^\top$, where $P$ is orthogonal and $D = \diag(\lambda_1, \ldots, \lambda_n)$. By definition, for all $i \in [n]$, $\lambda_i \leq \lambda_{\max}(W_l) = \|W_l\|$ and $\lambda_i^2 \leq \|W_l\|\lambda_i$. It follows that $D^2 \preceq \|W_l\|D$ and $W_l^2 = PD^2P^\top \preceq \|W_l\| PDP^\top = \|W_l\|W_l$. Define $G'_l = \frac{1}{q}(W_l - I)$. We have
    \[\|G'_l\| \leq \frac{1}{q}(\|W_l\| + \|I_n\|) \leq \frac{1}{q}\left(\frac{n}{\lambda} + 1\right),\]
    \[\mathbb{E}[G'_l] =  \frac{1}{q}(\mathbb{E}[W_l] - I_n) = 0_n, \qquad \text{and}\]
    \begin{align*}
        \mathbb{E}[(G'_l)^2] &= \frac{1}{q^2}\mathbb{E}[W_l^2 - 2W_l + I_n]\\
        &= \frac{1}{q^2}\mathbb{E}[W_l^2] - 2I_n + I_n \\
        &\preceq \frac{1}{q^2} \left(\|W_l\|\mathbb{E}[W_l] - I_n\right) \\
        &\preceq \frac{1}{q^2}\left(\frac{n}{\lambda} - 1\right)I_n \preceq \frac{n}{q^2\lambda}I_n.
    \end{align*}
    Thus, $G'_l$ are independent centered self-adjoint random $n \times n$ matrices with
    \[\|G'_l\| \leq \frac{1}{q}\left(\frac{n}{\lambda} + 1\right) \qquad \text{ and } \qquad \left\|\sum_{l=1}^q \mathbb{E}[(G'_l)^2]\right\| \leq \frac{n}{q\lambda}.\]
    Applying the non-commutative Bernstein inequality (Theorem~\ref{thm:vershynin}) to the $G'_l$ matrices, we have that for every $\epsilon\geq 0$,
    \[
        \delta = \mathbb{P}\left\{ \|\sum_{l=1}^q G'_l\| \geq \epsilon \right\} \leq 2n \cdot \exp\left(\frac{-\epsilon^2/2}{\frac{n}{q\lambda} + \frac{(n/\lambda + 1)\epsilon}{3q}}\right) = 2n\cdot \exp\left(\frac{-\epsilon^2q}{\frac{n}{\lambda}\cdot\frac{2\epsilon +6}{3} + \frac{2\epsilon}{3}}\right).
    \]
    Pick $\epsilon = 1/2$. We get
    \[
         \mathbb{P}\left\{ \|\sum_{l=1}^q G'_l\| < \frac{1}{2} \right\} = 1 - \delta  \geq 1 - 2n\cdot \exp\left(\frac{-q}{\frac{n}{\lambda}\cdot\frac{28}{3} + \frac{4}{3}}\right).
    \]
    This implies that if $q \geq C\frac{n}{\lambda}\ln\frac{2n}{\delta}$ for some absolute constant $C$, then $\mathbb{P}\left\{ \|\sum_{l=1}^q G'_l\| < \frac{1}{2} \right\} \geq 1 - \delta$. We can write
    \[
        \|\sum_{l=1}^q G'_l\| < \frac{1}{2} \iff \|\sum_{l=1}^q \left(\frac{1}{q}W_l - \frac{1}{q}I_n\right)\| < \frac{1}{2} \iff \|\left(\frac{1}{q}\sum_{l=1}^q W_l \right) - I_n\| < \frac{1}{2}. 
    \]
    Since $W_l$ is symmetric, $\left(\frac{1}{q}\sum_{l=1}^q W_l \right) - I_n$ is also symmetric and $\left|\lambda_{\max}\left(\left(\frac{1}{q}\sum_{l=1}^q W_l \right) - I_n\right)\right| = \|\left(\frac{1}{q}\sum_{l=1}^q W_l \right) - I_n\|$. It follows that $ \left|\lambda_{\max}\left(\left(\frac{1}{q}\sum_{l=1}^q W_l \right) - I_n\right)\right| < \frac{1}{2}$ and all eigenvalues of $\left(\frac{1}{q}\sum_{l=1}^q W_l \right) - I_n$ lie in $(-\frac{1}{2},\frac{1}{2})$. As a result, all eigenvalues  $\frac{1}{q}\sum_{l=1}^q W_l$ lies in $(\frac{1}{2},\frac{3}{2})$. It follows that 
    \begin{align*}
        \frac{1}{q}\sum_{l=1}^q W_l \succeq \frac{1}{2}I_n &\iff \Sigma^{-1/2}\left(\sum_{l=1}^q G_l\right)\Sigma^{-1/2} \succeq \frac{q}{2}I_n\\
        &\iff G \succeq \frac{q}{2}\Sigma \\
        &\iff \lambda_{\min}(G) \geq \frac{q}{2}\lambda > 0.
    \end{align*}
    Therefore, we can conclude that for $q \geq C\frac{n}{\lambda}\ln\frac{2n}{\delta}$ for some absolute constant $C$, then $\rank(\Phi_S') = \rank(G) =n$ with probability at least $1 - \delta$.
    
    Since $\Phi_S$ has full row rank, there exists a vector $c \in \mathbb{R}^{2q^\star}$ such that $\Phi_S c = x_S$ (for instance, the minimum-norm solution $c = \Phi_S^\top (\Phi_S \Phi_S^\top)^{-1}x_S)$). The linear map $v \to c^\top \gamma_{q^\star}(v)$ therefore interpolates $x$ exactly at all $n$ sampled points.
    
    \textit{Step 2: Realization by a finite-width ReLU MLP.} It remains to show that the linear map $z \to c^\top z$ can be realized by a single-hidden-layer ReLU MLP of finite width. This is immediate: a width-$4q^\star$ network with weights
    \[
        W_1 = \begin{bmatrix}
            I_{2q^\star} \\
            -I_{2q^\star}
        \end{bmatrix} \in \mathbb{R}^{4q^\star \times 2q^\star}, \;\;\; b_1 = 0 \in \mathbb{R}^{4q^\star}, \;\;\; W_{out} = \begin{bmatrix}
            c^\top \;\; -c^\top
        \end{bmatrix} \in \mathbb{R}^{1 \times 4q^\star}, \;\;\; b_{out} = 0 \in \mathbb{R}
    \]
    satisfies $W_{out}\cdot \ReLU(W_1z) = c^\top(\ReLU(z) - \ReLU(-z)) = c^\top z$ for all $z$, since $\ReLU(z) - \ReLU(-z) = z$. In particular, evaluating at $z = \gamma_{q^\star}(v_i)$ yields $c^\top \gamma_{q^\star}(v_i) = x_{S, i}$ for all $i =1, \ldots,n$. 

    More generally, for the effective input dimension $d_x = 2q$ and output dimension $d_y = 1$,the minimum width for $L^p$-approximation of any function in $L^p([0,1]^{d_x},\mathbb{R}^{d_y})$ by ReLU networks with at least 2 hidden layers is $w_{\min} = \max\{d_x, d_y, 2\}$, as established by Kim et al.~\cite{kim2024tighterminimumwidth}. This guarantees that any continuous function on the compact domain of the Fourier embedding, including the above interpolant, can be approximated to arbitrary precision by a ReLU MLP of at least 2 hidden layers and width at least $2q$. The critical parameter budget is therefore either $d^\star = P(L=1, w=4q^\star, q=q^\star)$ or $d^\star = P(L=2, w=2q^\star, q=q^\star)$, which are both finite. 
\end{proof}
\begin{theorem}[Non-commutative Bernstein-type inequality (Vershynin, 2011~\cite{vershynin2011randommatrices})]
\label{thm:vershynin}
    Consider a finite sequence $X_i$ of independent centered self-adjoint random $n\times n$ matrices. Assume we have for some numbers $K$ and $\sigma$ that
    \[\|X_i\| \leq K \;\;\text{ almost surely,} \qquad \|\sum_i\mathbb{E}X_i^2\| \leq \sigma^2.\]
    Then, for every $t \geq 0$ we have
    \[\mathbb{P}\left\{\|\sum_iX_i\| \geq t\right\} \leq 2n\cdot\exp \left(\frac{-t^2/2}{\sigma^2 + Kt/3}\right).\]
\end{theorem}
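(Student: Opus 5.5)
We would prove this inequality with the matrix Laplace transform method, in the form developed by Ahlswede--Winter and sharpened by Tropp. Set $S = \sum_i X_i$. Since $\|S\| = \max\{\lambda_{\max}(S), \lambda_{\max}(-S)\}$ and the family $\{-X_i\}$ satisfies the same hypotheses, it suffices to show
\[
\mathbb{P}\{\lambda_{\max}(S) \geq t\} \leq n \exp\!\left(\frac{-t^2/2}{\sigma^2 + Kt/3}\right).
\]
A union bound over $S$ and $-S$ then produces the factor $2n$.

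\emph{Step 1 (Laplace transform).} For any $\theta > 0$, Markov's inequality together with $e^{\theta \lambda_{\max}(S)} = \lambda_{\max}(e^{\theta S}) \leq \operatorname{tr} e^{\theta S}$ gives
\[
\mathbb{P}\{\lambda_{\max}(S) \geq t\} \leq e^{-\theta t}\, \mathbb{E} \operatorname{tr} \exp(\theta S).
\]

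\emph{Step 2 (subadditivity of matrix cumulants).} By Lieb's concavity theorem, the map $A \mapsto \operatorname{tr}\exp(H + \log A)$ is concave on positive definite matrices. Applying Jensen's inequality to one summand at a time, using independence, yields
\[
\mathbb{E}\operatorname{tr}\exp\Big(\theta \sum_i X_i\Big) \leq \operatorname{tr}\exp\Big(\sum_i \log \mathbb{E}\, e^{\theta X_i}\Big).
\]
We expect this to be the main obstacle, because it is the only genuinely non-commutative input. The Golden--Thompson route of Ahlswede--Winter is simpler, but it only yields $\sum_i \|\mathbb{E}X_i^2\|$ in place of $\|\sum_i \mathbb{E}X_i^2\|$. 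We would therefore cite Lieb's theorem as a black box rather than reprove it.

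\emph{Step 3 (per-summand mgf bound).} Let $g(\theta) = (e^{\theta K} - \theta K - 1)/K^2$. The function $x \mapsto (e^{\theta x} - \theta x - 1)/x^2$ is increasing, so the scalar inequality $e^{\theta x} \leq 1 + \theta x + g(\theta)x^2$ holds for all $x \leq K$. The spectral transfer rule lifts it to $e^{\theta X_i} \preceq I + \theta X_i + g(\theta) X_i^2$. Taking expectations and using $\mathbb{E}X_i = 0$ gives $\mathbb{E} e^{\theta X_i} \preceq I + g(\theta)\mathbb{E}X_i^2$. Since $\log$ is operator monotone and $\log(1+a) \leq a$, we obtain $\log \mathbb{E} e^{\theta X_i} \preceq g(\theta)\mathbb{E}X_i^2$.

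\emph{Step 4 (combine and optimize).} By monotonicity of $\operatorname{tr}\exp$,
\[
\operatorname{tr}\exp\Big(g(\theta)\sum_i \mathbb{E}X_i^2\Big) \leq n\, e^{g(\theta)\sigma^2}.
\]
Next, compare Taylor series term by term: for $0 < \theta < 3/K$,
\[
g(\theta) \leq \frac{\theta^2/2}{1 - \theta K/3},
\]
which uses $k! \geq 2\cdot 3^{k-2}$. Finally, substitute $\theta = t/(\sigma^2 + Kt/3)$, which lies in the admissible range. A routine calculation then gives the exponent $-\frac{t^2/2}{\sigma^2 + Kt/3}$, and the symmetric argument for $-S$ completes the bound.
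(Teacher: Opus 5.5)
The paper gives no proof of this statement. It imports the inequality as a black box from Vershynin's survey, which in turn credits Tropp, and uses it only once, in the full-row-rank step of the FFN zero-error lemma. So there is no competing argument to compare against. Your proposal supplies the standard Tropp proof, and it is correct.

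All the steps check out:
\begin{itemize}
\item the reduction to $\lambda_{\max}$ followed by a union bound over $S$ and $-S$;
\item the Markov step with the trace bound;
\item the Lieb-concavity subadditivity of $\log\mathbb{E}e^{\theta X_i}$;
\item the mgf bound, via monotonicity of $x\mapsto(e^{\theta x}-\theta x-1)/x^2$ and the transfer rule;
\item the operator-monotone $\log$ step;
\item the series comparison using $k!\ge 2\cdot 3^{k-2}$;
\item the final substitution, which gives $g(\theta)\sigma^2\le \tfrac{t^2}{2(\sigma^2+Kt/3)}$ and hence the exponent $-\tfrac{t^2/2}{\sigma^2+Kt/3}$.
\end{itemize}
Your remark that Golden--Thompson would only yield $\sum_i\|\mathbb{E}X_i^2\|$ correctly explains why Lieb's theorem is needed for the variance parameter in the stated form.

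Two small points should be tidied. First, $\theta=t/(\sigma^2+Kt/3)$ satisfies $\theta<3/K$ only when $\sigma^2>0$. If $\sigma^2=0$, each $\mathbb{E}X_i^2$ vanishes, so $\operatorname{tr}\mathbb{E}X_i^2=0$ forces $X_i=0$ almost surely, and the bound is trivial; you should say so. Likewise, for $K=0$ read $g(\theta)$ as its limit $\theta^2/2$, and for $t=0$ note that the right-hand side is at least $1$.

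Second, your route gives slightly more than the cited statement. Step 3 uses only that the eigenvalues of $X_i$ are at most $K$. So you actually prove the one-sided bound $\mathbb{P}\{\lambda_{\max}(S)\ge t\}\le n\exp\bigl(-\tfrac{t^2/2}{\sigma^2+Kt/3}\bigr)$ under the weaker hypothesis $\lambda_{\max}(X_i)\le K$. The two-sided condition $\|X_i\|\le K$ is needed only to run the same argument on $-S$.
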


\subsection{Proof of Theorem~\ref{thm:model-specific}: Instant-NGP}
We consider an Instant-NGP \cite{muller2022ingp} model with $L$ resolution levels, $T$ entries per level hash table, $F$ feature dimensions per entry, and a single-hidden-layer ReLU MLP decoder of width $w$. The number of trainable encoding parameters is $TLF$, and the total parameter count $d = TLF + d_{MLP}$ includes both encoding and decoder parameters. Define the full concatenated encoding at coordinate $v \in \mathbb{R}^k$ as
\[z(v) = \begin{bmatrix} e_1(v) & e_2(v) & \cdots & e_L(v) & \xi(v)\end{bmatrix} \in \mathbb{R}^{LF + E},\]
where $e_l(v)$ is the interpolated hash-grid feature at level $l$ and $\xi(v) \in \mathbb{R}^E$ denotes an auxiliary non-spatial encoding. Following~\cite{muller2022ingp}, we index the model family by the encoding hyperparameters $(T,L)$ while holding $F$, the spatial input dimension $k$, the coarsest resolution $N_{\min}$, and the decoder architecture fixed. For a target signal $x \in \mathbb{R}^n$ sampled at coordinates $v_1,\ldots,v_n \in \mathbb{R}^k$, the optimal reconstruction error at parameter budget $d$ is 
\[\|x-g_d^\star\| = \min_{\theta, \Phi}\|x - g_d(\theta, \Phi)\|,\]
where $\theta$ denotes the hash table feature parameters and $\Phi$ encompasses the MLP weights and biases.

To analyze the error curve, we adopt two standard structural assumptions on the spatial hash function used in the multi-resolution encoding~\cite{teschner2003hash}: 
\begin{itemize}[leftmargin=10pt]
    \item \textbf{Uniform Hashing.} Let $\hat{h}_T$ denote the spatial hash function used by the multi-resolution encoding. Then
    $\Pr[\hat{h}_T(u) = \hat{h}_T(v)] \leq \frac{1}{T}$ for any two distinct integer vertices $u \neq v$. 
    \item \textbf{Independence Across Levels.} Hash collision events at different resolution levels are mutually independent. 
\end{itemize}
These assumptions reflect the design objectives of the per-dimension linear-congruential XOR hash used in Instant-NGP, namely decorrelating the contribution of the input dimensions and approximately uniform distribution over the table.

\begin{ingplemma}[Instant-NGP satisfies Assumption~\ref{A1}]
\label{lem:ingp-a1}
    For all parameter budgets $d \geq 0$,
    \[\|x - g_d^\star\| \geq \|x - g_{d+1}^\star\|.\]
\end{ingplemma}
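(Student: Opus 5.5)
The plan is to mirror the proof of Lemma~\ref{lem:ffn-a1}. Write $\mathcal{G}_d$ for the set of outputs realizable by an Instant-NGP with parameter budget $d$. It suffices to show that any output realizable at a smaller budget is realizable at the next feasible larger budget, i.e.\ $\mathcal{G}_d \subseteq \mathcal{G}_{d'}$ for $d' > d$. Then the minimum of $\|x - g\|$ over the larger set is no greater, which gives Assumption~\ref{A1}. As in the main text, ``$d+1$'' is read as the next feasible budget. I will split into cases according to which hyperparameter is increased: the table size $T$, the number of levels $L$, or the decoder width $w$. The feature dimension $F$, the coarsest resolution $N_{\min}$, and the input dimension $k$ stay fixed.

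\emph{Case 1: increasing the decoder width $w \to w+1$.} This is identical to Case~2 of Lemma~\ref{lem:ffn-a1}. Pad the first-layer weights with a zero row and the output weights with a zero entry, and keep the hash tables unchanged. The extra neuron contributes nothing, so the realized function is the same.

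\emph{Case 2: increasing the number of levels $L \to L+1$.} The new level adds $TF$ table entries and $F$ new input coordinates to the first MLP layer. Give the new level's features arbitrary values, and set the first-layer weight columns that read those $F$ coordinates to zero. The decoder then ignores $e_{L+1}(v)$, and the existing levels keep their resolutions and hash functions. So the realized function is unchanged. One detail must be checked: under the Instant-NGP growth-factor schedule, adding a level must not rescale the resolutions of the existing levels. If it does, I instead define the family so that levels are appended with the existing resolutions held fixed, consistent with indexing by $(T,L)$ with $N_{\min}$ fixed.

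\emph{Case 3: increasing the table size $T \to T'$.} This is the main obstacle, because the hash function $\hat h_{T'}$ differs from $\hat h_T$. Colliding vertices are therefore regrouped, and a table of size $T'$ does not simply contain the old table. My first attempt is a ``collision refinement'' argument. If the hash has the standard form $\hat h_T(u) = (\bigoplus_i u_i \pi_i) \bmod T$ and $T \mid T'$ (for example, powers of two as in Instant-NGP), then $\hat h_{T'}(u) \equiv \hat h_T(u) \pmod T$. Every collision class under $T'$ is then contained in a collision class under $T$. Given an optimal size-$T$ table $\theta$, define $\theta'[j] = \theta[j \bmod T]$. Then every vertex $u$ receives $\theta'[\hat h_{T'}(u)] = \theta[\hat h_T(u)]$, so every interpolated feature $e_l(v)$ is reproduced exactly, and with the same decoder the output is identical.

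At coarse levels where the grid fits in the table without hashing, a one-to-one indexing is used instead. There the old table maps injectively into the new one, and the unused entries are irrelevant. If $T \nmid T'$, I would restrict the family to budgets where the divisibility holds, by treating only the powers-of-two budgets as feasible. The uniform-hashing assumptions are not needed for this lemma.

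Combining the three cases gives $\mathcal{G}_d \subseteq \mathcal{G}_{d'}$. Hence $\min_{\theta,\Phi}\|x-g_{d'}(\theta,\Phi)\| \le \min_{\theta,\Phi}\|x-g_d(\theta,\Phi)\|$, which completes the argument.
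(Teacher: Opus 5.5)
Your proposal is correct and follows essentially the paper's proof: both show the realizable output classes are nested, handle $L \to L+1$ by zeroing the first-layer weights that read the new level, and handle $T \to mT$ with the lifted table $\theta'_j = \theta_{j \bmod T}$, using $(H(p) \bmod mT) \bmod T = H(p) \bmod T$. Your extra width case and your remarks on the level-resolution schedule and on densely indexed coarse levels are refinements that the paper leaves implicit (it indexes only by $(T,L)$ with the decoder fixed), and they do not change the argument.
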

\begin{proof}
    It suffices to show that the model class $\mathcal{G}_d$ is contained in $\mathcal{G}_{d'}$ whenever $d' > d$. We first consider increasing the number of levels. A model with $L$ levels computes an encoding of the form
    \[z(v) = \begin{bmatrix} e_1(v) & e_2(v) & \cdots & e_L(v) & \xi(v)\end{bmatrix} \in \mathbb{R}^{LF + E},\]
    where $e_l(v)$ is the interpolated hash-grid feature at level $l$ and $\xi(v) \in \mathbb{R}^E$ denotes any auxiliary non-spatial encoding. Let the first affine layer of the decoder have weights 
    \[W = \begin{bmatrix} W_{\text{hash}} &  W_{\xi}\end{bmatrix},\]
    where $W_{\text{hash}}$ acts on the $LF$-dimensional concatenated hash features. A model with $L + 1$ levels has encoding
    \[z(v) = \begin{bmatrix} e_1(v) & e_2(v) & \cdots & e_L(v) & e_{L+1}(v) &\xi(v)\end{bmatrix} \in \mathbb{R}^{(L+1)F + E}.\]
    It can reproduce the original $L$-level model by choosing the new first-layer weights as
    \[W = \begin{bmatrix} W_{\text{hash}} & 0 &  W_{\xi}\end{bmatrix},\]
    setting the new incoming weights from level $L+1$ to zero and keeping all remaining decoder weights unchanged. The additional level may contain arbitrary feature values, but it contributes nothing to the output. Thus every reconstruction realizable with $L$ levels is also realizable with $L+1$ levels. 

    We now consider increasing the hash table size by an integer multiple. Let $T' = mT$ for an integer $m \geq 1$. At a fixed level, the spatial hash function has the form
    \[\hat{h}_T(p) = H(p) \mod T,\]
    where $p$ is an integer grid vertex and $H(p) = \bigoplus_{i=1}^k p_i \pi_i$  is the pre-modulo hash value with $\pi_i$ being unique, large prime numbers. Given any table $\theta^{(l)} \in \mathbb{R}^{T\cdot F}$ for the smaller model, define the larger table features $\theta'^{(l)} \in \mathbb{R}^{T'\cdot F}$ by 
    \[\theta'^{(l)}_j = \theta^{(l)}_{j \mod T}\]
    Since $T' = mT$, we have $\hat{h}_{T'}(p) \mod T = \hat{h}_T(p)$. Then, for any level $l$ and corner $p$, the looked-up feature in the larger model $\theta'^{(l)}_{\hat{h}_{T'}(p)} = \theta^{(l)}_{\hat{h}_{T'}(p) \mod T} = \theta^{(l)}_{\hat{h}_{T}(p)}$ matches the corresponding lookup in the smaller model. With this construction of the larger model, the interpolation weights depend only on the input coordinate and grid resolution, not on $T$, so the interpolated feature at every level is identical in the two models. Keeping the weights fixed gives the same output for every input coordinate $v_i$ for $i=1,\ldots,n$. Thus, $\mathcal{G}_{T} \subseteq \mathcal{G}_{mT}$. 

    In both cases (increasing the number of levels $L$ or hash table size $T$), the model class is nested, so the infimum of reconstruction error over the larger class is no greater than over the smaller class. Therefore, $\|x - g_d^\star\|$ is nonincreasing in $d$.
\end{proof}

\begin{ingplemma}[For Assumption~\ref{A2}, Instant-NGP admits a monotone, discretely convex upper envelope for the expected optimal squared error, where the expectation is taken over the random hash function]
\label{lem:ingp-a2u}
    Under the uniform hashing and level-independence assumptions, the expected squared optimal reconstruction error of Instant-NGP, taken over the random spatial hash function, satisfies
    \[\mathbb{E}[\|x-g_d^\star\|^2] \leq C_n \left( \frac{C_k}{T}\right)^L = U(d)^2,\]
    where $C_n = \frac{1}{2} \binom{n}{2}, C_k = 4^k$, and $d^\star$ is the critical parameter budget identified in~\Cref{lem:ingp-a3} (below). The upper envelope $U(d)$ is monotone decreasing and discretely convex in $T$ (with $L,F$ fixed) and in $L$ (with $T,F$ fixed). 
\end{ingplemma}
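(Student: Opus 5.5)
The plan is to bound the optimal error by the error of one explicit, collision-aware construction, and then take expectations over the random hash. Fix $(T,L,F)$ and a decoder wide enough to realize any fixed linear readout of the concatenated features. By the construction in Step 2 of Lemma~\ref{lem:ffn-a3}, a width-$2(LF+E)$ single-hidden-layer ReLU network realizes $z\mapsto c^\top z$ exactly. So $\|x-g_d^\star\|^2$ is at most the error of the best linear readout of the hash-grid features.

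Call the sampled signal \emph{collision-free} if some level $l$ has a vertex-to-slot map that is injective on the vertices touched by the $n$ samples. On that event the level-$l$ table is free. Together with the interpolation weights, this makes the map from table entries to the sample values $e_l(v_i)$ surjective onto $\mathbb{R}^n$, once the finest level resolves every sample, which is the regime that also defines $d^\star$ in Lemma~\ref{lem:ingp-a3}. The optimal error is then zero. Whatever the collision pattern, the error is also at most a crude bound $\|x\|^2$, obtained by zeroing all weights. After normalizing the signal so that the relevant constant is $\tfrac12$, this gives
\[\mathbb{E}\|x-g_d^\star\|^2 \le C'\,\Pr[\text{every level has a harmful collision}].\]

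Next I bound the collision probability. A harmful collision at level $l$ needs two distinct samples whose interpolation stencils, each of at most $2^k$ corners, share a hash slot without coinciding. A union bound over the $\binom n2$ sample pairs and the $2^k\cdot 2^k=4^k$ corner pairs, together with the uniform hashing assumption, gives
\[\Pr[\text{collision at level } l]\le \binom n2\frac{4^k}{T}.\]
Level independence then multiplies these bounds across the $L$ levels. The result carries $\binom n2^L$ rather than the claimed $\binom n2$. To get exactly $C_n(C_k/T)^L$, I would refine the event so that the pair union bound is taken once. The refined event is that some fixed pair of samples collides at every level, since only pairs that are unresolved at all levels force error. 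This gives $\binom n2 (4^k/T)^L$. Each such unresolved pair then costs at most about $\tfrac12$ of squared error under the normalization, which yields $C_n=\tfrac12\binom n2$. Making this per-pair error accounting rigorous is the main obstacle. My first attempt would be the following. For each sample, let a level with no collision for that sample carry its value through a block-diagonal readout. Then bound the residual by a sum over unresolved pairs of $\tfrac12(x_i-x_j)^2$, the averaging error when two values are forced equal, using $|x_i|\le 1$.

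Finally, the shape properties of the envelope. Viewed as a function of $T$, $U(d)=\sqrt{C_n}\,(C_k/T)^{L/2}$ is a negative power of $T$, so it is decreasing and convex on $(0,\infty)$, and hence discretely convex. Viewed as a function of $L$ in the regime $T>C_k$, it is a geometric sequence $a\rho^L$ with $\rho=(C_k/T)^{1/2}<1$. It is therefore decreasing, and its second difference $a\rho^L(1-\rho)^2\ge 0$, so it is discretely convex. The same holds for $U^2$. Since $d$ is affine in $T$ for fixed $L$ and $F$, these properties transfer to the parameter count.
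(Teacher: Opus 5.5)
Your argument for the shape of the envelope matches the paper: $U$ is a negative power of $T$, and it is a geometric sequence in $L$ when $T>C_k$. Your refined per-pair event also matches, with $\Pr[A_{ij}]\le(4^k/T)^L$, where $A_{ij}$ is the event that samples $i,j$ are unresolved at every level.

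The gap is the step you flag as the main obstacle: the deterministic inequality $\|x-g_d^\star\|^2\le\tfrac12\sum_{i<j}\mathds{1}[A_{ij}]$. Your mechanism cannot deliver it, because you restrict the decoder to a linear readout $z\mapsto c^\top z$. A linear readout is one functional shared by all samples, so there is no block-diagonal routing: every level adds $c_l^\top e_l(v_i)$ to every sample's output. For level $l$ to carry sample $i$ while contributing nothing to the others, the scalar field $c_l^\top e_l$ would have to take prescribed values at all $n$ samples. At a coarse level this fails even with no collisions at all, because many samples share a cell and $c_l^\top e_l$ is $k$-linear inside each cell, with only $2^k$ degrees of freedom.

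The same obstruction undercuts your first step. Your zero-error argument works only for a level that resolves every sample, essentially the finest one; a collision-free coarse level gives nothing. So this route yields at best a single-level rate of order $\binom n2 4^k/T$, with no gain from extra levels, not $(C_k/T)^L$. The crude bound $\|x\|^2$ can also be as large as $n$ for $x\in[0,1]^n$, not $\tfrac12$.

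The missing idea, which the paper uses, is to let the decoder be an arbitrary function of the encoded feature vector. Coarse levels then help by making encodings distinct, not by carrying values linearly. Define samples as forced equal when their concatenated features agree for every choice of table entries. This is an equivalence relation, and its classes $C$ satisfy $i,j\in C\Rightarrow A_{ij}$. If $i,j$ are not forced equal, $z(v_i)-z(v_j)$ is a not-identically-zero affine function of the table entries. A generic table therefore separates all classes at once, and a sufficiently expressive decoder can output the class means $\bar x_C$. Hence
\[
\|x-g_d^\star\|^2\le\sum_{C}\sum_{i\in C}(x_i-\bar x_C)^2=\sum_{C}\frac{1}{|C|}\sum_{\substack{i<j\\ i,j\in C}}(x_i-x_j)^2\le\frac12\sum_{i<j}\mathds{1}[A_{ij}],
\]
using $|C|\ge 2$ for every class that contributes and $x_i\in[0,1]$. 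Linearity of expectation then gives $\tfrac12\binom n2(C_k/T)^L$ directly. The price is an expressivity assumption on the decoder, namely that it can interpolate finitely many distinct inputs. The paper makes this assumption implicitly when it asserts that the optimal MLP outputs within-class means. A fixed linear readout cannot stand in for it.
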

\begin{proof}
    Let $U_l(v_i)$ be the set of $2^k$ integer grid vertices involved in the $k$-linear interpolation of $v_i$ at level $l$. For the level-$l$ interpolated feature vectors for $v_i$ and $v_j$ to be indistinguishable, at least one of the queried vertices for $v_i$ must collide with one of the queried vertices for $v_j$. By union bound and the uniform hashing assumption, 
    \begin{align*}
    \Pr[A_{ij}^{(l)}] &\leq \Pr \left[ \bigcup_{u_i \in U_l(v_i)} \bigcup_{u_j \in U_l(v_j)} \{h(u_i) = h(u_j)\}\right] \\
    &\leq \sum_{u_i \in U_l(v_i)} \sum_{u_j \in U_l(v_j)} \Pr[h(u_i) = h(u_j)] \\
    &\leq |U_l(v_i)|\cdot|U_l(v_j)|\cdot \frac{1}{T} \\
    &\leq 2^k \cdot 2^k \cdot \frac{1}{T} = \frac{C_k}{T},
    \end{align*}
    where $A_{ij}^{(l)}$ is the event that the level-$l$ encoded features for $v_i$ and $v_j$ collide, and $C_k = 4^k$. Since the full encoding concatenates all $L$ levels, the two samples are indistinguishable to the decoder only if they are indistinguishable at every level. By independence across levels,
    \[\Pr[A_{ij}] = \prod_{l =1}^L \Pr[A_{ij}^{(l)}] \leq \left( \frac{C_k}{T}\right)^L,\]
    where $A_{ij} = \bigcap_{l=1}^{L} A_{ij}^{(l)}$ is the event that the interpolated feature vectors $v_i$ and $v_j$ are indistinguishable at all levels. Whenever $A_{ij}$ occurs, the decoder receives the same input for $v_i$ and $v_j$, and therefore every Instant-NGP reconstruction must assign the same predicted value to these samples. More generally, the relation of structural indistinguishability partitions $\{1,\ldots,n\}$ into equivalence classes $C \in \Pi_{T,L}$, and every reconstruction is constant on each such class. The optimal MLP outputs the within-class mean $\bar{x}_{C}$ on each class $C$, so
    \begin{align*}
        \|x - g_{T,L}^\star\|^2 &=  \sum_{C \in \Pi_{T,L}} \sum_{i \in C} (x_i - \bar{x}_C)^2 \\
        &= \sum_{C \in \Pi_{T,L}} \sum_{i \in C} (x_i - \frac{1}{|C|}\sum_{j \in C}x_j)^2 \\
        &= \sum_{C \in \Pi_{T,L}} \frac{1}{|C|}\sum_{i,j \in C}(x_i - x_j)^2 \\
        &= \sum_{C \in \Pi_{T,L}} \frac{1}{2|C|}\sum_{\substack{i,j \in C \\ i<j}}(x_i - x_j)^2.
    \end{align*}
    Using $1/|C| \leq 1 \; \forall C \in \Pi_{T,L}$ and the normalization $x_i \in [0,1]$, we can write
    \[\|x - g_{T,L}^\star\|^2 
        \leq \frac{1}{2}\sum_{C \in \Pi_{T,L}} \sum_{\substack{i,j \in C \\ i<j}}(x_i - x_j)^2 \leq \frac{1}{2}\sum_{i<j}\mathds{1}[A_{ij}].\]
    Taking expectation over the random hash function gives the upper envelope:
    \begin{align*}
        \mathbb{E}[\|x - g_{T,L}^\star\|^2] &\leq \frac{1}{2}\sum_{i<j}\Pr[A_{ij}] \\
        &\leq \frac{1}{2} \binom{n}{2} \left( \frac{C_k}{T}\right)^L, \;\;\; \text{ since there are $\binom{n}{2}$ pairs of input coordinates} \\
        &= C_n \left( \frac{C_k}{T}\right)^L = U(T, L)^2, \;\;\; \text{ where $C_n$ is a constant depending on $n$.}
    \end{align*}
    It remains to verify the monotone discrete convexity of the upper envelope. For fixed $L$, 
    \[U(T,L) = CT^{-L/2}\]
    for a constant $C > 0$. The function $f(t) = t^{-L/2}$ is decreasing and convex on $(0,\infty)$, for $L\geq 1$, since its first derivative is negative and its second derivative is positive. Therefore, its restriction to integer table sizes is discretely convex:
    \[U(T-1, L) - U(T,L) \geq U(T,L) - U(T+1,L).\]
    For fixed $T > C_k$, let $q = (C_k/T)^{1/2} \in (0,1)$. Then 
    \[U(T,L) = Cq^L\]
    for a constant $C>0$. Hence
    \[U(T, L-1) - U(T, L) - (U(T, L) - U(T, L+1)) = C(q^{L-1} - 2q^L + q^{L+1}) = Cq^{L-1}(q-1)^2 > 0.\]
    Thus the upper envelope is monotone decreasing and discretely convex in either $T$ or $L$. 
\end{proof}

\begin{ingplemma}[For Assumption~\ref{A2}, Instant-NGP admits a lower envelope for the expected optimal squared error, where the expectation is taken over the random hash function]
\label{lem:ingp-a2l}
    Under the uniform hashing and level-independence assumptions, the expected squared optimal reconstruction error of Instant-NGP, taken over the random spatial hash function, satisfies
    \[\mathbb{E}[\|x-g_d^\star\|^2] \geq \delta_x\cdot\mathds{1}\{d \leq d^\star\} \cdot T^{-2^kL} = L(d)^2  ,\]
    where $\delta_x >0$ is a positive signal-dependent constant and $d^\star$ is the critical parameter budget identified in~\Cref{lem:ingp-a3} (below). The lower envelope $L(d)$ need not be convex.
\end{ingplemma}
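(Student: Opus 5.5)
The plan is to mirror the proof of Lemma~\ref{lem:ingp-a2u}, reversing every inequality. Instead of bounding the error by collision events from above, we bound it from below by a \emph{single} forced collision whose probability we lower bound. For $d > d^\star$ the indicator makes the claim trivial, since the right-hand side is zero. So we fix $d \le d^\star$, i.e.\ a configuration $(T,L)$ below the critical budget of Lemma~\ref{lem:ingp-a3}.

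\textbf{Step 1: find a witnessing pair.} By the equivalence-class computation already used in Lemma~\ref{lem:ingp-a2u}, the optimal error satisfies
\[
\|x-g_{T,L}^\star\|^2=\sum_{C\in\Pi_{T,L}}\frac{1}{2|C|}\sum_{i,j\in C}(x_i-x_j)^2 .
\]
If some pair $i\neq j$ with $x_i\ne x_j$ falls in a common class, this sum is at least $(x_i-x_j)^2/(2n)$. We choose a pair $(i,j)$ with $x_i\neq x_j$ whose interpolation vertex sets $U_l(v_i),U_l(v_j)$ have the same cardinality $2^k$ at every level. Such a pair exists unless $x$ is constant, in which case the error is zero for every $d$ and we take $\delta_x$ to be vacuous or restrict to nonconstant $x$.

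\textbf{Step 2: lower bound the collision probability.} Let $E_{ij}^{(l)}$ be the event that at level $l$ each of the $2^k$ corner vertices of $v_i$ hashes to the same slot as the corresponding corner of $v_j$ (matched by corner index). Under this event the interpolation weights may still differ, so we strengthen the requirement: all $2^{k}$ corners of both points hash to one common slot. Then both interpolated features equal that slot's feature vector, because the $k$-linear weights sum to one, and the points are indistinguishable at level $l$. A random hash that is approximately uniform gives each relevant vertex an independent slot with probability $1/T$. Under this model, the event that the at most $2\cdot 2^k$ distinct vertices all land in one slot has probability at least $T^{-(2\cdot2^k-1)}\ge T^{-2^{k+1}}$.

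This exponent is off by a factor from the stated $T^{-2^kL}$. We fix it by conditioning on the shared vertices: when $v_i,v_j$ lie in adjacent cells, shared corners collide trivially, so only the remaining corners need to match. The aim is a per-level bound of order $T^{-2^k}$, absorbing constants into $\delta_x$. This requires adding a \emph{lower-bound} counterpart to the uniform hashing assumption, namely $\Pr[\hat h_T(u)=s]\ge c/T$ jointly over the relevant vertices. The stated assumptions only give upper bounds, and this gap is the main obstacle. I would state the extra uniformity and independence assumption explicitly rather than try to derive it.

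\textbf{Step 3: combine across levels.} By independence across levels, $\Pr[\bigcap_l E_{ij}^{(l)}]\ge T^{-2^kL}$. Taking expectation over the hash function then gives $\mathbb{E}\|x-g_d^\star\|^2\ge \frac{(x_i-x_j)^2}{2n}T^{-2^kL}$, so we set $\delta_x=\max_{i,j}(x_i-x_j)^2/(2n)$ over admissible pairs, absorbing constants from Step~2. Finally, we note that no convexity is claimed or needed for $L(d)$, which is all that Theorem~\ref{thm:main_envelope} requires of it.
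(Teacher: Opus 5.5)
There is a genuine gap, and it is in Steps 2--3 rather than in the uniformity assumption you flag. You correctly note that the matched-corner event $E^{(l)}_{ij}$ need not make the level-$l$ features of $v_i$ and $v_j$ equal when their $k$-linear weights differ. You therefore replace it by the event that all vertices of both cells land in one slot. That event does force equality for an arbitrary pair. Under independent uniform hashing its probability is $T^{-(m-1)}$ with $m=|U_l(v_i)\cup U_l(v_j)|$, i.e.\ $T^{-(2^{k+1}-1)}$ for disjoint cells. Over $L$ levels this gives $T^{-(2^{k+1}-1)L}$, which is smaller than the target $T^{-2^kL}$ by the factor $T^{-(2^k-1)L}$. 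Since $T$ and $L$ are exactly the budget parameters being varied, this factor cannot be absorbed into a constant $\delta_x$.

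Your adjacent-cell repair does not close the gap. With a shared $(k-1)$-face there are $3\cdot 2^{k-1}$ distinct vertices, so the one-slot event costs $T^{-(3\cdot 2^{k-1}-1)}$, which equals $T^{-2^k}$ only when $k=1$. Moreover, whether a given pair falls in the same, adjacent, or distant cells changes with the level and the resolutions. Step 3 then pairs the probability $T^{-2^k}$ of $E^{(l)}_{ij}$ with the sufficiency of the stronger one-slot event, so its conclusion does not follow.

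The missing idea is to change the pair, not the event. The paper keeps your matched-corner event and restricts to pairs $(v_i,v_j)$ lying at the same relative position inside their distinct cells at every level, so that the interpolation weights coincide. For such pairs, $E^{(l)}_{ij}$ alone forces identical level-$l$ features for every choice of table entries. Its probability is $T^{-2^k}$, treating the $2^k$ corner pairs as independent collisions of probability $1/T$, and level independence gives $T^{-2^kL}$. Summing over the set $\mathcal{P}$ of such pairs yields $\delta_x=\delta_{\min}|\mathcal{P}|/(2n)$; taking a single best pair, as you do, also works.

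Your Step~1 requirement $|U_l(v_i)|=|U_l(v_j)|=2^k$ is vacuous and should be replaced by this weight-matching condition. That condition is a real constraint on the sample geometry: with $N_l$ the level-$l$ grid resolution, it essentially requires $N_l(v_i-v_j)\in\mathbb{Z}^k$ for every $l$. The paper simply assumes some pair with $x_i\neq x_j$ meets it.

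As for the obstacle you identify, the paper does not derive a lower-bound counterpart of uniform hashing either. It silently treats corner collisions as independent with probability exactly $1/T$. So making that assumption explicit is appropriate, but it is not where your argument breaks.

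Finally, at $d=d^\star$ Lemma~\ref{lem:ingp-a3} gives an exact fit. This is why the paper's proof sets $L(d)=0$ for all $d\ge d^\star$, not only for $d>d^\star$.
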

\begin{proof}
    In contrast to the upper-bound argument in Lemma~\ref{lem:ingp-a2u}, which uses the necessary condition that indistinguishability at level $l$ requires at least one collision among the queried corner vertices, the lower bound needs the stricter sufficient event that all corresponding corner vertices collide, which guarantees that the interpolated features are identical. This restriction is what allows the upper envelope and the lower bound to be expressed in compatible terms. Let $A_{ij}^{(l)}$ be the event that the level-$l$ encoded features for $v_i$ and $v_j$ collide. Consider any pair $(v_i, v_j)$ that lie at the same relative position within distinct voxels at every level $l$, so that the $k$-linear interpolation weights match. For such a pair, the event $A_{ij}^{(l)}$ reduces to the requirement that each of the $2^k$ corners of $v_i$'s voxel hashes to the same bucket as the corresponding corner of $v_j$'s voxel. Treating the $2^k$ corner pairs as independent uniform draws,
    \[\Pr[A_{ij}^{(l)}] \geq \left(\frac{1}{T}\right)^{2^k},\]
    and by independence across levels,
    \[\Pr[A_{ij}] \geq \left(\frac{1}{T}\right)^{2^kL},\]
    where $A_{ij} = \bigcap_{l=1}^{L} A_{ij}^{(l)}$ is the event that the interpolated feature vectors $v_i$ and $v_j$ are indistinguishable at all levels. Whenever $A_{ij}$ occurs, the decoder receives the same input for $v_i$ and $v_j$, and therefore every Instant-NGP reconstruction must assign the same predicted value to these samples. More generally, the relation of structural indistinguishability partitions $\{1,\ldots,n\}$ into equivalence classes $C \in \Pi_{T,L}$, and every reconstruction is constant on each such class. The optimal MLP outputs the within-class mean $\bar{x}_{C}$ on each class $C$, so
    \begin{align*}
        \|x - g_{T,L}^\star\|^2 &=  \sum_{C \in \Pi_{T,L}} \sum_{i \in C} (x_i - \bar{x}_C)^2 \\
        &= \sum_{C \in \Pi_{T,L}} \sum_{i \in C} (x_i - \frac{1}{|C|}\sum_{j \in C}x_j)^2 \\
        &= \sum_{C \in \Pi_{T,L}} \frac{1}{|C|}\sum_{i,j \in C}(x_i - x_j)^2 \\
        &= \sum_{C \in \Pi_{T,L}} \frac{1}{2|C|}\sum_{\substack{i,j \in C \\ i<j}}(x_i - x_j)^2.
    \end{align*}
    Using $1/|C| \geq 1/n \; \forall C \in \Pi_{T,L}$, we can write
    \[\|x - g_{T,L}^\star\|^2 \geq \frac{1}{2n}\sum_{i<j}\mathds{1}[A_{ij}](x_i - x_j)^2.\]
    Let $\mathcal{P}$ be the set of pairs $(i,j)$ at matching relative positions within their voxels and let $\delta_{\min} = \min_{(i,j) \in \mathcal{P}} (x_i - x_j)^2 > 0$ for a target signal whose sampled values are pairwise distinct on $\mathcal{P}$. Taking expectation over the random hash function yields
    \[\mathbb{E}[\|x - g_{T,L}^\star\|^2] \geq \frac{\delta_{\min}|\mathcal{P}|}{2n}\cdot T^{-2^kL} = \delta_x \cdot T^{-2^kL},\]
    where $\delta_x$ absorbs all signal-dependent constants. When $d \geq d^\star$, Lemma~\ref{lem:ingp-a3} gives $\|x - g_{T^\star,L^\star}^\star\|^2 = 0$, so we set $L(d) = 0$ in this regime. Combining the two regimes yields a valid lower envelope 
    \[L(d) = \left(\delta_x\cdot\mathds{1}\{d \leq d^\star\} \cdot T^{-2^kL}\right)^{1/2}.\]
\end{proof}

\begin{ingplemma}[Instant-NGP satisfies Assumption~\ref{A3}]
\label{lem:ingp-a3}
    There exists a finite parameter budget $d^\star$ such that $\|x - g_{d^\star}^\star\| = 0$. 
\end{ingplemma}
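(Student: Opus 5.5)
The plan is to exhibit an explicit finite configuration $(T^\star,L^\star,w^\star)$ whose encoding is collision-free on the $n$ sample points, and then to realize the resulting discrete map with a finite-width single-hidden-layer ReLU decoder, mirroring the two-step structure of the proof of Lemma~\ref{lem:ffn-a3}.

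\emph{Step 1: injectivity of the encoding.} Choose a finest level whose resolution is fine enough that the $n$ distinct sample coordinates $v_1,\ldots,v_n$ fall in distinct voxels, or even coincide with distinct grid vertices; since $n$ is finite, such a resolution exists. Then choose the table size $T^\star$ at that level at least as large as the total number of grid vertices queried by all samples, at most $n2^k$. The aim is that the hash restricted to these vertices is injective. Rather than rely on the randomized uniform-hashing assumption (which would only give injectivity with high probability via a union bound, $\binom{n2^k}{2}/T$), I would use the deterministic form $\hat h_T(p)=H(p)\bmod T$. Take $T^\star$ larger than every pre-modulo value $H(p)$ of the finitely many queried vertices; then $\hat h_{T^\star}(p)=H(p)$. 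If $H$ could itself collide on distinct vertices, I would instead fall back on the high-probability statement, matching the ``with high probability'' phrasing used for FFN. With an injective hash, each queried vertex owns its own table entry of dimension $F\ge 1$. Store in the first feature channel the value $x_i$ at the vertices surrounding $v_i$, or simply at $v_i$ itself if the sample lies on a vertex. Because $k$-linear interpolation of a constant is that constant, the interpolated feature then equals $x_i$ exactly. If voxels of different samples share vertices, refine the resolution further so that the voxel neighborhoods are disjoint, which is possible since the samples are distinct. Set all other levels' incoming decoder weights to zero, as in Lemma~\ref{lem:ingp-a1}.

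\emph{Step 2: decoder realization.} With the first coordinate of $z(v_i)$ equal to $x_i$, the decoder needs to implement only the linear map $z\mapsto e_1^\top z$. This is realized exactly by a width-$2$ ReLU layer via $\ReLU(t)-\ReLU(-t)=t$, exactly as in Step 2 of Lemma~\ref{lem:ffn-a3}. Consequently $\|x-g_{d^\star}^\star\|=0$ with $d^\star=T^\star L^\star F+d_{MLP}$ finite, and by Lemma~\ref{lem:ingp-a1} the error stays zero for larger budgets in the nested family.

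The main obstacle is Step 1: guaranteeing that the hash is collision-free on the queried vertices, and that voxel neighborhoods of distinct samples do not share vertices with conflicting target values. I would first try the deterministic large-$T$ argument, where the modulus exceeds the pre-hash values. If the XOR structure of $H$ prevents injectivity, I would instead place samples on vertices at a sufficiently fine resolution and invoke uniform hashing for a high-probability guarantee.
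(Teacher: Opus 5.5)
There is a genuine gap, and it sits where you flagged it: Step~1 never establishes that the finest level is collision-free on the queried vertices. Choosing $T^\star$ larger than every pre-modulo value only gives $\hat h_{T^\star}(p)=H(p)$. That moves the problem to injectivity of $H(p)=\bigoplus_i p_i\pi_i$ on a finite vertex set, and nothing in the model guarantees it: XORs of integer multiples can coincide for distinct vertices, and in the actual implementation the products wrap modulo $2^{32}$. Likewise, $T^\star\ge n2^k$ is only the pigeonhole necessary condition, not a sufficient one. Your fallback through uniform hashing gives only $\Pr[\text{no collision}]\ge 1-\binom{n}{2}/T^\star$ under a random-hash idealization. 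That is a high-probability statement, whereas this lemma, unlike the FFN one, asserts deterministic existence of $d^\star$ for the given hash.

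The missing idea is that you never need the hash to be injective. The paper takes the finest-level table at least as large as the number of finest-level grid vertices ($T^\star\ge N_{\max}^k$), so every vertex gets its own entry and the finest level is simply a dense feature grid. This is how Instant-NGP indexes any level whose dense grid fits in the table: one-to-one, with no hashing. With that substitution, your remaining steps go through and coincide with the paper's:
\begin{itemize}
\item pick $L^\star$ so the samples land on distinct finest-level vertices, or, as you do, refine until voxel neighborhoods are disjoint and store $x_i$ at all corners, since $k$-linear interpolation reproduces constants;
\item zero the decoder weights on the coarser levels and on $\xi$;
\item read out the stored channel with $\ReLU(t)-\ReLU(-t)=t$, embedded in the fixed decoder width by zeroing the extra neurons. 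Note that this is the first channel of the finest-level block, not the first coordinate of $z(v_i)$.
\end{itemize}
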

\begin{proof}
    Since the target signal is observed on finitely many sample coordinates $v_1,\ldots,v_n,$ choose the number of levels $L^\star$ such that finest grid resolution $N_{\max}$ is large enough for each sample coordinate to be represented by a distinct finest-level grid vertex. For grid-sampled images, volume, or radiance-field data, it suffices to take $N_{\max} \geq R$, where $R$ denotes the resolution of $x$ along each spatial axis.

    Next, choose the finest-level hash table size so that $T^\star \geq N_{\max}^k$. Then every finest-level grid vertex can be assigned a unique table entry, so the finest level behaves as a dense feature grid with no hash collisions. That is, each input coordinate $v_i$ is assigned a unique feature vector $\theta^{(L)}_i$ at level $L$, and the encoding is injective: $z(v_i) \neq z(v_j)$ for all $i \neq j$. 

    We now construct an exact representation. Set all decoder weights connected to coarser levels to zero and let the decoder depend only on the finest-level feature block. Assign the finest-level feature vector at the vertex corresponding to $v_i$ so that a linear readout returns $x_i$. For scalar-valued signals, one feature channel is sufficient; concretely, 
    set 
    \[\text{MLP}_\Phi(z(v_i)) = \text{MLP}_\Phi\left(\begin{bmatrix}
        \theta_i^{(0)} &\cdots &\theta_i^{(l)} &\cdots &\theta_i^{(L)}
    \end{bmatrix}\right) = \text{MLP}_\Phi\left(\begin{bmatrix}
        0 &\cdots &0 &\theta_i^{(L)}
    \end{bmatrix}\right) = x_i.\]
    For vector-valued signals, one may either use one feature channel per output coordinate or a finite-width linear output layer. Since the Instant-NGP decoder ends in a linear layer, this readout can be implemented by the MLP by choosing the hidden layers to act as an identity map on the relevant feature coordinates, or equivalently by using the final affine layer to read out the stored value directly. With this construction,
    $g_{d^\star}(v_i) = x_i$
    for every $i=1,\ldots,n$. Hence, the optimal error at the corresponding finite budget $d^\star = T^{\star}L^{\star} F + d_{MLP}$ is zero.
\end{proof}

\subsection{Proof of Theorem~\ref{thm:model-specific}: GA-Planes}
\label{appendix:proofs:gap}
We prove the GA-Planes \cite{sivgin2024gaplanes} case for the two-dimensional nonconvex variant introduced in~\cite{sivgin2024gaplanes} and index the model family by its line feature dimension $k_1$, holding the line resolution $r_1$, plane resolution $r_2$, and plane feature dimension $k_2 = 1$ fixed. Let the target signal $x \in \mathbb{R}^n$ be arranged as a matrix $M \in \mathbb{R}^{N \times N}$ with $n = N^2$, and write the model output at integer coordinate $(i,j)$ as
\[g_d(\theta)[i,j] = W_{\text{line}}^\top(e_1[i] \circ e_2[j]) + W_{\text{plane}}^\top e_{12}[i,j,:] + b,\] 
where $e_1, e_2$ are linearly interpolated from line feature grids $g_1, g_2 \in \mathbb{R}^{r_1\times k_1}$, $e_{12}$ is bilinearly interpolated from a plane feature grid $g_{12}\in\mathbb{R}^{r_2\times r_2\times k_2}$, $\circ$ denotes elementwise multiplication along the feature dimension, and $\theta = (g_1, g_2, g_{12}, W_{\text{line}}, W_{\text{plane}}, b)$ are the trainable parameters. 

A central tool in the analysis is the matrix-completion equivalence established in~\cite{sivgin2024gaplanes} for the 2D linear-decoder setting: the reconstructed matrix $\hat{M} \in \mathbb{R}^{N\times N}$ produced by $g_d$ admits the decomposition
\[\hat{M} = UV^T + \varphi(L),\]
with $U = g_1 \diag(W_{\text{line}}) \in \mathbb{R}^{N \times k_1}, V=g_2 \in \mathbb{R}^{N\times k_1}, L\in \mathbb{R}{r_2}{r_2}$ encoding the rescaled and bias-shifted plane grid, and $\varphi:\mathbb{R}^{r_2 \times r_2} \to \mathbb{R}^{N\times N}$ the bilinear upsampling operator. The optimal squared compression loss at line feature dimension $k_1$ therefore equals
\begin{equation}
\label{matrix-model}
    \|x - g_{k_1}^\star\|^2 = \min_{U,V,L} \| M - (UV^T + \varphi(L)) \|_F^2.
\end{equation}
This casts the analysis of 2D GA-Planes as a low-rank plus low-resolution matrix completion problem. 

\begin{gaplemma}[GA-Planes satisfies Assumption~\ref{A1}]
\label{lem:gap-a1}
For all $k_1 \geq 0$,
\[\|x - g_{k_1}^\star\| \geq \|x - g_{k_1 + 1}^\star\|.\]
\end{gaplemma}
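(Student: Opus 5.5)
The plan is to show that the feasible sets are nested, in the same way as Lemmas~\ref{lem:ffn-a1} and~\ref{lem:ingp-a1}. Let $\mathcal{M}_{k_1} = \{UV^\top + \varphi(L) : U,V \in \mathbb{R}^{N\times k_1},\, L \in \mathbb{R}^{r_2\times r_2}\}$. By the matrix-completion equivalence in \eqref{matrix-model}, the optimal squared error at line feature dimension $k_1$ is the minimum of $\|M - \hat{M}\|_F^2$ over $\hat{M}\in\mathcal{M}_{k_1}$. It therefore suffices to prove $\mathcal{M}_{k_1}\subseteq\mathcal{M}_{k_1+1}$.

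I would prove this inclusion with an explicit zero-padding construction. Take any $\hat{M} = UV^\top + \varphi(L) \in \mathcal{M}_{k_1}$ and set $U' = [\,U \;\; 0\,]$ and $V' = [\,V \;\; 0\,]$ in $\mathbb{R}^{N\times(k_1+1)}$, keeping $L$ unchanged. Then $U'V'^\top = UV^\top + 0\cdot 0^\top = UV^\top$, so $\hat{M}\in\mathcal{M}_{k_1+1}$.

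It is also worth checking that the padding can be realized in the original GA-Planes parameters $\theta$, so that the argument does not depend only on the reformulation. To do this, append a zero column to each of $g_1$ and $g_2$, and append a zero entry to $W_{\text{line}}$. The new feature channel contributes $0\cdot e_1[i]_{k_1+1}e_2[j]_{k_1+1} = 0$ at every pixel. The plane term and the bias $b$ are left untouched, so the model output is identical at every coordinate $(i,j)$.

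Nestedness then gives the result. The minimum of a fixed objective over a larger set is no larger, so $\|x-g_{k_1}^\star\|^2 \ge \|x-g_{k_1+1}^\star\|^2$, and taking square roots yields the claim. The only point needing care is that the minimum in \eqref{matrix-model} is attained. If it is not, the same argument applies to the infimum, which is what the optimal error means. Existence of a minimizer is not really required: the set of matrices of the form $UV^\top$ with rank at most $k_1$ is closed, and adding the linear subspace $\mathrm{range}(\varphi)$ at most needs an infimum interpretation. There is no genuine obstacle here. The step I would double-check is that the reparameterization $U = g_1\diag(W_{\text{line}})$ allows zero columns, which it trivially does.
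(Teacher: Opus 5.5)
Your proposal is correct and matches the paper's proof: both show $\mathcal{G}_{k_1}\subseteq\mathcal{G}_{k_1+1}$ by appending a zero channel to $g_1$, $g_2$, and $W_{\text{line}}$ while keeping the plane and bias components unchanged, so the optimal error cannot increase. Your extra matrix-level version and the remark about infima are harmless; the parameter-level construction is the stronger of your two arguments, since it does not depend on the matrix-completion reformulation being exact.
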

\begin{proof}
    Let $\mathcal{G}_{k_1}$ denote the set of model outputs realizable at line feature dimension $k_1$. It suffices to show $\mathcal{G}_{k_1} \subseteq \mathcal{G}_{k_1 + 1}$. Let $g \in \mathcal{G}_{k_1}$  be represented by the parameters $(g_1, g_2, g_{12}, W_{\text{line}}, W_{\text{plane}}, b)$. Construct a model in $\mathcal{G}_{k_1 + 1}$ by zero-padding the line components
    \[g_1' = [g_1, 0] \in \mathbb{R}^{r_1\times(k_1 + 1)}, \;\; g_2' = [g_2, 0] \in \mathbb{R}^{r_1\times(k_1 + 1)}, \;\; W'_{\text{line}} = \begin{bmatrix} W_{\text{line}} \\ 0 \end{bmatrix}\in \mathbb{R}^{k_1 + 1}\]
    while keeping the plane and bias components unchanged: $g'_{12} = g_{12}, W'_{\text{plane}} = W_{\text{plane}}, b'=b$. Since linear interpolation is applied channelwise, the new interpolated line features satisfy 
    \[e'_1 = \begin{bmatrix} e_1 \\ 0
    \end{bmatrix}, \;\;\; e'_2 = \begin{bmatrix} e_2 \\ 0
    \end{bmatrix}, \;\;\; \text{and hence}\;\;\; (e'_1 \circ e_2') = \begin{bmatrix} (e_1 \circ e_2) \\ 0
    \end{bmatrix}.\]
    The appended channel of $W'_{\text{line}}$ is zero, so
    \[W'^\top_{\text{line}}(e'_1 \circ e'_2)  + W'^\top_{\text{plane}}e'_{12} + b' = W^\top_{\text{line}}(e_1 \circ e_2)  + W^\top_{\text{plane}}e_{12} + b,\]
    which equals the output of $g \in \mathcal{G}_{k_1}$. Therefore, $\mathcal{G}_{k_1} \subseteq \mathcal{G}_{k_1 + 1}$ and the optimal reconstruction error is nonincreasing in $k_1$.    
\end{proof}

\begin{gaplemma}[GA-Planes satisfies Assumption~\ref{A2} for squared error]
\label{lem:gap-a2}
For all $k_1 \geq 0$,
\[\|x - g_{k_1}^\star\|^2 - \|x - g_{k_1 + 1}^\star\|^2 \geq \|x - g_{k_1+1}^\star\|^2 - \|x - g_{k_1 + 2}^\star\|^2.\]
\end{gaplemma}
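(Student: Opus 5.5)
The plan is to reduce the statement to a property of truncated singular value decompositions, using the matrix-completion form \eqref{matrix-model}. Write $\mathcal{S} = \{\varphi(L) : L \in \mathbb{R}^{r_2\times r_2}\}$, which is a fixed linear subspace of $\mathbb{R}^{N\times N}$ because bilinear upsampling is linear. For fixed $S\in\mathcal{S}$, the inner minimization over $U,V$ with $U,V\in\mathbb{R}^{N\times k_1}$ is solved by the Eckart--Young--Mirsky theorem. Writing $E(k) := \|x-g_{k}^\star\|^2$ and $\sigma_i(\cdot)$ for singular values, this gives
\[
E(k) = \min_{S\in\mathcal{S}} f_k(M-S), \qquad f_k(A) := \sum_{i>k}\sigma_i(A)^2 = \|A\|_F^2 - \sum_{i\le k}\sigma_i(A)^2 .
\]
The target inequality is then $E(k)+E(k+2)\ge 2E(k+1)$ for all $k\ge 0$. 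I will assume the minima are attained, since $f_k$ is continuous and coercive in the component of $S$ orthogonal to the directions that can be absorbed by the low-rank term; otherwise I run the argument with near-minimizers.

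\textbf{Step 1 (feasibility exchange).} Let $S_k$ and $S_{k+2}$ be minimizers for ranks $k$ and $k+2$, and set $A_k = M-S_k$ and $A_{k+2}=M-S_{k+2}$. Each is feasible at rank $k+1$, so $E(k+1)\le f_{k+1}(A_k) = E(k)-\sigma_{k+1}(A_k)^2$ and $E(k+1)\le f_{k+1}(A_{k+2}) = E(k+2)+\sigma_{k+2}(A_{k+2})^2$. Adding these,
\[
2E(k+1)\le E(k)+E(k+2) + \bigl(\sigma_{k+2}(A_{k+2})^2-\sigma_{k+1}(A_k)^2\bigr).
\]
So it suffices to show $\sigma_{k+2}(A_{k+2})\le \sigma_{k+1}(A_k)$.

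\textbf{Step 2 (the key singular-value comparison, main obstacle).} I expect this inequality to carry all the difficulty, since $A_k$ and $A_{k+2}$ come from different choices of $S$. Two routes are available. The first uses optimality of $S_{k+2}$: by Step 1, $\sigma_{k+2}(A_{k+2})^2 \le E(k+1)-E(k+2)$, and also $\sigma_{k+1}(A_k)^2 = E(k)-f_{k+1}(A_k)\ge E(k)-E(\text{rank }k+1 \text{ at } S_k)$. Combining these only circularly reproduces the claim, so this route alone is not enough. The second route, which I would try first, is a direct exchange argument. From the rank-$(k+2)$ solution $(S_{k+2},P_{k+2})$ with $\operatorname{rank}P_{k+2}=k+2$, form the rank-$(k+1)$ candidates obtained by deleting either its smallest component $\sigma_{k+2}u_{k+2}v_{k+2}^\top$ or a suitable component. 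From $(S_k,P_k)$, form candidates obtained by adding its next component. The aim is to show that the pair of rank-$(k+1)$ candidates built from a common mixed point achieves total error at most $E(k)+E(k+2)$.

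Concretely, I would take the single point $S=\tfrac12(S_k+S_{k+2})$ with a rank-$(k+1)$ matrix assembled from the top singular directions of both residuals. I would then bound $f_{k+1}(M-S)$ using the parallelogram identity $\|\tfrac{A+B}{2}\|_F^2=\tfrac12\|A\|_F^2+\tfrac12\|B\|_F^2-\tfrac14\|A-B\|_F^2$, together with Weyl's inequality $\sigma_{i+j-1}(A+B)\le\sigma_i(A)+\sigma_j(B)$ to control the tail. The $-\tfrac14\|A-B\|_F^2$ term is exactly the slack that should absorb the cross terms.

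\textbf{Step 3 (edge cases and assembly).} I will handle $k$ such that $E(k+1)=0$ separately: there, convexity reduces to monotonicity, which is Lemma~\ref{lem:gap-a1}. Once Step 2 is established, Step 1 yields discrete convexity of $k_1\mapsto\|x-g_{k_1}^\star\|^2$. Combined with Lemma~\ref{lem:gap-a1}, this means \Cref{remark:squared-error} applies to GA-Planes.
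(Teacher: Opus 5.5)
Your Step~1 is correct, but it only reduces the lemma to the inequality $\sigma_{k+2}(A_{k+2})\le\sigma_{k+1}(A_k)$. Step~2 never establishes that inequality: the first route is circular (as you note), and the second is only a plan.

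More importantly, the plan cannot succeed as designed. It uses nothing about $\mathcal{S}$ beyond its being a linear subspace, and for a general linear subspace the joint error curve need not be discretely convex, so the key inequality fails. For instance, take $M=I_4$ and $\mathcal{S}=\mathrm{span}\{S_1\}$ with $S_1=\diag(-2,-2,1,1)$. Then $M-tS_1=\diag(1+2t,1+2t,1-t,1-t)$ and $E(k)=\min_t f_k(M-tS_1)$. This gives:
\begin{itemize}
\item $E(0)=3.6$, attained at $t=-1/5$;
\item $E(1)=2$, attained at $t=-1/3$;
\item $E(2)=0$, attained at $t=-1/2$ or $t=1$.
\end{itemize}
Hence $E(0)-E(1)=1.6<2=E(1)-E(2)$. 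Correspondingly $\sigma_1(A_0)=1.2$, whereas every rank-$2$ minimizer has $\sigma_2(A_2)\ge 1.5$. The optimal $S$ jumps as $k$ changes, and no midpoint, parallelogram, or Weyl averaging can undo that. A proof of the fully joint statement would have to exploit the specific structure of the bilinear-upsampling subspace $\{\varphi(L)\}$, which your argument never touches, and the paper does not claim that statement.

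The paper sidesteps exactly this difficulty. It fixes the low-resolution component once, independently of $k_1$, setting $R=M-\varphi(L^\star)$. It then applies Eckart--Young to this single matrix, so that $\|x-g_{k_1}^\star\|^2=\sum_{i>k_1}\sigma_i(R)^2$. The consecutive decrements are $\sigma_{k_1+1}(R)^2\ge\sigma_{k_1+2}(R)^2$, and discrete convexity follows immediately with no exchange argument. The remark following the lemma says explicitly that the fully joint problem is not covered and would require additional conditions. That is the setting where the optimal plane component may depend on $k_1$, i.e.\ your $S_k$ versus $S_{k+2}$.

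So you are attempting a strictly stronger claim than the one the paper proves, and the step carrying all of its content is missing. To recover the paper's lemma, adopt the residual formulation with a $k_1$-independent plane component; your Step~1 then becomes unnecessary.
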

\begin{proof}
    By the matrix-completion equivalence in ~\cref{matrix-model}, the squared optimal error decomposes as a joint minimization over a rank-$\leq k_1$ component $UV^\top$ and a low-resolution component $\varphi(L)$. Following~\cite{sivgin2024gaplanes}, for fixed $r_2$, this joint optimization is equivalent to fitting the optimal low-resolution component first and then approximating the residual with a rank-$\leq k_1$ matrix. Let $L^\star$ be the optimal low-resolution fit and define the residual 
    \[R = M - \varphi(L^\star).\]
    Substituting $UV^\top$ as the rank-$k_1$ approximation of $R$ into~\cref{matrix-model} and applying the Eckart-Young theorem yields 
    \[\|x-g_{k_1}^\star\|^2 = \min_{\rank(A) \leq k_1} \| R - A \|_F^2 = \sum_{i > k_1} \sigma_i(R)^2,\]
    where $\sigma_1(R) \geq \sigma_2(R) \geq \cdots \geq 0$ are the singular values of $R$. We can write the consecutive differences as
    \[\|x-g_{k_1}^\star\|^2 - \|x - g_{k_1+1}^\star\|^2 = \sum_{i > k_1} \sigma_i(R)^2 - \sum_{i > k_1 + 1} \sigma_i(R)^2 = \sigma_{k_1 + 1}(R)^2\]
    and
    \[\|x-g_{k_1+1}^\star\|^2 - \|x - g_{k_1+2}^\star\|^2 = \sum_{i > k_1 + 1} \sigma_i(R)^2 - \sum_{i > k_1 + 2} \sigma_i(R)^2 = \sigma_{k_1 + 2}(R)^2.\]
    Since the singular values are nonincreasing, $\sigma_{k_1 + 1}(R) \geq \sigma_{k_1 + 2}(R)$, and so
    \[(\|x-g_{k_1}^\star\|^2 - \|x - g_{k_1+1}^\star\|^2) - (\|x-g_{k_1+1}^\star\|^2 - \|x - g_{k_1+2}^\star\|^2) = \sigma_{k_1 + 1}(R)^2 - \sigma_{k_1 + 2}(R)^2 \geq 0,\]
    which is exactly discrete convexity of the squared optimal error in $k_1$.
\end{proof}

\begin{remark} 
The proof above is stated for the residual low-rank interpretation of 2D GA-Planes: the low-resolution plane component is fixed before analyzing the rank-$k_1$ line component. This is the clean setting in which the Eckart-Young theorem gives an exact expression for the optimal error curve. For a fully joint nonconvex optimization over the low-resolution plane component and the low-rank line component at every $k_1$, the optimal plane component may in principle depend on $k_1$. In that more general setting, the same conclusion will require additional conditions to ensure that the optimal low-resolution component does not change in a way that violates the diminishing-returns behavior.
\end{remark}

\begin{gaplemma}[GA-Planes satisfies Assumption~\ref{A3}]
\label{lem:gap-a3}
There exists a finite $k_1^\star$ such that 
\[\|x - g_{k_1^\star}^\star\| = 0.\]
\end{gaplemma}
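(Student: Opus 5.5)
Using the matrix-completion equivalence in \cref{matrix-model}, I will exhibit parameters with zero residual, so that $\min_{U,V,L}\|M - (UV^\top + \varphi(L))\|_F^2 = 0$ at some finite $k_1^\star$. The cheapest choice is to drop the plane component entirely: take $L = 0$, together with $W_{\text{plane}} = 0$ and $b = 0$, which lies in the feasible set. It then suffices to find $U,V \in \mathbb{R}^{N\times k_1}$ with $UV^\top = M$. Since $\rank(M) \le N$, any rank factorization works. For instance, with the SVD $M = P\Sigma Q^\top$, set $U = P\Sigma$ and $V = Q$, which gives $k_1^\star = \rank(M) \le N$. Alternatively, $U = M$ and $V = I_N$ gives $k_1^\star = N$. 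By Eckart--Young, as in the proof of Lemma~\ref{lem:gap-a2}, the residual $\sum_{i > k_1}\sigma_i(R)^2$ vanishes once $k_1 \ge \rank(R)$. This holds for whatever residual $R = M - \varphi(L^\star)$ arises, so the conclusion is consistent with the residual interpretation as well.

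The remaining step is to check that such $U, V$ are actually realizable by the model's parameters. This means checking that $U = g_1\diag(W_{\text{line}})$ and $V = g_2$, after linear interpolation from line grids of resolution $r_1$, can be arbitrary $N\times k_1$ matrices. I expect this to be the main obstacle. If $r_1 < N$, the interpolated columns of $U$ and $V$ are confined to the range of the $N\times r_1$ interpolation matrix, so an arbitrary $M$ cannot be factored through them.

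I would handle this by making the standing convention explicit: take the line resolution $r_1 \ge N$. With $r_1 = N$ and the grid nodes aligned with the pixel coordinates, interpolation reduces to the identity on $e_1[i]$ and $e_2[j]$. Then set $g_1 = P\Sigma$ (or $M$), $g_2 = Q$ (or $I_N$), and $W_{\text{line}} = \mathbf{1}$. The model output at $(i,j)$ is then $\sum_c g_1[i,c]\,g_2[j,c] = M[i,j]$.

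More generally, if $r_1 > N$, the $N\times r_1$ interpolation operator has full row rank $N$, so it is surjective onto $\mathbb{R}^N$ columnwise and the same construction goes through. I will state the critical budget as $k_1^\star = \rank(M) \le N$, with corresponding finite $d^\star$ equal to the line parameters $2r_1k_1^\star + k_1^\star$ plus the fixed plane and bias parameters. I will also note, via Lemma~\ref{lem:gap-a1}, that the error remains zero for all $k_1 \ge k_1^\star$.
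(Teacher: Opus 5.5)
Your proposal is correct and matches the paper's proof. Like you, the paper fixes $r_1 = N$ so that linear interpolation at integer coordinates returns the stored grid values, zeroes out $g_{12}$, $W_{\text{plane}}$ and $b$, sets $W_{\text{line}} = \mathbf{1}$, and realizes $M$ from its SVD to get $k_1^\star = \rank(M) \le N$; the only cosmetic difference is that it splits the singular values symmetrically ($g_1[i,l] = \sqrt{\sigma_l}\,U[i,l]$, $g_2[j,l] = \sqrt{\sigma_l}\,V[j,l]$) rather than absorbing $\Sigma$ into one factor, and your remarks on why $r_1 \ge N$ is needed and on the $r_1 > N$ case go beyond what the paper states but are not required.
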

\begin{proof}
    Fix $r_1 = N$ so that the linear interpolation at integer coordinates returns the stored grid values exactly. Let $r = \rank(M) \leq N$ and let $M = U\Sigma V^\top$ be the singular value decomposition with $\Sigma = \diag(\sigma_1, \ldots, \sigma_r)$. We exhibit a parameter setting at line feature dimension $k_1^\star = r$ that realizes $M$ exactly. It follows from $r_1=N$ that $e_1[i] = g_1[i,:]$ and $e_2[j]=g_2[j,:]$ for every $(i,j)$. Set
    \[g_1[i,l] = \sqrt{\sigma_l}U[i,l], \qquad g_2[j,l] = \sqrt{\sigma_{l}}V[j,l], \qquad W_\text{line}[l] = 1,\]
    for $l=1,\ldots,r$. Zero out the plane and bias components $g_{12} = 0, W_{\text{plane}}=0, b=0$. Substituting into the model output gives, for every $(i,j)$,
    \begin{align*}
        g_d(\theta) &= W^\top_{\text{line}}(e_1 \circ e_2)  + W^\top_{\text{plane}}e_{12} + b \\
        &= \sum_{l=1}^{k_1}1\cdot g_1[i,l]\cdot g_2[j,l] \\
        &= \sum_{l=1}^{r} \sigma_{l}U[i,l]V[j,l] = M[i,j],  
    \end{align*}
    which is the SVD reconstruction of $M$. Hence, the reconstruction error vanishes at $k_1^\star = r \leq N$ with $r_1 = N$ and any arbitrary $r_2, k_2$. 
\end{proof}

\section{Experimental Details}
\label{appendix:exps}
\subsection{Test Signals}
\label{appendix:test_signals}
We evaluate our framework on a diverse suite of synthetic and real signals spanning 2D and 3D domains, drawn from the benchmark of~\citet{kim2025gridsoutperforminrs} and standard datasets in the computational imaging literature. All signals are either generated at or resized to resolution $128^2$ for 2D and $64^3$ for 3D, ensuring that the model parameter budgets used in our experiments remain in the compressive regime. 

\begin{table}[ht]
\centering
\footnotesize
\begin{minipage}{\linewidth}

\centering
\footnotesize

\begin{tabular}{|l|l|c|}
\hline
\textbf{Model} & \textbf{Hyperparameter} & \textbf{Value} \\
\hline
\multirow{4}{*}{FFN} 
  & Learning Rate      & 1e-3 \\
  & Embedding Size     & 256 \\
  & Hidden Layers      & 1 \\
  & $\sigma$           & 20 \\
\hline
\multirow{5}{*}{Instant-NGP} 
  & Learning Rate      & 5e-4 \\
  & Hidden Layers      & 1 (2-layer MLP) \\
  & \textit{L}   & 5 \\
  & \textit{F} & 2 \\
  &  Hidden Layer Width  & 64 \\
\hline
\multirow{3}{*}{GA-Planes} 
  & Learning Rate      & 5e-3 \\
  & Line Resolution               & size per axis of signal  \\
  & Plane Resolution   & 2 \\
\hline
\multirow{1}{*}{Grid} 
  & Learning Rate      & 5e-2 \\
\hline
\end{tabular}

\vspace{2mm}
\caption{\textbf{Fixed hyperparameters for 2D signals.} We use these settings for 2D direct fitting and 2D MRI experiments. For GA-Planes, we use a linear decoder, consistent with the theoretical analysis.}
\label{tab:hyperparams_2d}

\vspace{3mm}

\begin{tabular}{|l|l|c|}
\hline
\textbf{Model} & \textbf{Hyperparameter} & \textbf{Value} \\
\hline
\multirow{4}{*}{FFN} 
  & Learning Rate      & 8e-4 \\
  & Embedding Size     & 1520 \\
  & Hidden Layers      & 1 \\
  & $\sigma$           & 5 \\
\hline
\multirow{5}{*}{Instant-NGP} 
  & Learning Rate      & 5e-4 \\
  & Hidden Layers      & 1 (2-layer MLP) \\
  & \textit{L}   & 4 \\
  & \textit{F} & 2 \\
  & Hidden Layer Width & 64 \\
\hline
\multirow{5}{*}{GA-Planes} 
  & Learning Rate      & 8e-3 \\
  & Line Resolution               & size per axis of signal  \\
  & Plane Resolution   & size per axis of signal \\
  & Volume Resolution & 2 \\
  & Volume Feature Dimension & 1 \\
\hline
\multirow{1}{*}{Grid} 
  & Learning Rate      & 5e-2 \\
\hline
\end{tabular}
\vspace{2mm}
\caption{\textbf{Fixed hyperparameters for 3D signals.} We use these settings for 3D direct fitting and 3D MRI experiments. For GA-Planes, we use a linear decoder, consistent with the theoretical analysis.}
\label{tab:hyperparams_3d}

\end{minipage}
\end{table}
\FloatBarrier
 
\paragraph{Synthetic signals (2D, 3D).} We construct two families of 2D and 3D synthetic signals parameterized by an explicit notion of effective bandwidth, following~\citet{kim2025gridsoutperforminrs}. \emph{Bandlimited} signals are generated by sampling random noise and applying radial low-pass filters with exponentially spaced cutoffs in the Fourier domain. This yields signals whose frequency content, and thus approximation difficulty, can be smoothly varied from low to high bandwidth. \emph{Sphere} signals consist of randomly placed filled disks (2D) or balls (3D) whose complexity is governed by the number and scale of the primitives: fewer larger objects produce ``low-bandwidth" signals, while many small objects produce fine-grained ``high-bandwidth" structure. We use a fixed intermediate bandwidth for both of these signals for all synthetic-signal experiments. 

\paragraph{Real signals.} To complement the controlled synthetic setting, we evaluate on representative real-world data spanning natural images and volumetric signals. For direct signal fitting, we use resized high-resolution images from the DIV2K dataset~\cite{agustsson2017div2k} and 3D Stanford Dragon surface volume~\cite{curless1996dragon}. For inverse problems, we consider two settings with qualitatively different forward models. In MRI reconstruction, we use brain volumes from the ATLAS stroke lesion dataset~\cite{liew2021atlas}, which constitutes an underdetermined linear inverse problem. In radiance field reconstruction, we use the Lego scene from the NeRF-Blender synthetic dataset~\cite{mildenhall2020nerf}, where the forward model is the nonlinear volume rendering integral \cite{mildenhall2020nerf, max2002optical} and supervision is provided through a photometric loss on posed training images. Together, these datasets introduce broadband spectral content, sharp geometric boundaries, and nontrivial spatial correlations that are not present in synthetic signals but are critical for assessing the performance of our framework beyond idealized settings.

\begin{table}[ht]
\centering
\footnotesize
\begin{minipage}{\linewidth}

\centering
\footnotesize

\begin{tabular}{|l|l|c|}
\hline
\textbf{Model} & \textbf{Hyperparameter} & \textbf{Value} \\
\hline
\multirow{4}{*}{FFN} 
  & Learning Rate      & 5e-4 \\
  & Embedding Size     & 512 \\
  & Hidden Layers      & 4 \\
  & $\sigma$           & 8 \\
\hline
\multirow{5}{*}{Instant-NGP} 
  & Learning Rate      & 5e-4 \\
  & Hidden Layers      & 2 (3-layer MLP) \\
  & \textit{L}   & 13 \\
  & \textit{F} & 2 \\
  &  Hidden Layer Width & 64 \\
\hline
\multirow{7}{*}{GA-Planes} 
  & Learning Rate      & 5e-4 \\
  & Line Resolution    & 200  \\
  & Plane Resolution   & 16 \\
  & Volume Resolution & 8 \\
  & Volume Feature Dimension & 4 \\
  &  Hidden Layer Width & 128 \\
  & Hidden Layers & 2 \\
\hline
\multirow{2}{*}{Grid} 
  & Learning Rate      & 5e-2 \\
  & Total Variation Weight & 1e-5 \\
\hline
\end{tabular}
\vspace{2mm}
\caption{\textbf{Fixed hyperparameters for radiance fields.} For radiance field reconstruction, we depart from the shallow decoder theoretical setting by using deeper nonlinear MLP decoders for FFN, Instant-NGP, and GA-Planes. We also use total variation regularization for Grid.}
\label{tab:hyperparams_nerf}

\vspace{3mm}

\begin{tabular}{|l|l|c|}
\hline
\textbf{Model} & \textbf{Task} & \textbf{Calibration constant $c$} \\
\hline
\multirow{2}{*}{FFN}
    & NeRF & 3.9 \\
    & All other tasks & 1.4 \\
\hline
\multirow{2}{*}{Instant-NGP}  
    & NeRF & 2.2 \\
    & All other tasks & 0.9 \\
\hline
\multirow{2}{*}{GA-Planes}  
    & NeRF & 4.4 \\
    & All other tasks & 4.2 \\
\hline
\multirow{2}{*}{Grid}  
    & NeRF & 3.1 \\
    & All other tasks & 4.5 \\
\hline
\end{tabular}
 
\vspace{2mm}
\caption{\textbf{Calibrated constants used for predicted error estimates.} For each architecture, the constant $c$ rescales the adjacent-model difference $B(d)$ into the practical predictor $cB(d)$, as described in Section~\ref{subsec:constant-c}. We use a separate constant for radiance fields (``NeRF") because radiance-field reconstruction uses a different rendering pipeline, model-size scale, and view-dependent output structure from the direct fitting and MRI experiments. Within each group of tasks, $c$ is fixed across all datasets, signals, and compression levels.}
\label{tab:c_values}

\end{minipage}
\end{table}
\FloatBarrier

\def\linegap{-0.35mm}
\def\figurewidth{0.993}
\begin{figure}[ht!]
\vskip -3mm
\centering
\includegraphics[width=\figurewidth\linewidth]{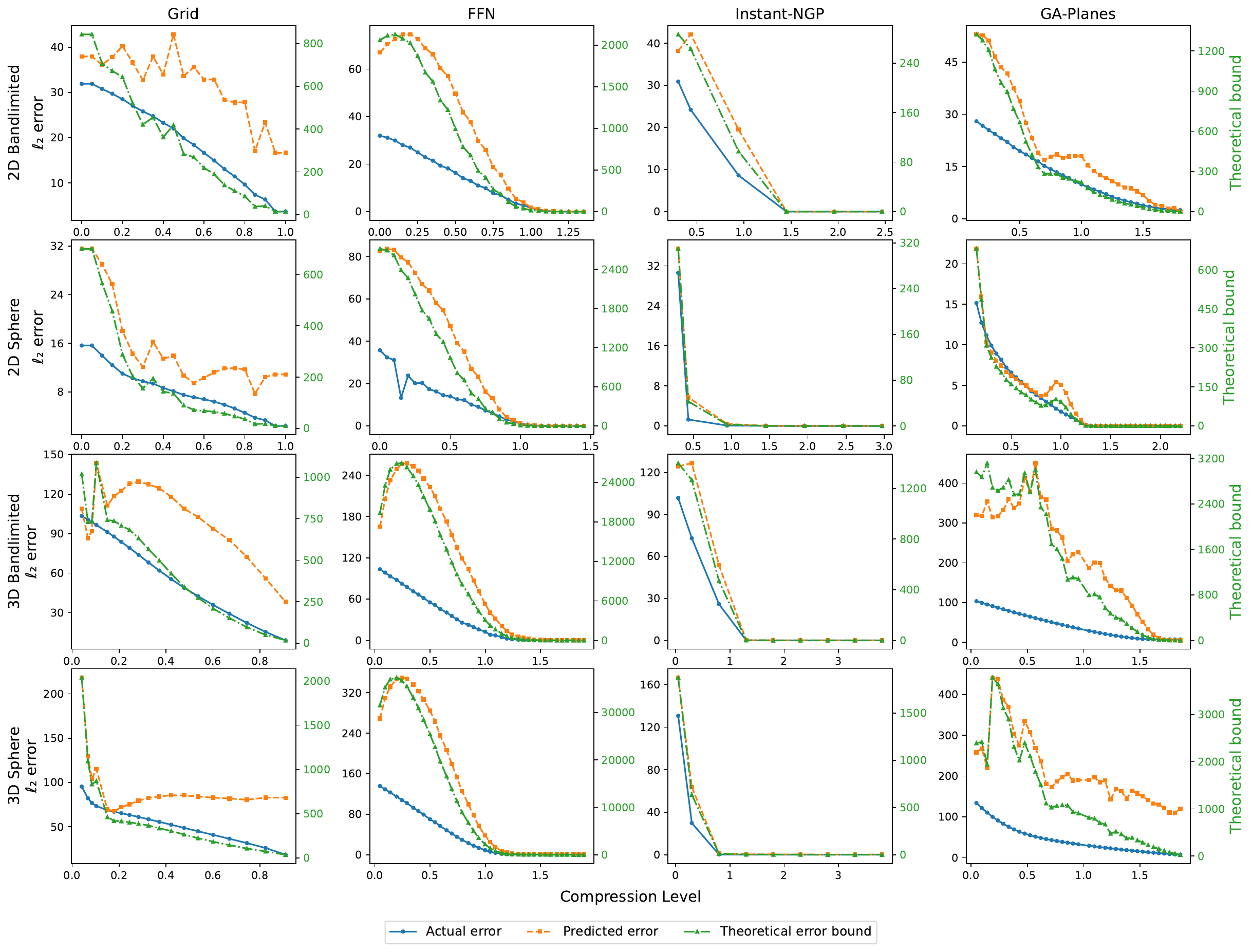}
\vskip 1mm
\begin{minipage}{0.98\textwidth}
    \centering
    \caption{\textbf{Global error curves for direct fitting of synthetic signals.} 
    We fit Grid, FFN, Instant-NGP, and GA-Planes directly to 2D and 3D synthetic signals, including bandlimited and sphere targets. The actual error (blue, left axis) is the $\ell_2$ distance between reconstruction and ground truth, the optimization-gap-adjusted theoretical bound (green, right axis) follows from Theorem~\ref{thm:main_opt_gap} as a function of compression level, and the predicted error (orange, left axis) rescales this bound by the per-architecture constant $c$ from Section~\ref{subsec:constant-c}. Although the theoretical bounds are conservative by orders of magnitude, the calibrated predictions track the actual error across architectures, signal types, and dimensionalities.} 
    \label{fig:task1_error_curves_appendix}
\end{minipage}
\vspace{-6mm}
\end{figure}

\begin{figure}[ht!]
\vskip -3mm
\centering
\includegraphics[width=\figurewidth\linewidth]{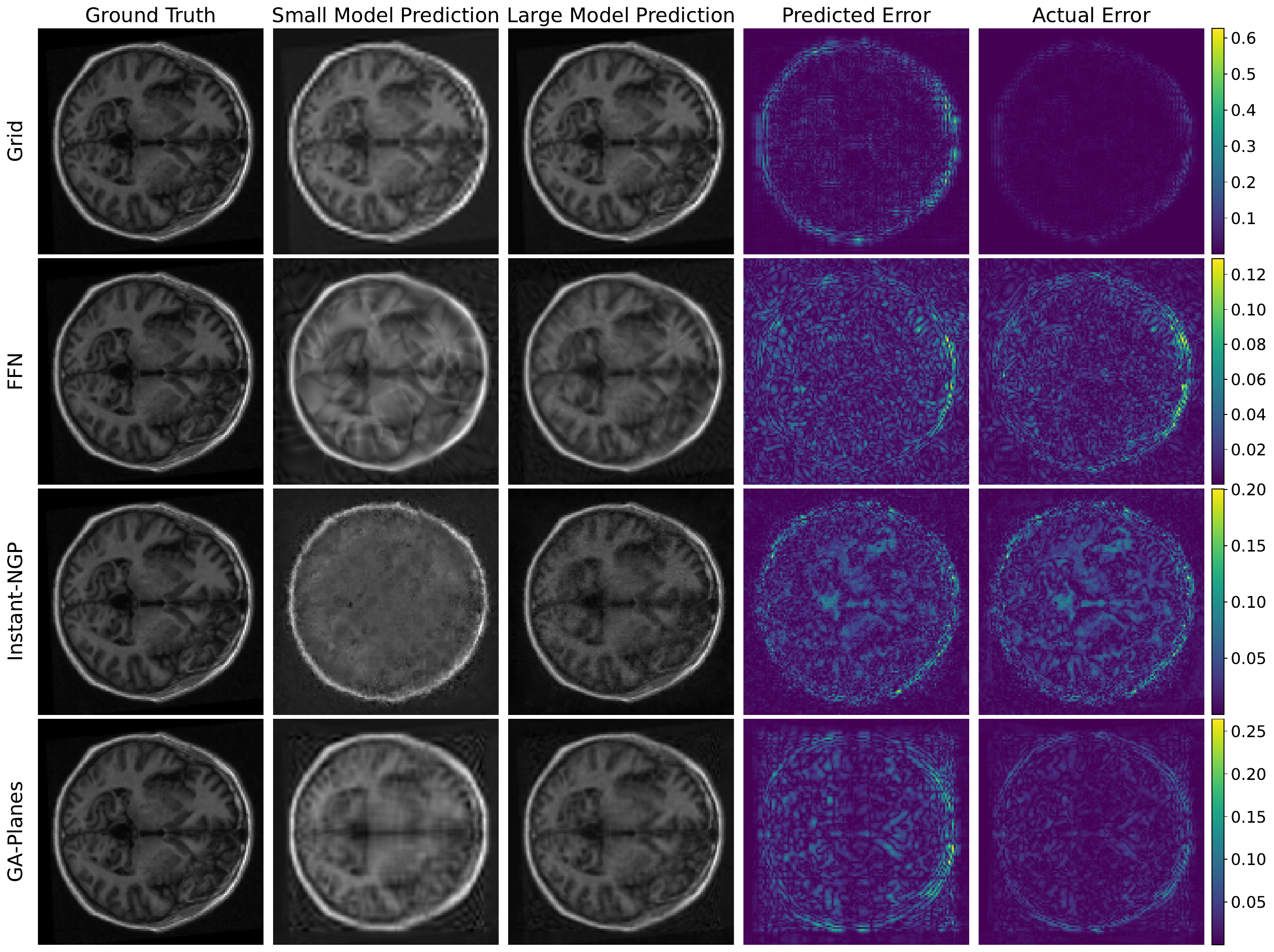}
\vskip 1mm
\begin{minipage}{0.98\textwidth}
    \centering
    \caption{\textbf{Local error heatmaps for 2D MRI slice reconstruction.} 
    Each row corresponds to a different architecture, with small and large models at compression levels $d/n \approx 0.2$ and $0.4$, respectively. Each model maps 2D pixel coordinates to signal values and is supervised indirectly through a $k$-space loss on a single axial slice from the ATLAS dataset~\cite{liew2021atlas}. Across all four representations, the predicted error heatmap $c|\tilde{g}_{\text{small}} - \tilde{g}_{\text{large}}|$ captures the spatial structure of the actual error heatmap $|\tilde{g}_{\text{small}} - x|$, indicating that reconstruction artifacts can be localized without ground-truth access in the inverse-problem setting.}
    \label{fig:2d_mri_heatmaps}
\end{minipage}
\vspace{-6mm}
\end{figure}

\paragraph{Design rationale.} This combination of synthetic and real signals, and direct and indirect training supervision, provides a structured and realistic testbed for validating both the theoretical predictions and empirical behavior of our framework. It ensures that the observed trends, such as the scaling of reconstruction error with model size and the tightness of adjacent-model error predictors, are not artifacts of a specific data regime but instead reflect general properties of differentiable compressive parameterizations.

\begin{figure}[ht!]
\vskip -3mm
\centering
\includegraphics[width=\figurewidth\linewidth]{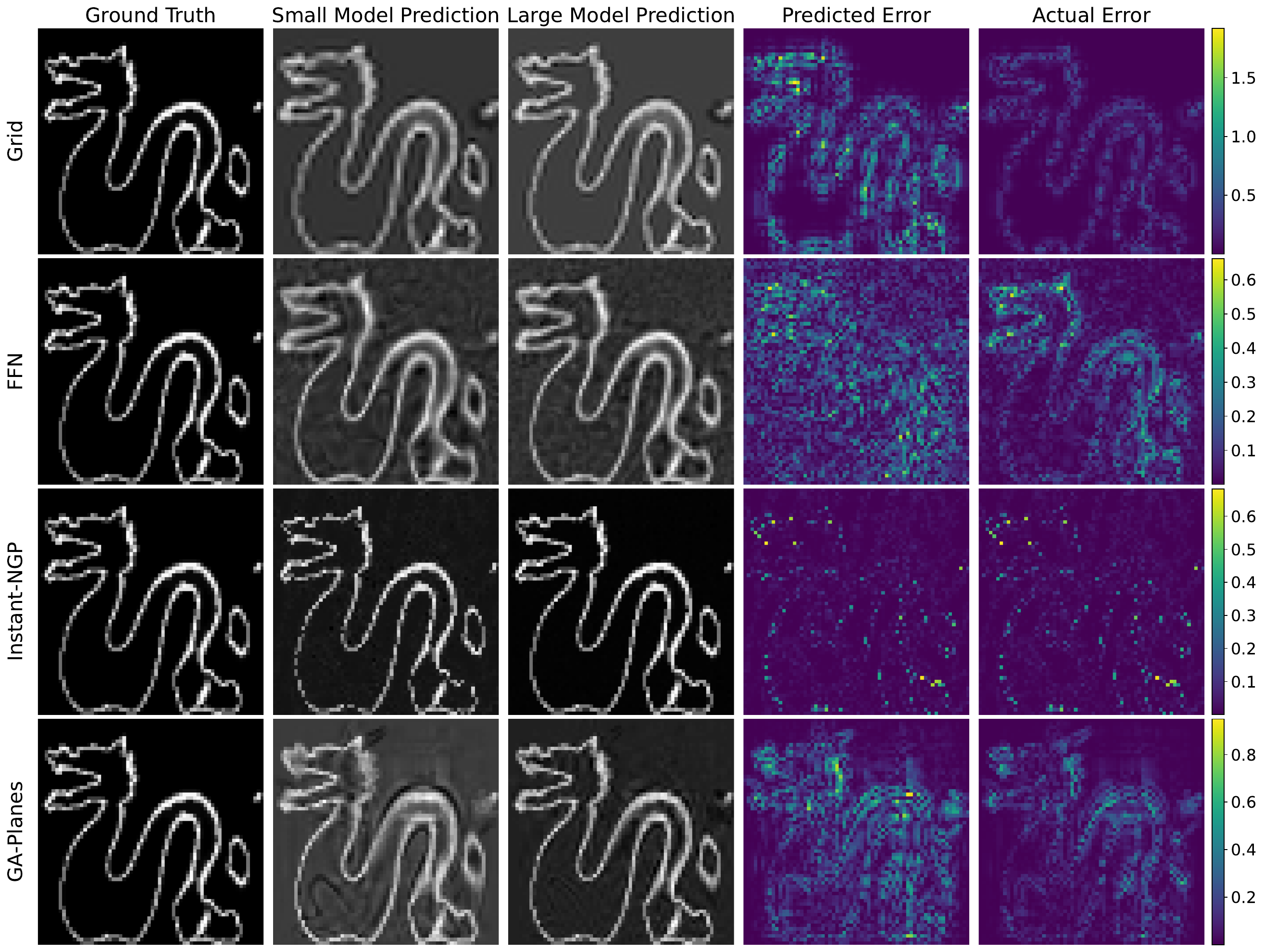}
\vskip 1mm
\begin{minipage}{0.98\textwidth}
    \centering
    \caption{\textbf{Local error heatmaps for direct fitting of 3D Stanford Dragon surface volume~\cite{curless1996dragon}.} Each row corresponds to a different architecture, with small and large models at compression levels $d/n \approx 0.2$ and $0.4$, respectively. Across all four representations, the predicted error heatmap $c|\tilde{g}_{\text{small}} - \tilde{g}_{\text{large}}|$ captures the spatial structure of the actual error heatmap $|\tilde{g}_{\text{small}} - x|$, showing that local compression artifacts can be estimated without ground-truth access for 3D volumes.}
    \label{fig:3D_dragon_heatmaps}
\end{minipage}
\vspace{-6mm}
\end{figure}

\subsection{Implementation Details}
\label{appendix:implementation}
To construct the global error curves, we vary one primary capacity parameter for each architecture while holding all other hyperparameters fixed. Specifically, we vary grid resolution for Grid, MLP width for FFN, hash table size for Instant-NGP, and feature dimension for GA-Planes. The fixed hyperparameters are reported in Table~\ref{tab:hyperparams_2d} for 2D direct fitting and 2D MRI, Table~\ref{tab:hyperparams_3d}  for 3D direct fitting and 3D MRI, and Table~\ref{tab:hyperparams_nerf} for radiance field reconstruction. For 2D and 3D signal fitting and MRI, we use shallow MLP decoders consistent with the theoretical setting. For radiance field reconstruction, we use deeper nonlinear decoders with view-direction inputs, following the standard NeRF setup~\cite{mildenhall2020nerf}, to improve rendering quality in the more challenging view-synthesis setting. All non-radiance-field experiments are implemented in PyTorch while radiance-field experiments are implemented in JAX. 

For direct signal fitting, models are optimized against the observed signal values in image or volume space. For MRI reconstruction, models map spatial coordinates to image intensities and are supervised through a Fourier-domain $k$-space loss. For radiance-field reconstruction, models are trained from posed images using the standard photometric rendering loss. In all experiments, predicted global errors are computed from adjacent trained models using the calibrated architecture-specific constant described in Section~\ref{subsec:constant-c}, and local error heatmaps use the same constant applied pointwise to the difference between small- and large-model reconstructions. For radiance fields, the model outputs used to compute both the global error bounds and local error heatmaps are rendered from a single held-out test view, matching the evaluation procedure for the actual reconstruction error. We note that since the radiance field experiments use deeper, view-dependent decoders that lie outside our shallow-decoder theoretical setting, we calibrate $c$ separately for radiance fields and for all remaining tasks. Table~\ref{tab:c_values} reports the resulting constants, fit once per architecture per setting following Section~\ref{subsec:constant-c} and held fixed across all signals, datasets, and compression levels. 

All models are trained with the Adam optimizer using $\beta_1 = 0.9$ and $\beta_2 = 0.999$. For optimization gap estimation, we require at least 5 random restarts before considering the restart plateau, and we use a tolerance of $10^{-6}$ for the change in the maximum pairwise reconstruction distance across restarts. Fourier feature networks use Gaussian-initialized Fourier embeddings, following prior work~\cite{tancik2020ffn}. For the radiance field experiments, we additionally apply total variation regularization to the Grid baseline to encourage spatial smoothness and improve reconstruction quality, similar to \cite{fridovich2022plenoxels}. All experiments were conducted on an NVIDIA RTX A6000 GPU with 48GB VRAM; memory usage did not impose a limiting constraint in our experiments.

\subsection{Global Error Curves: Extended Results}
\label{appendix:exp:global}
We provide additional global error curves for direct fitting of synthetic signals in Figure~\ref{fig:task1_error_curves_appendix}. For each synthetic signal, we vary the model parameter budget while fixing the complexity parameter (effective bandwidth) at 0.3 for sphere signals and 0.9 for bandlimited signals. Across 2D and 3D bandlimited and sphere targets, the raw optimization-gap-adjusted theoretical bound is conservative, but the calibrated adjacent-model predictor follows the decay of the true reconstruction error. These results support the trend observed in the main text: adjacent-model differences provide a stable proxy for global compression error across architectures, signals, and dimensionalities. 

\subsection{Local Error Heatmaps: Extended Results}
\label{appendix:exp:local}
We present additional local error heatmaps for two settings: 2D MRI reconstruction from $k$-space measurements (Figure~\ref{fig:2d_mri_heatmaps}) and
direct fitting of the 3D Stanford Dragon surface volume (Figure~\ref{fig:3D_dragon_heatmaps}). Across architectures, the predicted heatmaps recover the main spatial structure of the actual error maps, suggesting that adjacent-model differences remain informative for localizing compression artifacts in both direct-fitting and inverse-problem settings.

\end{document}